\numberwithin{equation}{section}
\newcommand{\Vdo}{\underline{V}}
\newcommand{\Vup}{\overline{V}}
\newcommand{\Udo}{\underline{U}}
\newcommand{\Uup}{\overline{U}}
\newcommand{\Me}{\mathcal{M}}
\newcommand{\Se}{\mathcal{S}}
\newcommand{\He}{\mathcal{H}}
\newcommand{\Swe}{\mathcal{S}^{\mathcal{W}}}
\newcommand{\Sup}{\mathcal{S}^{\textrm{up}}}
\newcommand{\Sdo}{\mathcal{S}^{\textrm{do}}}
\newcommand{\se}{{\bf S}}
\newcommand{\seup}{{\bf S}^{\textrm{up}}}
\newcommand{\sedo}{{\bf S}^{\textrm{do}}}
\begin{document}

\title{Trajectory based  models.\\ Evaluation of minmax price bounds.
}

\author{Iv\'an Degano \and Sebasti\'an Ferrando \and  Alfredo Gonzalez}
\authorrunning{I. Degano, S. Ferrando, A. Gonzalez}

\institute{I. Degano, Universidad Nacional de Mar del Plata, CONICET \at
              3350 Funes St., Mar del Plata, Argentina \\
              Tel.: +549-0223-4752426 x 234\\
              \email{ivandegano@mdp.edu.ar}           
           \and
           S. Ferrando, Ryerson University \at
              350 Victoria St., Toronto, Canada\\
              Tel.: 416-979-5000 x 7415\\
              \email{ferrando@ryerson.ca}           
           \and
           A. Gonzalez, Universidad Nacional de Mar del Plata \at
              3350 Funes St., Mar del Plata, Argentina \\
              Tel.: +549-0223-4752426 x 234\\
              \email{algonzal@mdp.edu.ar}           
}

\date{Received: date / Accepted: date}

\maketitle

\begin{abstract}
The paper studies  sub and super-replication price bounds for contingent claims defined on general trajectory based market models. No prior probabilistic or topological assumptions are placed on the trajectory space, trading is assumed to take place at a finite number of occasions but not bounded in number nor necessarily equally spaced in time. For a given option, there exists an interval bounding the set of possible fair prices; such interval exists under more general conditions than the usual no-arbitrage requirement. The paper develops a backward recursive method to evaluate the option bounds; the global minmax optimization, defining the price interval, is reduced to a local minmax optimization via dynamic programming. Trajectory sets are introduced for which existing non-probabilistic  markets models are nested as a particular case. Several examples are presented, the effect of the presence of arbitrage on the price bounds is illustrated.
\end{abstract}

\section{Introduction}

In an incomplete market model, the classical theory shows that, under no arbitrage assumptions, there exists an interval of fair prices. Such an interval is given by the sub and super-replication bounds introduced first in a diffusion setting in \cite{elKaroui} (see \cite{follmer} for a general discrete time formulation). The super-replication price bound of an European contingent claim $Z$ equals the supremum of its expectation over the set of equivalent martingale measures (with an analogous characterization for sub-replication). For a discrete time setting, such dual formulation can be found in \cite{follmer} and \cite{cutland} (the second reference assumes a finite probability space).

It turns out that for a large class of stochastic models the fair price interval degenerates to absolute (i.e. model independent as in (\cite{merton})) no-arbitrage bounds. This is shown in \cite{eberlein} for continuous time and in \cite{carassus1} for discrete time.
These results rely on the assumption of an unbounded range and a non atomic distribution for the increments of the modeling stochastic processes (i.e. the underlying). Reference \cite{carassus2} studies a class of stochastic models for which the fair price interval does not trivialize to absolute bounds.
A popular alternative, in order to reduce the size of the fair price interval, is to allow trading with liquid options in order to better approximate an illiquid derivative. Presumably, this is a way to acknowledge the limitations of the original model proposed for the underlying in order to account for the degrees of freedom influencing the derivative's market price.

There is uncertainty in the modelling of any assumed probabilistic distribution as well as in the specification of the support of the modelling stochastic process.   An example of such uncertainty is the  modelling of crashes (\cite{desmettre}) where, the number, timing and size of a downwards stock change (a {\it crash}) is treated without probabilistic assumptions. An example requiring a set of non equivalent measures is provided by the uncertain volatility model (\cite{avellaneda}). A related development is given by sublinear expectations and their associated stochastic calculus (\cite{peng}).  In order to accommodate such uncertainties, our general setting requires no
prior stochastic assumptions.  Recent and related literature also develops results that weaken, or eliminate entirely, stochastic assumptions; as examples, we mention robust versions of FTAP in \cite{riedel}, \cite{burzoni}, \cite{burzoni2} and \cite{cheridito2}.

The present paper develops computational results of fair price bounds for a large class of non probabilistic models built around a trajectory space. The general framework in discrete time is developed in \cite{deganoI} where detailed justifications and connections with standard stochastic models can be found.  The setting grew as a generalization of a model proposed in \cite{BJN} (see also the book exposition in \cite{rebonato}). A related reference is given by \cite{roorda}.
We show in examples that the resulting fair price intervals are much narrower that the ones given by the absolute bounds and that the task of modeling trajectory sets directly, as opposed to firstly prescribing a probability distribution and then obtaining its support as a by product, is a worthwhile modeling enterprise. Realistic models and preliminary comparison with market data can be found in \cite{fleck}. A basic result in
\cite{deganoI} is the proof of existence of a fair price interval despite the presence of a certain kind of  arbitrage. We show numerically the effect of such arbitrage on the price bounds.

It is natural to inquire about the differences between the fair price intervals for stochastic and trajectory based models.
A main technical difference is that the superhedging inequalities in a stochastic setting are requested to hold a.e., this implies the need to evaluate essential infima and suprema which are, in general, computationally intractable. When non equivalent measures are involved there is the need to use polar sets. Literature providing a general approach to evaluate sub and super-replication bounds in a general discrete time setting is scarce, we are only aware of \cite{carassus1}.
Our setting and results do not require to deal with sets of measure zero and hence complications of that nature are avoided from the outset. We establish general results that allow the evaluation of fair price bounds for a large class of trajectory based models. We restrict our attention to a single tradable asset but expect that the results obtained can be extended to higher dimensions without essential complications. For comparison purposes, we mention the reference \cite{kahale} that also works in a model independent setting, in particular no apriori fixed measure is assumed, and allows for static and dynamic hedging in the super-replication portfolio.

The financial context is of a riskless bond with zero interest rates and one risky asset. We consider a financial discrete market $\Me=\Se^{\mathcal{W}} \times \He$, elements $\{{\bf S}_i= (S_i, W_i, m)\} \in \Swe$ are {\it trajectories}. Coordinates $S_i$ are the values of the tradable underlying while the variable $W_i$, possibly vector valued, represents values of other observable financial variables used to define the trajectory set (the coordinate $m$ is described later). $H = \{H_i\} \in \mathcal{H}$ are functions $H_i: \mathcal{S}^{\mathcal{W}}\rightarrow \mathbb{R}$ representing the  trading strategies. The general class of models included in the formalism allow for certain arbitrage opportunities while, at the same time, providing a fair price interval for options without introducing logical or  practical inconsistencies.

We present effective and rigorous results that allow to evaluate  the super- and sub-replication pricing interval $[\Vdo(S_0,Z,\Me),\Vup(S_0,Z,\Me)]$ given by a minmax optimization in \cite{deganoI} (see Definition \ref{conditionalBounds} in the present paper). The resulting algorithm is a dynamic programming based optimization applicable to general trajectory sets $\Swe$. To efficiently deal with the resulting local minmax optimization, we propose a geometric procedure
consisting in computing the convex hull of a set of future stock values (see Section \ref{sec:convexEnvelope}). This represents the so called {\it convex hull algorithm} (introduced informally in \cite{BJN}) but here made rigurous and extended to a general setting.
In contrast to available methods evaluating the convex hull
(\cite{andrew}, for example) we isolate a relevant sector of the convex hull containing the required solution. Moreover, our approach works for the case of an infinite number of points, its end effect is to reduce the local minmax to a single maximization. This last step is achieved by parametrizing the hedging parameter by a geometric ratio and represents the essence of the convex hull algorithm. The hedging ratio is a discretized version of the delta hedging term appearing in the stochastic setting and gives an optimal hedging. We provide a formal analysis of the procedure in a general setting.

The resulting algorithm allows to evaluate fair price bounds for a realistic class of options and a general class of trajectory sets.  We prove that, for a class of models and  options with convex payoff, the super-replication price is equal to the replication price in a Cox-Ross-Rubinstein model (see \cite{cox}): this result has been already obtained in \cite{roorda} in a non probabilistic fixed time framework and \cite{carassus2} in a probabilistic context. Also, we extend a model from 
\cite{BJN} (see also \cite{rebonato}) by allowing for trajectory dependent quadratic variation. Finally, relevant numerical examples illustrate the viability of the approach and some of the characteristics of the models studied.

The paper is organized as follows, Section
\ref{sec:generalFramework} provides the general framework of the paper and describes notation and relevant results to be used in the remaining of the paper. That section introduces the notion of $0$-neutral and the fair price interval. Section \ref{dynamicMinimaxBounds} establishes, under appropriate
conditions, how to recover the global minmax optimization defining
the bound prices by means of an iterative dynamic programming
procedure. Section \ref{sec:convexEnvelope} describes how the
iterative procedure described in Section \ref{dynamicMinimaxBounds}
can actually be implemented by an efficient, geometric based method which
we call {\it convex hull algorithm}. Section \ref{chartsAndTrajectories} describes some simple models as well as a class of models allowing trajectory dependent values of (sampled) quadratic variation. For this last case, we describe in Section \ref{gridDataStructure} a data structure supporting the implementation of the models. Finally, Section \ref{sec:conclusions} concludes by providing a perspective on the
paper as well as some speculation on possible extensions.
Appendices contain complementary and technical results.

\section{General framework}  \label{sec:generalFramework}

Usually, financial discrete markets fixes a finite partition of the time interval $[0,T]$ where transactions are carried out. The index $i$ refers to time $t_i$ between $0$ and $T$.
For the sake of flexibility and generality, trajectories $\se \in \Swe$,
are of the form $\se = \{(S_i, W_i, m)\}$ where $W_i$ belongs to abstract sets $\Omega_i$ from which we only require to have defined an equality relationship. Such coordinates  are referred to as {\it additional sources of uncertainty} (analogously to the case of an augmented filtration containing the canonical filtration). In financial terms, the quantities $W_i$ are considered to be  observable. This framework allows the investor to rebalance the portfolio as a result of an arbitrary market event, for example, the quadratic variation reached a certain value. So, time  dependent trajectories are mapped to a space which depends on the variable $W$ which, presumably, better jointly constraints the sequence of pairs $(S_i, W_i)$.
As we will price options expiring with a finite time horizon, we need an extra information indicating when a trajectory has reached the final time $T$. We denote this trajectory coordinate by $m \in \mathbb{N}$. The introduction of $m$ is important to the calculation of the fair price interval as the generality of the setting does not necessarily require that the coordinate $W_i$ carries any explicit time information (see Section \ref{chartsAndTrajectories} for several examples). The coordinate $m$ could have been formally incorporated into $W_i$ but, for clarity, we decided not to do so.

We reproduce some needed definitions from \cite{deganoI} which should be consulted for further details.

\begin{definition}[Trajectory Set]  \label{trajectories}
Given the real numbers $s_0$ and $w_0$, a set of (discrete) trajectories $\Swe = \Swe (s_0,w_0)$ is a subset of the
following set
\begin{equation}  \nonumber
\Swe_{\infty}= \Swe_{\infty}(s_0,w_0)= \{{\bf S} = \{{\bf S}_i \equiv (S_i, W_i,m)\}_{i \geq 0}: S_i \in \Sigma_i, W_i \in \Omega_i, m \in \Theta,~ S_0 = s_0, W_0 = w_0\},
\end{equation}
where $\{\Sigma_i\}_{i\ge 1}$ and $\{\Omega\}_{i\ge 1}$ are families of subsets of $\mathbb{R}$ and $\Theta \subset \mathbb{N}$. Elements $\se \in \Swe$ are called trajectories.
\end{definition}

We remark that if ${\bf S}^1 = \{(S_i^1, W_i^1, m^1)\}$ and ${\bf S}^2 = \{(S_i^2, W_i^2, m^2)\}$ are two trajectories, $\se^1_1$ could take place in a different time than $\se^2_1$, although $W^1_1=W^2_1$.

For $\se \in \Swe$  we will use the notation $\Delta_iS \equiv
S_{i+1}-S_i$ for $i\ge 0$ and  define $M:\Swe \rightarrow \mathbb{N}$ to be the projection function over the third coordinate of $\se$, that is $M(\se)=m$. The following conditional spaces will play a key role. Let $k \geq 0$, for $\se  \in \Swe$ such that $M(\se)>k$ set:
\begin{equation} \nonumber
{\Swe}_{(\se,k)}\equiv\{\se' \in \Swe: M(\se')>k \mbox{ and } (S'_i,W'_i)= (S_i,W_i) \; \forall \; 0 \le i \le k \}.
\end{equation}
Notice $\Swe_{(\se,0)} = \Swe$ and that if $\se' \in \Swe_{(\se,k)}$, then $\Swe_{(\se',k)}=\Swe_{(\se,k)}$. Whenever convenient, the tuple $(\se,k)$ will be referred generically as {\it a node}.

A portfolio in our model will be a function over the set of trajectories as in \cite{deganoI}, but we have modified slightly the non anticipative condition to accommodate the variable $m$.

\begin{definition}[Portfolio Set]
A portfolio $H$ is a sequence of (pairs of) functions $H = \{\Phi_i
= (B_i, H_i)\}_{i\geq 0}$ with $B_i, H_i: \Swe \rightarrow
\mathbb{R}$.
\begin{itemize}
\item A portfolio $H$ is said to be \emph{admissible} for $\Swe$ if for each $\se \in \Swe$ there exists a nonnegative integer $N=N_H(\se)$ such that $H_i(\se)=0$ for all $i \ge N_{H}(\se)$ and $N_{H}(\se) \le M(\se)$.
\item A portfolio $H$ is said to be self-financing at $\se \in \Swe$ if for all $i \geq 0$,
\begin{equation} \label{selfFinancing}
H_i(\se) ~S_{i+1} + B_i(\se) = H_{i+1}(\se) S_{i+1} + B_{i+1}(\se).
\end{equation}
\item A portfolio $H$ is  called non-anticipative if for all $\se, \se' \in
\Swe$, satisfying $S'_k = S_k$ and $W'_k=W_k$ for all $0 \leq k \leq i$ with $i < \min\{ N_H(S),N_H(S')\}$, it
then follows that $\Phi_i(\se) = \Phi_i(\se')$.
\end{itemize}
\end{definition}

Given $\se \in \Swe$ and a self-financing portfolio $H$, the portfolio value defined by $V_H(i,\se)=B_i(\se)+H_i(\se)S_i$ is equal to
\[ V_H(i,\se)=V_H(0,S_0) + \sum_{k=0}^{i-1} H_k(\se)\Delta_kS,\]
during the period $[i,i+1)$ for $i=0, \dots, N_H(\se)-1$. Of course, $V_H(0,S_0)=B_0(\se) + H_0(\se)S_0$. Clearly, to specify self-financing portfolios, it is enough to
provide  sequences $H =\{H_i\}$ of non-anticipative functions and an
associated real number $V_0= V_H(0, S_0)$.

As suggested above, the non-anticipative condition is slightly different to Definition 2 in \cite{deganoI}; the new condition is more general and it is useful in the setting of the present work. If ${\bf S}=\{(S_i,W_i, m) \}_{i\ge 0}$ and ${\bf S}'=\{(S'_i,W'_i, m') \}_{i\ge 0}$, we could alter the condition to $i < \min\{N_H({\bf S}),N_H({\bf S}') \}$ to $i < \min\{m,m'\}$. The condition using $N_H$ is more general and it incorporates investors' information. For each strategy the investor chooses a stage to liquidate the portfolio with respect to the trajectory ${\bf S}$ taking into account the information of the market or merely his intuition. This selection could be different for a trajectory ${\bf S}$ which is equal to ${\bf S}'$. As a special case, in Section \ref{dynamicMinimaxBounds}, we will impose $N_H({\se})=m$ for all $H \in \He$.

A trajectory based discrete market (or trajectory market for short)
is defined by $\Me = \Swe \times \He$ where elements $H \in \mathcal{H}$ are admissible, non-anticipative and  self-financing at
each $\se \in \Swe$. The models are discrete in the sense that we index potential portfolio rebalances by integer numbers. Otherwise, stock charts and investment amounts can take values in general subsets of the real numbers and time can flow continuously. The zero portfolio is assumed to belong to $\He$ and we take $N_0=0$.

As indicated, some of the above definitions involve minor
modifications from material in \cite{deganoI} but most
algebraic manipulations in that reference only involve the first coordinate $S_i$ (in the triples $(S_i,W_i,m)$). This remark can be used to show that the results we will rely upon from \cite{deganoI} are valid in the setting of the current paper.

The following notion of discrete bounded market will be needed in several instances later in the paper.

\begin{definition}[$n$-Bounded Market]  \label{nBoundedDefinition}
A market $\Me=\Se^{\mathcal{W}} \times \He$ is called $n$-bounded if there exists a constant $n$ so that:
\begin{equation} \nonumber
~~~ \sup_{{\bf S} \in \Se^{\mathcal{W}}} ~~~M({\bf S}) \leq n.
\end{equation}
\end{definition}
We refer to $\mathcal{M}$ as {\it bounded} when reference to $n$ is immaterial.


We use the following  definition of no-arbitrage market.

\begin{definition}[Arbitrage-Free Market] \label{ArbitrageDefinition}
Given a discrete market $\Me= \Swe \times \He$, we will call $H \in \He$ an arbitrage strategy if:
\begin{itemize}
\item $\forall \se \in \Swe$,  $V_{H}(N_H(\se), \se)\geq V_{H}(0, \se_0)$.
\item $\exists \se^{\ast} \in \Swe$ satisfying $V_{H}(N_H(\se^{\ast}), \se^{\ast})) > V_{H}(0, \se_0)$.
\end{itemize}
We will say  $\Me$ is arbitrage-free if $\He$ contains no arbitrage strategies.
\end{definition}

Let $Z:\Swe \rightarrow \mathbb{R}$ denote a general function, from time to time, we will refer to such function informally as the {\it derivative} or {\it payoff function}. See Appendix \ref{minMaxFunctionsProofs} for general conditions on $Z$ that guarantee finiteness of the quantities introduced below.

\begin{definition}[Conditional Minmax Bounds]
Given a discrete market $\mathcal{M} = \Swe \times \mathcal{H}$, $k \ge 0$ and $\se \in \Swe$ such that $M(\se)>k$. Let $Z$ a function defined on $\Swe$, define
\begin{equation} \label{conditionalBounds}
 \overline{V}_k(\se, Z, \mathcal{M}) \equiv ~\inf_{H \in \mathcal{H}}~\sup_{\se' \in \Swe_{(\se, k)}} [Z(\se') -  \sum_{i=k}^{N_H(\se)-1}H_{i}(\se') \Delta_iS'].
\end{equation}
Also define $\underline{V}_k(\se, Z, \mathcal{M}) = -\overline{V}_k(\se, -Z, \mathcal{M})$. Since $\overline{V}_0(\se, Z, \mathcal{M})$ and $\underline{V}_0(\se, Z, \mathcal{M})$ depend on $\se$ only through $S_0$, we adopt the notation  $\overline{V}(S_0,Z, \mathcal{M})$ and  $\underline{V}(S_0, Z, \mathcal{M})$ respectively. These quantities are called  {\it price bounds}.
\end{definition}

The price bounds can be recast in a more familiar way:
\begin{eqnarray*}
  \overline{V}(S_0,Z, \mathcal{M}) &=& \inf \left\lbrace V_0 : \exists H \in \He , V_0+ \sum_{i=0}^{N_H(\se)-1} H_i(\se)~\Delta_i \se \ge Z(\se),~ \forall \se \in \Swe \right\rbrace \\
  \underline{V}(S_0,Z, \mathcal{M}) &=& \sup \left\lbrace V_0 : \exists H \in \He , V_0 + \sum_{i=0}^{N_H(\se)-1} H_i(\se)~\Delta_i \se \le Z(\se),~ \forall \se \in \Swe \right\rbrace
\end{eqnarray*}

We know from financial stochastic models, that there exists an arbitrage-free price interval for the derivative $Z$ if the market does not contain arbitrage strategies. In our context, the free arbitrage condition is replaced by the notion of a $0$-neutral market that plays a key role.
\begin{definition}[$0$-Neutral Market] \label{def:conditionally0Neutral}
The market is {\it conditionally $0$-neutral} at node $(\se,k)$ if
\begin{equation}  \nonumber
 \overline{V}_k(\se, Z=0, \mathcal{M}) =0.
\end{equation}
For $k=0$, we will just refer to $\mathcal{M}$ as $0$-neutral.
\end{definition}
The  notion of $0$-neutral market, taken from \cite{deganoI}, was originally introduced in \cite{BJN} and was considered equivalent to arbitrage-free in their context. In our general setting, it is only a necessary condition for a discrete market to be arbitrage-free \cite[Corollary 1]{deganoI} while simultaneously allowing for arbitrage opportunities and a well defined theory of option pricing. $0$-neutrality is  key to obtain a well defined fair price interval. Theorem \ref{havingAnIntervalTheorem} is stated for a bounded market and $\mathcal{H}$ assumed closed under addition. This is done to avoid  introducing further notions, the result holds in more generality as can be seen in \cite{deganoI}.

\begin{theorem}[Price Interval] \label{havingAnIntervalTheorem}
Consider a bounded discrete market $\Me= \Swe \times \He$ and a function $Z$ defined on $\Swe$;  fix $\se \in \Swe$ and $k \geq 0$. If $\He$ is closed under addition and $\Swe_{(\se,k)}$ is conditionally $0$-neutral, then
\begin{equation} \nonumber
\underline{V}_k(\se, Z, \Me) \le \Vup_k(\se, Z, \Me).
\end{equation}
In particular $\underline{V}(S_0, Z, \Me) \le \Vup(S_0, Z, \Me).$
\end{theorem}
\begin{proof}
The result follows from the same calculations as in \cite[Theorem 1]{deganoI}.
\qed \end{proof}

Under the assumption that $\Vdo(S_0, Z, \Me) \le \Vup(S_0, Z,\Me)$, we will call $[\Vdo(S_0, Z, \Me), \Vup(S_0,
Z, \Me)]$ the price interval of $Z$ relative to $\Me$. Appendix \ref{minMaxFunctionsProofs},
provides conditions for the boundedness of $\Vup(S_0,Z, \Me)$ and $\Vdo(S_0,Z,\Me)$.

\vspace{.1in}
The notion of attainability is basic in option pricing.

\begin{definition}
  Given a discrete market $\Me=\Swe \times \He$, a function $Z$ is called attainable if there exist $H^Z \in \He$ such that
  \begin{equation} \nonumber 
Z(\se)=   V_{H^Z}(0, s_0)+\sum_{i=0}^{N_{H^Z}(\se)-1} H^Z_i(\se)\Delta_iS, ~~~ \mbox{for all}~~~~ \se \in \Swe.
\end{equation}
\end{definition}

In stochastic frameworks there exists a unique fair price for an attainable option. The following analogue result holds in the present setting.

\begin{corollary} \label{onevalue}
  Consider a discrete market $\Me = \Swe \times \He$ , $\se \in \Se$, $k \ge 0$ and $Z$ a function on $\Swe$ and assume the conditions of Theorem \ref{havingAnIntervalTheorem}. If $Z$ is attainable then $\Vup_k(\se,Z,\Me) =\Vdo_k(\se,Z,\Me)$.
\end{corollary}
\begin{proof}
  The proof is given in \cite[Corollary 6]{deganoI}.
\qed \end{proof}

\subsection{Global, Conditional and Local Concepts}

Given the central role of $0$-neutrality in our framework, it is imperative to find simple to check conditions guaranteeing a market to be $0$-neutral.
Definition \ref{localDefinitions} below introduces two basic concepts towards that goal: a local, and  portfolio independent, analogue on $\Swe$ of the $0$-neutral property of $\Me$ and a strengthening of this notion representing the local analogue of the arbitrage free property.

\begin{definition} [$0$-Neutral \& Arbitrage-Free Nodes] \label{localDefinitions}
Given a trajectory space $\Swe$ and a node $(\se,k)$:
\begin{itemize}
\item $(\se,k)$ is called a \emph{$0$-neutral node} if
\begin{equation} \label{cone}
\sup_{\se' \in \Swe_{(\se, k)}}~~ (S'_{k+1} - S_{k}) \geq 0 \mbox{ and } \inf_{\se' \in \Swe_{(\se, k)}} (S'_{k+1} - S_{k}) \leq 0.
\end{equation}
\item $(\se,k)$ is called an \emph{arbitrage-free node} if
\begin{equation} \label{upDownProperty}
\sup_{\se' \in \Swe_{(\se, k)}} (S'_{k+1} - S_{k}) >0 \mbox{ and } \inf_{\se' \in \Swe_{(\se, k)}} (S'_{k+1} - S_{k}) <   0
\end{equation}
or
\begin{equation} \label{flatNode}
\sup_{\se' \in \Swe_{(\se, k)}} (S'_{k+1} - S_{k}) = \inf_{\se' \in \Swe_{(\se, k)}} (S'_{k+1} - S_{k}) = 0.
\end{equation}
\end{itemize}
$\Swe$ is called \emph{locally $0$-neutral} if (\ref{cone}) holds at each node $(\se, k)$. $\Swe$ is said to be \emph{locally arbitrage-free}  if either (\ref{upDownProperty}) or (\ref{flatNode}) hold at each node  $(\se, k)$. If just (\ref{upDownProperty}) holds at each node, it is said that $\Swe$ satisfies the up-down property. A node that satisfies (\ref{upDownProperty}) will be called an up-down node, and a node satisfying (\ref{flatNode}) will be called a {\it flat node}. A node that is  $0$-neutral but that is not an arbitrage-free node, will be called an {\it arbitrage node}.
\end{definition}

The next Proposition gives local conditions ensuring that a discrete market is conditionally $0$-neutral. As already pointed out, only the first coordinate $S_i$ (in the triples $(S_i, W_i,m)$) appear in most algebraic manipulations; therefore,  the following results from \cite{deganoI} hold in our setting.

\begin{proposition} \label{0-neutral}
Consider a bounded discrete market $\Me=\Swe \times \He$,
\begin{itemize}
  \item If $\Swe$ is locally arbitrage-free, then it is locally $0$-neutral.
  \item If $\Swe$ is locally $0$-neutral, then it is $0$-neutral (as per Definition \ref{def:conditionally0Neutral}).
  \item If $\Swe$ is locally arbitrage-free and $N_H$ is a stopping time(in the sense of Definition \ref{stoppingtime} in Appendix \ref{minMaxFunctionsProofs}) for all $H \in \He$ then $\Me$ is arbitrage free.
\end{itemize}
\end{proposition}
\begin{proof}
  The first item follows immediately from Definition \ref{localDefinitions} above. The next two items are special cases of \cite[Theorem 2]{deganoI} and \cite[Corollary 3]{deganoI}.
\end{proof}

\section{Dynamic Minmax Bounds}  \label{dynamicMinimaxBounds}

Arguably, attempting a direct evaluation of the minmax optimization
required in (\ref{conditionalBounds}) and in related expressions, is a daunting
task. Moreover, the minmax formulation of the problem gives no clues
on how to construct the hedging values $H_i(\se)$, for a given  payoff
$Z$, by means of the unfolding path values $S_0, S_1, S_2, \ldots $

Consider next another pair of numbers namely
$\underline{U}_0(S_0, Z, \Me)$ and $\overline{U}_0(S_0, Z, \Me)$. These numbers are obtained through a dynamic, or iterative, definition each instance involving a local minmax optimization. Using these definitions we provide
conditions  under which the global and the iterated definitions
coincide.

A special case of the iterative construction was introduced
informally in \cite{BJN} (see also \cite{kolokoltsov} and \cite{roorda}) for a specific discrete market model. Here we formalize the validity of the approach in such a way that becomes
available in a more general class of models and at the same time
indicating the differences with the global minmax approach. The
references  \cite{bertsekas} and \cite{bertsekasAndShreves}  provide a dynamic programming version of a global minmax optimization. Our approach differs as we make use of specific
hypothesis present in our  setting.

Markets will be assumed to be bounded and that all portfolios are liquidated on the expiration time $T$; that is, for each $H \in \He$, $N_H(\se)=M(\se)=m$. Further restrictions on $\He$ will be introduced as needed.

The following inductive definition gives the basic dynamic programming formulation to compute $\overline{V}(S_0,Z,\Me)$.

\begin{definition}[Dynamic Bounds] \label{dynamicBounds}
Consider an $n$-bounded, discrete market $\Me$; for a given
function $Z$ defined on $\Swe$, any $\se\in \Swe$, and
$0\le i \le n$ set
\begin{equation}  \label{DynamicBound}
\overline{U}_i(\se, Z,\Me)= \left\lbrace \begin{array}{lcc}
\inf\limits_{H\in\He} \sup\limits_{\se' \in \Swe_{(\se,i)}}
[\overline{U}_{i+1}(\se', Z,\Me) - H_i(\se)
\Delta_iS'] &\mbox{if}& 0\le i< M(\se),\\
Z(\se) &\mbox{if}& i=M(\se),\\
0 &\mbox{if}& i>M(\se).
\end{array}\right.
\end{equation}
Also define $\underline{U}_i(\se, Z, \Me) = -\overline{U}_i(\se,-Z, \Me)$.
\end{definition}

\begin{remark}$\;$
\begin{enumerate}
\item Since $\Udo_0(\se,Z,\Me)$ and $\Uup_0(\se,Z,\Me)$ depend on ${\bf S}$ only through $(S_0,W_0)$, we adopt the notation $\underline{U}_0(S_0, Z, \Me)$ and $\overline{U}_0(S_0, Z, \Me)$, respectively.
\item Note that in Definition \ref{dynamicBounds},  $H_i(\se) = H_i(\se')$ for all $\se' \in \Swe_{(\se,i)}$.
\end{enumerate}
\label{remark}
\end{remark}

The next remark shows that whenever $M$ is a stopping time, in the sense of Definition \ref{stoppingtime} in Appendix \ref{minMaxFunctionsProofs}, the dynamic bounds depend only on the history of the trajectory.

\begin{remark} \label{StoppingTimeEqualityOfDynamicBounds} Assume $M$ is a stopping time and fix $\se\in\Swe$. Let $i \in \mathbb{N}$ and $\se' \in \Swe$ be such that $(S'_j,W'_j)=(S_j.W_j)$ for all $0 \le j \le i$. If $i\ge M(\se)$, then $M(\se)=M(\se')$ and it follows by definition that $\Uup_i(\se,Z,\Me)=\Uup_i(\se',Z,\Me)$. If $i<M(\se)$ since $M$ is stopping time, $\se' \in \Swe$ belongs to $\Swe_{(\se,i)}$. Consequently, $\Swe_{(\se,i)}=\Swe_{(\se',i)}$ and
\begin{eqnarray*} \nonumber
\Uup_i(\se',Z,\Me)&=&\inf_{H\in\He} \sup_{\tilde{\se}\in\Swe_{(\se',i)}}
[\overline{U}_{i+1}(\tilde{\se}, Z,\Me) - H_i(\se') \Delta_i\tilde{S}] = \\
&=& \inf_{H\in\He} \sup_{\tilde{\se}\in\Swe_{(\se,i)}}
[\overline{U}_{i+1}(\tilde{\se}, Z,\Me) - H_i(\se) \Delta_i\tilde{S}]\} = \Uup_i(\se,Z,\Me).
\end{eqnarray*}
Therefore $\Uup_i(\se,Z,\Me)=\Uup_i(\se',Z,\Me)$ for all $\se' \in \Swe$ such that $(S'_j,W'_j)=(S_j.W_j)$ for all $0 \le j \le i$ and $i \ge 0$.
\end{remark}

For any $\se \in \Swe$ and $0\le k < M(\se)$, we let $I^k_{\se}$ to be the set of  portfolio values at node $({\bf S}, k)$, in other words
\begin{equation}\label{PortfoliosRange}
I^k_{\se}\equiv \{H_k(\se): H\in \He\} \subseteq \mathbb{R}.
\end{equation}
Thus, by item $(2)$ in Remark \ref{remark}, we can rewrite the expression in (\ref{DynamicBound}) for $~0\le k < M(\se)$,
\begin{equation}  \label{RangeDynamicBound}
\Uup_k(\se, Z,\Me)= \inf_{u\in I^k_{\se}} \sup_{\se'\in\Swe_{(\se,k)}} [\Uup_{k+1}(\se', Z,\Me) - u ~\Delta_kS'].
\end{equation}

As we mentioned earlier, one of the purpose is to compare the global bound $\Vup(S_0,Z,\Me)$ with the dynamic bound $\Uup_0(S_0,Z,\Me)$. Without any assumptions, we have the following general relationship.

\begin{theorem}  \label{useful}
For any function $Z$ defined on a discrete $n$-bounded market
$\Me = \Swe \times \He$ and $0\le k <n$,  the following
inequality holds:
\begin{equation} \label{firstInequality}
\overline{U}_k(\se, Z, \Me) \leq \overline{V}_k(\se, Z,\Me),
\end{equation}
for all $\se \in \Swe$ such that $M(\se)>k$. Furthermore $\underline{U}_k(\se, Z, \Me) \geq
\underline{V}_k(\se, Z, \Me)$ is also valid.
\end{theorem}
\begin{proof}
We proceed  by backward induction on $k$. For $k=n-1$ and $\se \in \Swe$ with $M(\se)> n-1$, all $\se' \in \Swe_{(\se,k)}$ satisfy $M(\se')=n$. Then, we have from (\ref{conditionalBounds}) and Definition \ref{dynamicBounds} that
\[ \Vup_k(\se, Z, \Me)= \inf_{H \in \He} \sup_{\se' \in \Swe_{(\se,k)}}  [Z(\se')-H_k(\se)~(S'_{k+1}-S_k)] = \Uup_{k}(\se, Z,\Me).\]
Let us now assume that \eqref{firstInequality} holds for $k$ and consider a node $(\se,k-1)$ with $M(\se)>k-1$. Fix $H \in \He$, for all $\se' \in \Swe_{(\se,k-1)}$ with $M(\se)=k$ we have
\begin{equation}\label{eqn:4}
  \Uup_k(\se',Z,\Me)-H_{k-1}(\se)~(S'_{k}-S_{k-1})=Z(\se')- \sum_{i=k-1}^{n-1} H_i(\se)\Delta_{i}S' \le \sup_{\se' \in \Swe_{(\se,k-1)}} [ Z(\se')- \sum_{i=k-1}^{n-1} H_i(\se')\Delta_{i}S]
\end{equation}
since $H_i(\se)=0$ for all $i \ge k$. Consider now $\se' \in \Swe_{(\se,k)}$ with $M(\se')>k$. Then, by inductive hypothesis,
\[ \Uup_{k}(\se', Z,\Me) \le \Vup_k(\se',Z,\mathcal{M}) = \inf_{H \in \He} \sup_{\se'' \in \Swe_{(\se',k)}} [Z(\se'')-\sum_{i=k}^{n-1}H_i(\se'')\Delta_{i}S'' ].\]
Therefore, for $H^* \in \He$,
\begin{eqnarray}
  \nonumber \Uup_k(\se', Z,\Me)-H^*_{k-1}(\se)\Delta_{k-1}S' &\le& -H^*_{k-1}(\se)\Delta_{k-1}S' + \inf_{H \in \He} \sup_{\se'' \in \Swe_{(\se',k)}} [Z(\se'')-\sum_{i=k}^{n-1}H_i(\se'')\Delta_{i}S'' ] \\
  \nonumber &\le& -H^*_{k-1}(\se)\Delta_{k-1}S + \sup_{\se'' \in \Swe_{(\se',k)}} [Z(\se'')-\sum_{i=k}^{n-1}H^*_i(\se'')\Delta_{i}S'' ] \\
  \nonumber &\le& \sup_{\se'' \in \Swe_{(\se',k)}} [Z(\se'')-\sum_{i=k-1}^{n-1}H^*_i(\se'')\Delta_{i}S'' ]\\
  &\le& \sup_{\se' \in \Swe_{(\se,k-1)}} [Z(\se')-\sum_{i=k-1}^{n-1}H^*_i(\se')\Delta_{i}S' ]. \label{eqn:2}
\end{eqnarray}
Finally, from \eqref{eqn:4} and \eqref{eqn:2} it follows that
\[ \sup_{\se' \in \Swe_{(\se,k)}} [\Uup_{k}(\se', Z,\Me)-H^*_{k-1}(\se)\Delta_{k-1}S] \le \sup_{\se' \in \Swe_{(\se,k)}} [Z(\se')-\sum_{i=0}^{n-1}H^*_i(\se')\Delta_{i}S' ] \]
and since $H^* \in \He$ was taken to be arbitrary \eqref{firstInequality} follows.
\qed \end{proof}

The next corollary, being a consequence of Proposition \ref{0-neutral} and Theorem \ref{useful}, represents the dynamic analogue of the $0$-neutral condition,

\begin{corollary}  \label{necessaryDynamicCondition}
Let $\Me = \Swe\times \He$ a discrete $n$-bounded market model and $Z\ge 0$ a function defined on $\Swe$. If $\Swe$ satisfies the local $0$-neutral property, then for any $\se \in \Swe$ and $0\le i \le n$:
\begin{enumerate}
\item  $\overline{U}_i(\se, Z,\Me)\ge 0$.
\item $ \underline{U}_i(\se, Z=0, \Me) =\overline{U}_i(\se, Z=0, \Me) =0$.
\end{enumerate}
\end{corollary}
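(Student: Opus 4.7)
The plan is to establish part (1) by backward induction on the index $i$, exploiting the local $0$-neutral property at each node, and then derive part (2) by sandwiching $\Uup_0$ between the nonnegativity just obtained and the global bound from Proposition \ref{useful}.

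For part (1), I fix $S\in\Se$ and proceed by downward induction on $i$ from $n$ to $0$. The base cases $i > M(S)$ (where $\Uup_i = 0$) and $i = M(S)$ (where $\Uup_{M(S)}(S,Z,\Me)=Z(S)\ge 0$) are immediate from Definition \ref{dynamicBounds} and the hypothesis $Z\ge 0$. For the inductive step, assume $\Uup_{i+1}(S',Z,\Me)\ge 0$ for every $S'\in\Se_{(S,i)}$ and $0\le i < M(S)$. Using the representation (\ref{RangeDynamicBound}),
\[
\Uup_i(S,Z,\Me) \;=\; \inf_{u\in I^i_S}\,\sup_{S'\in\Se_{(S,i)}}\bigl[\Uup_{i+1}(S',Z,\Me)-u\,\Delta_iS'\bigr],
\]
and the inductive hypothesis yields
\[
\sup_{S'\in\Se_{(S,i)}}\bigl[\Uup_{i+1}(S',Z,\Me)-u\,\Delta_iS'\bigr]\;\ge\;\sup_{S'\in\Se_{(S,i)}}[-u\,\Delta_iS'].
\]
The key step is verifying that the right-hand side is $\ge 0$ for every real $u$, and this is precisely where the local $0$-neutrality at node $(S,i)$ is used: when $u\ge 0$ the supremum equals $-u\cdot\inf_{S'}\Delta_iS'\ge 0$ because $\inf_{S'}\Delta_iS'\le 0$, and when $u<0$ it equals $-u\cdot\sup_{S'}\Delta_iS'\ge 0$ because $\sup_{S'}\Delta_iS'\ge 0$. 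Taking the infimum over $u\in I^i_S$ preserves the inequality and yields $\Uup_i(S,Z,\Me)\ge 0$, closing the induction.

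For part (2), apply part (1) with the (nonnegative) function $Z\equiv 0$ to obtain $\Uup_0(S_0,0,\Me)\ge 0$. Proposition \ref{useful} gives the upper estimate $\Uup_0(S_0,0,\Me)\le \Vup(S_0,0,\Me)$, while Theorem \ref{0-neutral}, applied with $k=0$ under the locally $0$-neutral hypothesis, asserts that $\Me$ is conditionally $0$-neutral at $(S,0)$, i.e.\ $\Vup(S_0,0,\Me)=0$. These two bounds pin down $\Uup_0(S_0,0,\Me)=0$. Finally, the definition $\Udo_0(S_0,Z,\Me)=-\Uup_0(S_0,-Z,\Me)$ with $Z=0$ yields $\Udo_0(S_0,0,\Me)=-\Uup_0(S_0,0,\Me)=0$.

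The main obstacle is the induction step in part (1): the sign-case analysis on $u$ cannot be avoided because neither half of the local $0$-neutral property alone suffices to bound $\sup_{S'}[-u\,\Delta_iS']$ from below, and the two terms inside the supremum $\Uup_{i+1}(S',Z,\Me)-u\Delta_iS'$ cannot be separated. Once this observation is in place, everything else is routine assembly of Theorem \ref{0-neutral} and Proposition \ref{useful}.
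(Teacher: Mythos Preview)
Your proof is correct and follows essentially the same route as the paper's: backward induction for part (1) using the local $0$-neutral property to show $\sup_{S'}[-u\,\Delta_iS']\ge 0$ via a sign-case analysis on $u$, then combining Proposition \ref{useful} with Theorem \ref{0-neutral} for part (2). One small presentational point: you should not fix $S$ before starting the induction, since the inductive hypothesis you need at step $i$ concerns $\Uup_{i+1}(S',Z,\Me)$ for all $S'\in\Se_{(S,i)}$, not just for the original $S$; the induction should be on the statement ``$\Uup_{i+1}(\,\cdot\,,Z,\Me)\ge 0$ for every trajectory'' (as the paper does), which is exactly what you end up using anyway.
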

\begin{proof}
For $(1)$ we proceed by induction backwards, since
\[ \overline{U}_n(\se, Z,\Me)=Z(\se)\ge 0 \mbox{ or } \overline{U}_n(\se, Z,\Me)=0, \]
this is so by definition as for any $\se \in \Swe$, $ M(\se) \leq n$. Assume $\overline{U}_{i+1}(\se, Z,\Me)\ge 0$, for some $0\le i\le n-1$ and any $\se\in \Swe$. For fixed  $\se$, if $i\ge M(\se)$ then $\overline{U}_i(\se, Z,\Me)= 0$ or $\overline{U}_i(\se, Z,\Me)=Z(\se)\ge 0$. On the other hand, if $i< M(\se)$, since $\Swe$ satisfies the local $0$-neutral property at $(\se,i)$, for any $H\in \He$ 
\[~\sup_{\se' \in \Swe_{(\se,i)}}[ - H_i(\se)\Delta_{i}S']\ge 0.\]
Thus
\[ \sup_{\se' \in \Swe_{(\se,i)}}[\overline{U}_{i+1}(\se', Z,\Me) - H_i(\se)
\Delta_iS'] \ge \sup_{\se' \in \Swe_{(\se,i)}}[ - H_i(\se)\Delta_{i}S'] \ge 0 \]
and then $\overline{U}_i(\se, Z,\Me) \ge 0$.

For statement $(2)$ assume first that $M(\se)\le i$, then $\Uup_i(\se,Z=0,\Me)=0$. For $M(\se)>i$, the equality follows from Theorem \ref{useful} and item $(1)$ above since
\[0 \le \overline{U}_i(\se, Z=0,\Me)\le \overline{V}_i(\se, Z=0,\Me)=0,\]
where the last equality follows from Proposition \ref{0-neutral}.
\qed \end{proof}

Continuing the analogy between global and dynamic bounds, we obtain an analogue of Theorem \ref{havingAnIntervalTheorem}. First we need the following lemma.

\begin{lemma} \label{subadd}
  Let $\Me=\Swe \times \He$ an $n$-bounded discrete market and assume $\He$ is closed under addition. Set $\se \in \Swe$ and $0 \le i \le n$ . Assume $Z_1$ and $Z_2$ are real valued functions defined on $\Swe$ then,
  \begin{equation}\label{dynint}
    \Uup_i(\se,Z_1+Z_2,\Me) \le \Uup_i(\se,Z_1,\Me)+\Uup_i(\se,Z_2,\Me).
  \end{equation}
\end{lemma}
\begin{proof}
  We proceed by backward induction; consider first $i=n$, if $M(\se)<n$,
  \[ 0=\Uup_i(\se,Z_1+Z_2,\Me) \le \Uup_i(\se,Z_1,\Me)+\Uup_i(\se,Z_2,\Me)=0+0=0.\]
  If $i=M(\se)$
  \[ Z_1(\se)+Z_2(\se)=\Uup_i(\se,Z_1+Z_2,\Me) \le \Uup_i(\se,Z_1,\Me)+\Uup_i(\se,Z_2,\Me)=Z_1(\se)+Z_2(\se).\]
  Assume \eqref{dynint} holds for some $0 \le i+1 \le n$ and any $\se \in \Swe$. If $i \ge M(\se)$ then, as before, we have
  \[ \Uup_i(\se,Z_1+Z_2,\Me) \le \Uup_i(\se,Z_1,\Me)+\Uup_i(\se,Z_2,\Me).\]
  Let $H^1$ and $H^2$ elements of $\He$ so, $H^1+H^2 \in \He$, then if $i<M(\se)$ we have
  \begin{eqnarray*}
    \Uup_i(\se,Z_1+Z_2,\Me) &\le& \sup_{\se' \in \Swe_{(\se,i)}}[\overline{U}_{i+1}(\se', Z_1+Z_2,\Me) - (H^1_i(\se)+H^2_i(\se))\Delta_iS'] \\
    &\le& \sup_{\se' \in \Swe_{(\se,i)}}[\overline{U}_{i+1}(\se', Z_1,\Me) - H^1_i(\se)\Delta_iS' + \overline{U}_{i+1}(\se', Z_2,\Me)- H^2_i(\se)\Delta_iS']\\
    &\le& \sup_{\se' \in \Swe_{(\se,i)}}[\overline{U}_{i+1}(\se', Z_1,\Me) - H^1_i(\se)\Delta_iS'] + \sup_{\se' \in \Swe_{(\se,i)}}[\overline{U}_{i+1}(\se', Z_2,\Me) - H^2_i(\se)\Delta_iS'].
  \end{eqnarray*}
  Therefore, since $H^1$ and $H^2$ are generic elements of $\He$, it follows that
  \[  \Uup_i(\se,Z_1+Z_2,\Me) \le \Uup_i(\se,Z_1,\Me)+\Uup_i(\se,Z_2,\Me).\]
\qed \end{proof}

\begin{theorem} \label{dyninttheo}
  Consider an $n$-bounded discrete market $\Me=\Swe \times \He$, a function $Z$ defined on $\Swe$ and $\se \in \Swe$ fixed. If $\Swe$ satisfies the local $0$-neutral property and $\He$ is closed under addition, then
  \begin{equation}\label{dynint2}
    \Udo_i(\se,Z,\Me) \le \Uup_i(\se,Z,\Me).
  \end{equation}
\end{theorem}
\begin{proof}
  By Lemma \ref{subadd} with $Z_1=Z$ and $Z_2=-Z$ and Corollary \ref{necessaryDynamicCondition} we have
  \[ 0=\Uup_i(\se,0,\Me) \le \Uup_i(\se,Z,\Me)+\Uup_i(\se,-Z,\Me).\]
  Then
  \[ \Udo_i(\se,Z,\Me)=-\Uup_i(\se,-Z,\Me) \le \Uup_i(\se,Z,\Me).\]
\qed \end{proof}

The next Corollary shows that the dynamic and global bounds coincide for an attainable $Z$.

\begin{corollary}
  Consider an $n$-bounded discrete market $\Me=\Swe \times \He$, $0 \le k < n$ fixed and $\se \in \Swe$ with $M(\se)>k$. Let $Z$ a function on $\Swe$ and assume $\Swe$ is locally $0$-neutral and $\He$ is closed under addition. If $Z$ is attainable with portfolio $H^Z \in \He$ and $-H^Z \in \He$, then
  \[ \Vdo_k(\se,Z,\Me)=\Udo_k(\se,Z,\Me)=\Uup_k(\se,Z,\Me)=\Vup_k(\se,Z,\Me).\]
\end{corollary}
\begin{proof}
  From Theorem \ref{useful} and Theorem \ref{dyninttheo}, it follows that
  \[ \Vdo_k(\se,Z,\Me)\le\Udo_k(\se,Z,\Me)\le\Uup_k(\se,Z,\Me)\le\Vup_k(\se,Z,\Me).\]
  Notice that Corollary \ref{onevalue} is applicable when $Z$ is attainable, thus
   \[ \Vdo_k(\se,Z,\Me)=\Udo_k(\se,Z,\Me)=\Uup_k(\se,Z,\Me)=\Vup_k(\se,Z,\Me).\]
\qed \end{proof}

\subsection{\bf Full Set of Portfolios}\label{sec:fullPortfolios}

We are interested in obtaining the reverse of inequality
(\ref{firstInequality}) when $Z$ is not attainable. To achieve that goal, it will be necessary to
introduce some conditions on the set of portfolios, as well as other
conditions, that imply equality in the inequality
(\ref{firstInequality}) and also lead to an efficient method to
compute the dynamic bounds. Results in \cite{bertsekas} suggest that having all possible portfolios may lead to establishing the desired equality; this motivates the definition of Full set of portfolios.

\begin{definition}
  Let $i \in \mathbb{N}$, a function $h:\Swe \rightarrow \mathbb{R}$ is said to be $i$-non-anticipative if for each $\se,\se' \in \Swe$ satisfying $i < \min\{ M(\se),M(\se')\}$ and $(S_j,W_j)=(S'_j,W'_j)$, for all $0 \le j \le i$, it then follows that  $h(\se)=h(\se')$.
\label{naf}
\end{definition}

\begin{definition}[FULL Set of Portfolios]\label{full} Given a discrete market $\Me = \Swe \times \He$,
consider  $k\ge 0$, $\se \in \Swe$, $j \geq k$ and range set,
\[ I^j_{\Swe_{(\se,k)}}\equiv \{H_j(\se'): H\in \He, \; \se' \in \Swe_{(\se,k)} \}. \]
We will say that $\He$ is {\small\emph {FULL}}, if the set of functions with domain $\Swe_{(\se,k)}$ and range $I^j_{\Swe_{(\se,k)}}$, which are $j$-non-anticipative coincides, for any such $k, \se$ and $j$, with the set of functions $H_j|_{\Swe_{(\se,k)}}: \Swe_{(\se,k)} \rightarrow \mathbb{R}$ where $H \in \mathcal{H}$.
\end{definition}


\noindent
Observe that $\Se_{(\se,k)}=\Se_{(\se',k)}$ for $\se' \in \Swe_{(\se,k)}$, which justifies the notation $I^j_{\Swe_{(\se,k)}}$. A particular, but convenient possibility, is the case when $I^j_{\Swe_{(\se,k)}}\equiv \mathbb{R}$, for any $k\ge 0$, and $\se \in \Swe$.

Theorem \ref{fullDynamicProgramming} below
shows that equality in (\ref{firstInequality}) holds
for a bounded market with a {\small\emph {FULL}}
portfolio set. The latter is natural in the sense that any of the values $H_j(\se)$, taken by a portfolio $H\in\He$ at a rebalancing instance $j$ for some $\se \in \Swe$, should also be taken at any $\se' \in \Swe_{(\se,k)}$ if $j\ge k$. This implies that there exists $H'\in\He$ such that $H'_j(\se')=H_j(\se)$. Actually, any set of portfolios $\He$ can be extended to a set $\overline{\He}$ which is {\small\emph {FULL}} as we explain next.

For $j \geq k$ and $h$ a $j$-non-anticipative function, define
\[ H_i(\se')=\left\lbrace \begin{array}{ll}
                         h(\se') & \textrm{if $\se' \in \Swe_{(\se,k)}$ and $i=j$,}\\
                         0 & \textrm{otherwise,}
                         \end{array} \right.
\]
we show next that $H$ is non-anticipative. Let $\se^1, \se^2 \in \Swe$ such that $S^1_l=S^2_l$ and $W^1_l=W^2_l$ for all $0\le l \le i$ with $i \le \min\{M(\se^1),M(\se^2)\}$. Assume first $i=j$.  It is not possible that, for example, if $\se^2 \notin \Swe_{(\se,k)}$, then $M(\se^2) \le k$ and $i \le k <j$ which is a contradiction. Then, $\se^1,\se^2 \in \Swe_{(\se,k)}$. Since $h$ is $j$-non-anticipative, it follows
\[ H_i(\se^1)=h(\se^1)=h(\se^2)=H_i(\se^2).\]
Finally, the case $i \neq j$ is trivial because $H_i(\se^1)=H_i(\se^2)=0$.


\begin{theorem}  \label{fullDynamicProgramming}
For a general $n$-bounded  market $\Me = \Swe\times
\He$, where $\He$ is {\small FULL}, and for a given
function $Z$ defined on $\Swe$, we have
\begin{equation} \label{fullDynamicProgramingEquality}
\overline{V}(S_0, Z, \Me) = \overline{U}_0(S_0, Z,\Me).
\end{equation}
\end{theorem}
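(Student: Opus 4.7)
Proposition \ref{useful} already yields $\Uup_0(S_0, Z, \Me) \le \Vup(S_0, Z, \Me)$, so the entire task is the reverse inequality $\Vup(S_0, Z, \Me) \le \Uup_0(S_0, Z, \Me)$. My plan is to exhibit, for every $\epsilon > 0$, a single portfolio $H^\epsilon \in \He$ whose worst-case hedged shortfall against $Z$ lies within $\epsilon$ of $\Uup_0(S_0, Z, \Me)$; the {\small\emph{FULL}} hypothesis is precisely what allows such a portfolio to be assembled from the level-wise near-optimizers that define the dynamic bounds.

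To construct $H^\epsilon$ I work level by level. At each index $0 \le i < n$ and each node $(S, i)$ with $i < M(S)$, the rewriting (\ref{RangeDynamicBound}) of $\Uup_i(S, Z, \Me)$ as an infimum over $u \in I^i_S$ supplies an approximate minimizer $u^\epsilon_i(S) \in I^i_S$ with
\begin{equation*}
\sup_{S' \in \Se_{(S, i)}} \bigl[\Uup_{i+1}(S', Z, \Me) - u^\epsilon_i(S)\, \Delta_i S'\bigr] \le \Uup_i(S, Z, \Me) + \epsilon/n.
\end{equation*}
Since both $\Se_{(S,i)}$ and $I^i_S$ depend on $S$ only through its first $i$ coordinates, I select $u^\epsilon_i$ constant on each cell $\Se_{(S,i)}$ and put $u^\epsilon_i(S) = 0$ for $i \ge M(S)$ to enforce liquidation. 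The {\small\emph{FULL}} property, via the explicit construction spelled out in the paragraph following Definition \ref{full}, then assembles the entire family $\{u^\epsilon_i\}_{i \ge 0}$ as a genuine portfolio $H^\epsilon \in \He$, with $N_{H^\epsilon}(S) \le M(S) \le n$ coming for free from $n$-boundedness.

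The argument closes by telescoping. Fix $S$ and set $N = M(S)$. Applying the local estimate at the representative $S \in \Se_{(S,i)}$ for $i = 0, 1, \ldots, N-1$ yields the chain
\begin{equation*}
\Uup_{i+1}(S, Z, \Me) - H^\epsilon_i(S)\, \Delta_i S \le \Uup_i(S, Z, \Me) + \epsilon/n,
\end{equation*}
which, together with $\Uup_N(S, Z, \Me) = Z(S)$ and a cumulative error $N\epsilon/n \le \epsilon$, telescopes to
\begin{equation*}
Z(S) - \sum_{i=0}^{N-1} H^\epsilon_i(S)\, \Delta_i S \le \Uup_0(S_0, Z, \Me) + \epsilon.
\end{equation*}
Taking $\sup_S$ on the left, then infimum over $H \in \He$ on the right side of (\ref{conditionalBounds}), and finally letting $\epsilon \to 0$, produces $\Vup(S_0, Z, \Me) \le \Uup_0(S_0, Z, \Me)$.

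The step I expect to demand the most care is the assembly of the family $\{u^\epsilon_i\}$ into a \emph{single} $H^\epsilon \in \He$ that is simultaneously non-anticipative and liquidated at (or before) $M(S)$ for every $S$; this is exactly the role of {\small\emph{FULL}} and is the place where the construction paragraph following Definition \ref{full} must be invoked in detail. A subsidiary subtlety, which I expect to dispatch by a short backward induction on $i$, is verifying that $\Uup_i(S, Z, \Me)$ depends on $S$ only through its first $i$ coordinates, so that the cell-constant choice of $u^\epsilon_i$ is coherent with its appearance inside the supremum over $\Se_{(S,i)}$.
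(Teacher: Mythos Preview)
Your direct construction—pick level-wise near-minimizers, assemble via {\small FULL}, then telescope—is the natural dynamic-programming argument and is different from the paper's proof, which proceeds by induction on $n$ through the auxiliary market $\widehat{\Me}_1$ of Definition~\ref{inductionModel}. Under the extra hypothesis that every $N_H$ (hence $M$) is a stopping time, your argument goes through and is more transparent than the paper's.

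However, the theorem as stated does \emph{not} assume $N_H$ is a stopping time, and without that your proof has a genuine gap at the step ``put $u^\epsilon_i(S)=0$ for $i\ge M(S)$ to enforce liquidation.'' If $M$ fails to be constant on level-$i$ cells, two trajectories $S,S'\in\Se_{(S,i)}$ can satisfy $M(S)\le i<M(S')$; then cell-constancy forces $u^\epsilon_i(S)=u^\epsilon_i(S')$, so you cannot simultaneously set $u^\epsilon_i(S)=0$ and take $u^\epsilon_i(S')$ to be the near-minimizer needed for the telescoping at $S'$. Dropping the liquidation requirement and setting $N_{H^\epsilon}\equiv n$ does not help either: the telescoping only controls $Z(S)-\sum_{i<M(S)}u^\epsilon_i(S)\Delta_iS$, and the tail $\sum_{M(S)\le i<n}u^\epsilon_i(S)\Delta_iS$ is uncontrolled. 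The ``subsidiary subtlety'' you flag—that $\Uup_i(S)$ depends on $S$ only through its first $i$ coordinates—is exactly the content of Remark~\ref{StoppingTimeEqualityOfDynamicBounds}, which explicitly requires the stopping-time hypothesis; it cannot be dispatched by a bare backward induction, because the very case split $i\lessgtr M(S)$ in Definition~\ref{dynamicBounds} depends on coordinates of $S$ beyond the first $i$.

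The paper's inductive proof avoids this issue structurally: at each step it glues together \emph{entire} portfolios $H^{S^1}\in\He$ (one per level-$1$ cell), each carrying its own liquidation time $N_{H^{S^1}}$, rather than assembling a portfolio coordinate-by-coordinate. The resulting $H^\varepsilon$ is then automatically liquidated correctly on every trajectory, with no appeal to $M$ being a stopping time.
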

\begin{proof}
Because of Theorem \ref{useful} we only need to prove the inequality,
\begin{equation}
\Vup_0(S_0, Z,\Me) \le \Uup_0(S_0, Z,\Me).
\label{secondinequality}
\end{equation}
We proceed  by induction on $n$. For $n=1$, for all $\se \in \Swe$ we have $M(\se)=1$. Then, from (\ref{conditionalBounds}) and Definition
\ref{dynamicBounds},
\[ \Vup(S_0, Z, \Me)= \inf_{H \in \He} \sup_{\se \in \Swe}  [Z(\se)-H_0(\se)~(S_{1}-S_0)] = \Uup_0(S_0, Z,\Me).\]
Let us now assume that \eqref{secondinequality} holds for every $n$-bounded discrete market model and consider an $(n+1)$-bounded one, $\Me= \Swe \times \He$. Fix $H \in \He$, and let $\se \in \Swe$ such that $M(\se)>1$. We can then apply Lemma \ref{inductionModelproof} and it follows that $\mathcal{\widehat{M}}_1$ is an $n$-bounded market and $\Uup_1(\se,Z,\Me)=\Uup_0(\hat{S}_0,\hat{Z},\mathcal{\widehat{M}}_1)$ where $\mathcal{\widehat{M}}_1$,  $\hat{Z}$, $\hat{S}_0$, are introduced in Definition \ref{inductionModel} (this definition and lemma are located in Appendix \ref{sec:technicalresults}). Then, by the inductive hypothesis,
\[ \Uup_1(\se, Z,\Me)= \Uup_0(\hat{S}_0,\hat{Z},\mathcal{\widehat{M}}_1) \ge \Vup_0(\hat{S}_0,\hat{Z},\mathcal{\widehat{M}}_1) = \inf_{H' \in \He} \sup_{\se' \in \Swe_{(\se,1)}} [Z(\se')-\sum_{i=1}^{(n+1)-1}H'_i(\se')\Delta_{i}S' ].\]
Thus, we can assume that $\Uup_0(S_0, Z,\Me)>-\infty$, and consequently for $\se \in \Swe$, $\Uup_1(\se,
Z,\Me)>-\infty$.
Fix $\varepsilon > 0$, then there exists $H^{\se} \in \He$, such that
\[\sup_{\se' \in \Swe_{(\se,1)}}[Z(\se')-\sum_{i=1}^{(n+1)-1}H^{\se}_i(\se')\Delta_iS'] < \varepsilon + \Uup_1(\se, Z,\Me).\]
Therefore,
\begin{equation}\label{firstNontrivialInequality}
-H_0(\se)\Delta_0S + \sup_{\se'\in \Swe_{(\se,1)}}[Z(\se')-\sum_{i=1}^{(n+1)-1}H^{\se}_i(\se')\Delta_iS']
< \varepsilon -H_0(\se)\Delta_0S + \Uup_1(\se,Z,\Me).
\end{equation}
Since $\He$ is {\small FULL}, there exist
$H^{\varepsilon}\in \He$ such that, $H^{\varepsilon}_0 =H_0$ and for any $\se^* \in \Swe$
\[ H^{\varepsilon}_i(\se^*) = H^{\se}_i(\se^*) \mbox{ if } \se^* \in \Swe_{(\se,1)} \mbox{ and } i\ge 1,\]
the functions $H^{\varepsilon}_i$ are well defined since the family $\{\Swe_{(\se,1)}\}_{\se \in \Swe}$ is a partition of $\Swe$.
From (\ref{firstNontrivialInequality}) it follows that
\begin{eqnarray}
Z(\se)-\sum_{i=0}^{(n+1)-1}H^{\varepsilon}_i(\se)\Delta_iS &<& \varepsilon + \sup_{\se \in \Swe} [-H_0(\se)\Delta_0S + \Uup_1(\se,Z,\Me)],
\label{eqn:7}
\end{eqnarray}
Assume now $\se \in \Swe$ with $M(\se)=1$, then
\[ \Uup_1(\se,Z,\Me)-H_0(\se)~(S_{1}-S_0)=Z(\se)- \sum_{i=0}^{(n+1)-1} H_i(\se)\Delta_{i+1}S, \]
since $H_i(\se)=0$ for all $i \ge 1$. Therefore
\begin{eqnarray}
 Z(\se)- \sum_{i=0}^{(n+1)-1} H_i(\se)\Delta_{i}S &<& \varepsilon + \sup_{\se \in \Swe} [\Uup_1(\se,Z,\Me)-H_0(\se)\Delta_0S].
 \label{eqn:6}
 \end{eqnarray}
 Finally from \eqref{eqn:7} and \eqref{eqn:6} it follows that
 \[\inf_{\tilde{H} \in \He} \sup_{\se \in \Swe} [Z(\se)-\sum_{i=0}^{(n+1)-1}\tilde{H}_i(\se)\Delta_iS] < \varepsilon + \sup_{\se \in \Swe} [\Uup_1(\se,Z,\Me)-H_0(\se)\Delta_0S], \]
 and then
 \begin{eqnarray*}
\overline{V}_0(S_0, Z,\Me) &<& \varepsilon + \inf_{H \in \He}\sup_{\se \in \Swe} [-H_0(\se)\Delta_0S + \Uup_1(\se,Z,\Me)] \\
&<& \varepsilon + \overline{U}_0(S_0, Z,\Me).
\end{eqnarray*}
Since $\varepsilon$ was taken arbitrarily,  \eqref{secondinequality} follows.
\qed \end{proof}

\subsection{\bf u-Complete Set of Portfolios}\label{sec:u-completePortfolios}

We introduce another condition that allows to derive the equality $\Uup_0(S_0,Z,\Me)=\Vup_0(S_0,Z,\Me)$. Most of the proofs and some required new notation for this section are provided in Appendix \ref{sec:u-completeProofs}.

\begin{definition}[$u$-Complete Market] We will say that an $n$-bounded discrete market $\Me$ is \emph{u-complete} with respect to a real function $Z$ defined on $\Swe$, if for any $\se \in \Swe$, and $1\le k < M(\se)$, there exists $H^*\in \He,\;$ verifying
\begin{equation}   \nonumber 
\Uup_k(\se, Z, \Me)= ~ \sup_{\se'\in\Swe_{(\se,k)}}[\Uup_{k+1}(\se', Z, \Me)-H^*_k(\se)\Delta_kS'].
\end{equation}
\end{definition}


\begin{theorem}  \label{u-completeDynamicProgramming}
If $\Me = \Swe\times \He$ is an $n$-bounded
discrete market \emph{u-complete} with respect to a given function $Z$ defined on $\Swe$, then
\begin{equation} \nonumber
\Vup(S_0, Z, \Me) = \Uup_0(S_0, Z, \Me).
\end{equation}
\end{theorem}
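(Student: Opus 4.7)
By Proposition \ref{useful} we already have $\Uup_0(S_0, Z, \Me) \le \Vup(S_0, Z, \Me)$, so only the reverse inequality requires proof. The case $\Uup_0(S_0, Z, \Me) = +\infty$ being trivial, assume the dynamic upper bound is finite. The plan is to build, by backward induction on $k = n, n-1, \ldots, 0$, a single portfolio $H \in \He$ satisfying, for every $S \in \Se$,
\[
\Uup_k(S, Z, \Me) \;\ge\; \sup_{S' \in \Se_{(S,k)}} \Bigl[Z(S') - \sum_{i=k}^{N_H(S')-1} H_i(S')\,\Delta_i S'\Bigr].
\]
At $k = 0$ this immediately yields $\Uup_0(S_0, Z, \Me) \ge \Vup(S_0, Z, \Me)$ after taking an infimum on the right.

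For the base case $k = n$, $n$-boundedness gives $M(S) \le n$ and liquidation forces $H_i \equiv 0$ for $i \ge n$, so the sum is empty and $\Uup_n(S, Z, \Me) = Z(S)$ by definition. For the inductive step, suppose the claim holds at level $k+1$ for some portfolio $H$ already defined at levels $\ge k+1$. Since $\{\Se_{(S,k)} : S \in \Se\}$ partitions $\Se$, fix one representative per class; $u$-completeness supplies, for each class, a portfolio $H^{S,k} \in \He$ with
\[
\Uup_k(S, Z, \Me) \;=\; \sup_{S' \in \Se_{(S,k)}}\bigl[\Uup_{k+1}(S', Z, \Me) - H^{S,k}_k(S)\,\Delta_k S'\bigr].
\]
Define $H_k$ on each class $\Se_{(S,k)}$ to take the constant value $H^{S,k}_k(S)$ (well-posed by non-anticipativity and the partition structure). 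Substituting the inductive bound for $\Uup_{k+1}(S',\cdot)$ inside the supremum and using the elementary identity
\[
\sup_{S' \in \Se_{(S,k)}} \sup_{S'' \in \Se_{(S',k+1)}} f(S'') \;=\; \sup_{S'' \in \Se_{(S,k)}} f(S''),
\]
together with the observation that $H^{S,k}_k(S)\,\Delta_k S' = H_k(S'')\,\Delta_k S''$ whenever $S'' \in \Se_{(S',k+1)}$, collapses the two-stage supremum into one and delivers the claim at level $k$.

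The main obstacle is ensuring that the portfolio $H$, assembled class by class and level by level, actually lies in $\He$. Non-anticipativity is inherited from each $H^{S,k}$ and the partition structure; self-financing is automatic via the bank-account bookkeeping; the liquidation requirement is secured by $n$-boundedness, using the convention $H_k \equiv 0$ for $k \ge n$ so that $N_H \le n$ uniformly. A subtler point is that $u$-completeness implicitly presumes $\He$ to be closed under this gluing operation, a role analogous to that played by the \emph{FULL} condition in Theorem \ref{fullDynamicProgramming}; once this closure is granted (or absorbed into the hypothesis) the backward induction proceeds without further difficulty.
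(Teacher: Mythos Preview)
Your approach differs in organization from the paper's. You attempt a direct backward induction on the level $k$, constructing an explicit superhedging portfolio by gluing together, at each level and over each conditional class $\Se_{(S,k)}$, the local optimizers supplied by $u$-completeness. The paper instead proceeds by induction on the bound $n$: from an $(n{+}1)$-bounded $u$-complete market $\Me$ it forms an auxiliary $n$-bounded market $\widetilde{\Me}$ (same trajectory set, portfolios truncated at level $n$) with modified payoff $\tilde Z(S)=\Uup_n(S,Z,\Me)$ when $M(S)=n{+}1$ and $\tilde Z(S)=Z(S)$ otherwise. Lemma~\ref{u-completeInductionLemma} shows that $\widetilde{\Me}$ is again $u$-complete for $\tilde Z$ and that $\Uup_0(S_0,Z,\Me)=\Uup_0(S_0,\tilde Z,\widetilde{\Me})$; the inductive hypothesis then gives $\Uup_0=\Vup$ for $\widetilde{\Me}$, and $u$-completeness at the single level $n$ is invoked once to pass from $\Vup(S_0,\tilde Z,\widetilde{\Me})$ down to $\Vup(S_0,Z,\Me)$.

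The gap you flag in your last paragraph is real and cannot be waved away. The definition of $u$-completeness says only that for each node $(S,k)$ there exists \emph{some} $H^*\in\He$ achieving the local infimum; it nowhere asserts that the family $\{H^{S,k}\}$ can be patched, across conditional classes and across all levels $k$, into a single element of $\He$. Declaring this closure to be ``implicitly presumed'' or ``absorbed into the hypothesis'' amounts to reintroducing the {\small FULL} condition of Definition~\ref{full}, which $u$-completeness is specifically meant to replace (compare Theorem~\ref{usefulThForUcomplete}, where a single $H^*$ is \emph{constructed} and then \emph{added} to $\He$). Your argument therefore does not prove the theorem as stated. The paper's route through the truncated market $\widetilde{\Me}$ is precisely the machinery that avoids having to exhibit a globally admissible superhedge: it reduces the comparison of $\Uup_0$ and $\Vup$ to a comparison of two global minmax values via a pointwise bound $\tilde Z(S)\ge Z(S)-H^*_n(S)\Delta_nS$, invoking the $u$-complete $H^*$ only at one level per inductive step rather than stitching all levels together into a new portfolio.
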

\begin{proof} As in the proof
of Theorem \ref{fullDynamicProgramming} the required equality for
$n=1$ is clear, we complete the proof by induction on $n$. Assume $\Me = \Swe \times \He$ is an  $(n+1)$-bounded discrete market which is u-complete, then by Lemma
\ref{u-completeInductionLemma}, item  $2$, in Appendix \ref{sec:u-completeProofs},
$\mathcal{\widetilde{M}}$ is $n$-bounded and u-complete.
Thus, resorting now to item $1$ of  Lemma \ref{u-completeInductionLemma} and the inductive
hypothesis,
\[\overline{U}_0(S_0, Z,\Me) = \overline{U}_0(S_0, \widetilde{Z}, \mathcal{\widetilde{M}}) = \overline{V}(S_0, \widetilde{Z},\mathcal{\widetilde{M}}).\]
By u-completeness, for any
$\se \in \Swe$ there exists $H^* \in \He$ such that
\[ \overline{U}_{n}(\se, Z, \Me) = \sup_{\se' \in \Swe_{(\se,n)}}\{~\overline{U}_{n+1}(\se', Z, \Me)- H^*_n(\se)\Delta_iS'\}.\]
If $M(\se)=n+1$,
\[
\widetilde{Z}(\widetilde{\se}) = \overline{U}_{n}(\se, Z, \Me)= \sup_{\se' \in
\Swe_{(\se,n)}}\{~Z(\se')- H^*_n(\se)\Delta_iS'\}\ge Z(\se)-
H^*_n(\se)\Delta_iS,\]
and if $M(\se)\le n$, $\widetilde{Z}(\widetilde{\se})=Z(\se)-H^*_n(\se)\Delta_iS$, since $H^*_n(\se)=0$. In any
case
\begin{eqnarray*}
\overline{V}(S_0, \widetilde{Z}, \mathcal{\widetilde{M}}) & = & \inf_{H \in \He}~~\sup_{\widetilde{\se} \in \widetilde{\Swe}} ~~[\widetilde{Z}(\widetilde{\se}) - \sum_{i=0}^{n-1}H_i(\widetilde{\se}) \Delta_i\widetilde{S}] \ge \\
& \ge & \inf_{H \in \He}~~\sup_{\se \in \Swe}~~[~Z(\se)- H^*_n(\se)\Delta_nS - \sum_{i=0}^{n-1}H_i(\se) \Delta_iS] \ge \\
& \ge & \inf_{H \in \He}~~\sup_{\se \in \Swe}~~[Z(\se)-
\sum_{i=0}^{n}H_i(\se) \Delta_iS] = \overline{V}(S_0, Z,\Me).
\end{eqnarray*}
The reverse inequality follows from Proposition \ref{useful}.
\qed \end{proof}

Considered together, Propositions \ref{u-completionOfMarkets}  and \ref{usefulThForUcomplete} below provide practical and useful hypothesis for an application of Theorem \ref{u-completeDynamicProgramming} above.

\begin{proposition} \label{u-completionOfMarkets}
Consider an $n$-bounded discrete market $\Me = \Swe \times \He$ and a node $(\se, k)$ with
 $0 \le k < M(\se)$. Assume one of the hypothesis below hold:
 \begin{enumerate}
   \item $I^k_{\se}$ is a compact subset of $\mathbb{R}$,
   \item $\Swe$ satisfies the up-down property (as per Definition \ref{localDefinitions}) at node $(\se,k)$ and  $I^k_{\se}=\mathbb{R}$.
 \end{enumerate}
Then, there exists $u^*\in I^k_{\se}$, verifying that
\begin{equation}\label{u-completionEquation}
\inf_{u \in I^k_{\se}}~~\sup_{\se' \in \Swe_{(\se,k)}}~~[\Uup_{k+1}(\se', Z, \Me) - u~ \Delta_kS'] =
\sup_{\se'\in \Swe_{(\se,k)}} ~~[\Uup_{k+1}(\se', Z, \Me) - u^* \Delta_kS'].
\end{equation}
Moreover, in the case $I^k_{\se}=\mathbb{R}$, there exists $R>0$
such that $|u^*| \le R$.
\end{proposition}
\begin{proof}
Define $G:\mathbb{R} \rightarrow \mathbb{R}$, by
\[G(u)= \sup_{\se' \in \Swe_{(\se,k)}} [\overline{U}_{k+1}(\se', Z, \Me) - u \Delta_kS'],\]
assuming that $\overline{U}_{k+1}(\se', Z, \Me)<\infty$ for all $\se' \in \Swe_{(\se,k)}$. Assume first that hypothesis $1$ above holds, since for any $\se' \in \Swe_{(\se,k)}$, the functions given by
$G_{\se'}(u)=\overline{U}_{k+1}(\se', Z, \Me) - u\Delta_kS'$ are affine, then its supremum $G$ is lower semicontinuous and convex. If $I^k_{\se}$ is compact, by lower semicontinuity, there exists $u^*\in I^k_{\se}$ verifying $\displaystyle G(u^*)= \inf_{u \in I^k_{\se}} G(u)$. The proof for the case when hypothesis $2$ holds is provided in Appendix \ref{sec:u-completeProofs}.
\qed \end{proof}
Notice that for the case when $M$ is a stopping time, $S \in \Se_{(S^*,k)}$ and $k< M(S^*)$ then, the left side of (\ref{u-completionEquation}) is $\Uup_k(S,Z,\Me)$.

\begin{proposition} \label{usefulThForUcomplete}
Assume $\Me =\Swe\times \He$ is an $n$-bounded discrete market and $M$ is a stopping time. Furthermore, assume that for any $\se \in \Swe$ and $0\le k< M(\se)$,
the sets $I^k_{\se}$ and $\Swe$ verify the hypothesis  of Proposition \ref{u-completionOfMarkets}. Define
\[H^*_k:\Swe \rightarrow \bigcup_{\se \in \Swe} I^k_{\se} \mbox{ by } H^*_k(\se')\equiv u^* \mbox{ for any } \se' \in \Swe_{(\se,k)},
\]
where $u^*$ is given by Proposition \ref{u-completionOfMarkets} and $k$ is such that (\ref{u-completionEquation}) holds. Also define
\begin{equation}\label{u-completionPortfolio}
H^* = (H^*_i)_{i\ge 0} \mbox{ where } H^*_i=0 \mbox{ for } i\ge M(\se), \;\; N_{H^*}(\se)=M(\se), \mbox{ and } V_{H^*}(0,s_0)=H^*_0(\se)s_0.
\end{equation}
Then, with $\He^* = \He \cup \{H^*\}$,
$\Me^*= \Swe \times \He^*$ is a \emph{u-complete} discrete market.
\end{proposition}
\begin{proof}
See Appendix \ref{sec:u-completeProofs}.
\qed \end{proof}

\section{Convex Envelope for Dynamic Minimax Bounds} \label{sec:convexEnvelope}

This section presents a rigorous method to calculate the dynamic bounds
$\Uup_i(\se,Z,\Me)$ introduced in the previous section.
In what follows, we will assume that the dynamic bounds are finite,
this, for example, follows by an application of Theorem
\ref{useful} or  under the
assumptions of Proposition \ref{finitenessCondition} in Appendix \ref{minMaxFunctionsProofs}.

We will consider an $n$-bounded discrete market $\Me =
\Swe \times \He$ (as per Definition
\ref{nBoundedDefinition}). For $\se \in \Swe$, and $0<i<M(\se)$ we are
going to give a geometric procedure, originally introduced in
\cite{BJN} for a specific example, in order to compute the dynamic bounds. For an arbitrary, but momentarily fixed, $\se' \in \Swe_{(\se,i)}$, set
\[ \ell(x)= \Uup_{i+1}(\se',Z,\Me)-u_i(S'_{i+1}-x),\]
i.e. the line in the plane, through the point
$(S'_{i+1},\Uup_{i+1}(\se',Z,\Me))$ with slope $u_i$. Thus,
\[ \Uup_{i+1}(\se',Z,\Me)-u_i(S'_{i+1}-S_{i}) \]
is the intersection of $\ell$ with the vertical straight line
$x=S_{i}$. Therefore, for each fixed $u_i \in I^{i}_{\se}$, with
some abuse of language
\[ \sup_{\se' \in \Swe_{(\se,i)}}~ \left\lbrace \Uup_{i+1}(\se',Z,\Me)- u_i(S'_{i+1}-S_i) \right\rbrace\]
is the largest of the ordinates of the points of intersection
between the straight lines $\ell$ and $x=S_i$. Then
$\Uup_i(\se,Z,\Me)$ becomes the lowest value of these largest intersections.

To complete the geometric procedure, assume for $\se \in \Swe$ and $0\le i < M(\se)$ that,
\begin{equation}\label{eqn:sets}
 \Sdo_{(\se,i)}= \left\lbrace \se' \in \Swe_{(\se,i)}: S'_{i+1} \le S_i \right\rbrace \neq \emptyset \mbox{ and } \Sup_{(\se,i)}= \left\lbrace \se' \in \Swe_{(\se,i)}: S'_{i+1} > S_i \right\rbrace \neq \emptyset.
\end{equation}
These sets are nonempty if, for example, the node $(\se,i)$ is $0$-neutral and there exist a trajectory $\se' \in \Swe_{(\se,i)}$ such that $S'_{i+1}=S_i$ or $(\se,i)$ is an up-down node. For $\se^{\textrm{up}} \in \Sup_{(\se,i)}$ and $\sedo \in
\Sdo_{(\se,i)}$ denote by $u_{(\seup,\sedo)}$ the slope of the straight line in the plane through the points
$(S_{i+1}^{\textrm{up}},\Uup_{i+1}(\seup,Z,\Me))$ and
$(S_{i+1}^{\textrm{do}},\Uup_{i+1}(\sedo,Z,\Me))$:
\[ u_{(\seup,\sedo)}= \frac{\Uup_{i+1}(\seup,Z,\Me)-\Uup_{i+1}(\sedo,Z,\Me)}{S_{i+1}^{\textrm{up}}-S^{\textrm{do}}_{i+1}}.\]
Theorem \ref{ConvexHullThm} below will show that
\begin{equation}\label{CHdynamicBound}
L_i(\se,Z,\Me) \equiv \sup_{\substack {\seup \in
\Sup_{(\se,i)} \\ \sedo \in \Sdo_{(\se,i)}}}
[ \Uup_{i+1}(\seup,Z,\Me)-u_{(\seup,\sedo)}\Delta_iS^{\textrm{up}}]= \Uup_i(\se,Z,\Me),
\end{equation}
that is, $\Uup_i(\se,Z,\Me)$, is the largest intersection of the referred lines with the vertical line $x=S_i$.

\begin{remark} \label{obs:2}
\begin{enumerate}
  \item For any $\seup \in \Sup_{(\se,i)}$ and $\sedo \in \Sdo_{(\se,i)}$
\[\Uup_{i+1}(\seup,Z,\Me) - u_{(\seup,\sedo)}\Delta_iS^{\textrm{up}} = \Uup_{i+1}(\sedo,Z,\Me) - u_{(\seup,\sedo)}\Delta_iS^{\textrm{do}}.\]
  \item The sets defined on \eqref{eqn:sets} can also be defined in an alternative way interchanging the strict inequality, namely,
  \[  \Sdo_{(\se,i)}= \left\lbrace \se' \in \Swe_{(\se,i)}: S'_{i+1} < S_i \right\rbrace, ~~\mbox{and}
~~\Sup_{(\se,i)}= \left\lbrace \se' \in \Swe_{(\se,i)}: S'_{i+1} \ge S_i \right\rbrace. \]
\end{enumerate}
\end{remark}

\begin{proposition} \label{prop:2}
Let $\Me = \Swe \times \He$ be an $n$-bounded discrete market. Then, for all $\se \in \Swe$ and $i \in \mathbb{N}$ such that the node $(\se,i)$ is a $0$-neutral node,
\begin{equation}\nonumber
L_i(\se,Z,\Me) \le \Uup_i(\se,Z,\Me). 
\end{equation}
\end{proposition}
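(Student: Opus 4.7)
The plan is to fix arbitrary $S^+ \in \Se^+_{(S,i)}$ and $S^- \in \Se^-_{(S,i)}$, define
\[
Q(S^+,S^-) \equiv \Uup_{i+1}(S^+,Z,\Me) - u_{(S^+,S^-)}(S^+_{i+1}-S_i),
\]
and show that $Q(S^+,S^-) \le \Uup_i(S,Z,\Me)$. Taking the supremum over such pairs then yields $L_i(S,Z,\Me) \le \Uup_i(S,Z,\Me)$. By Remark \ref{obs:2}, $Q(S^+,S^-)$ also equals $\Uup_{i+1}(S^-,Z,\Me) - u_{(S^+,S^-)}(S^-_{i+1}-S_i)$; this symmetry is what drives the argument.

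Fix an arbitrary $u \in I^i_S$. I will verify that
\[
\max\bigl\{\Uup_{i+1}(S^+,Z,\Me) - u(S^+_{i+1}-S_i),\;\Uup_{i+1}(S^-,Z,\Me) - u(S^-_{i+1}-S_i)\bigr\} \;\ge\; Q(S^+,S^-),
\]
by a case analysis on the sign of $u - u_{(S^+,S^-)}$. If $u = u_{(S^+,S^-)}$, the inequality is immediate (with equality). If $u > u_{(S^+,S^-)}$, then, since $S^-_{i+1} - S_i \le 0$, the product $(u-u_{(S^+,S^-)})(S^-_{i+1}-S_i) \le 0$, and subtracting this non-positive quantity from $\Uup_{i+1}(S^-,Z,\Me) - u_{(S^+,S^-)}(S^-_{i+1}-S_i) = Q(S^+,S^-)$ shows that the $S^-$ entry on the left dominates $Q(S^+,S^-)$. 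If $u < u_{(S^+,S^-)}$, the symmetric argument using $S^+_{i+1}-S_i > 0$ shows the $S^+$ entry dominates $Q(S^+,S^-)$.

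Since $S^+, S^- \in \Se_{(S,i)}$, the maximum above is bounded by $\sup_{S' \in \Se_{(S,i)}}[\Uup_{i+1}(S',Z,\Me) - u(S'_{i+1}-S_i)]$. The choice of $u \in I^i_S$ was arbitrary, so taking the infimum over $u$ gives $Q(S^+,S^-) \le \Uup_i(S,Z,\Me)$ by (\ref{RangeDynamicBound}). Taking the supremum over $(S^+,S^-) \in \Se^+_{(S,i)} \times \Se^-_{(S,i)}$ concludes the proof.

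The only obstacle is bookkeeping in the sign analysis, but the underlying picture is transparent: the chord through the two test points $(S^\pm_{i+1}, \Uup_{i+1}(S^\pm,Z,\Me))$ lies pointwise below the pointwise maximum, over $S'$, of the affine functions $x \mapsto \Uup_{i+1}(S',Z,\Me) - u(S'_{i+1}-x)$ evaluated at $x = S_i$, whatever the slope $u$ is; this is exactly the geometric content of the case analysis.
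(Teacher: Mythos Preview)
Your proof is correct and follows essentially the same approach as the paper: both arguments reduce to the case analysis on whether the candidate slope $u$ lies above or below $u_{(S^+,S^-)}$, using Remark~\ref{obs:2} to switch between the $S^+$ and $S^-$ representations. Your organization is slightly cleaner, since by bounding $Q(S^+,S^-)$ for every pair and then taking the supremum you avoid the paper's $\delta$-approximation and its separate treatment of the case $L_i(S,Z,\Me)=\infty$.
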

\begin{proof}
We consider first the case $L_i(\se,Z,\Me) < \infty$. Let $\delta > 0 $, then there is $\widetilde{\se}^{\textrm{up}} \in \Sup_{(\se,i)}$ and $\widetilde{\se}^{\textrm{do}} \in \Sdo_{(\se,i)}$ such that
\[ L_i(\se,Z,\Me) \le \Uup_{i+1}(\widetilde{\se}^{\textrm{up}},Z,\Me)- u_{(\widetilde{\se}^{\textrm{up}},\widetilde{\se}^{\textrm{do}})}\Delta_i\widetilde{S}^{\textrm{up}} + \delta.\]
For $u \in I^{i}_{\se}$ such that $u \le u_{(\widetilde{\se}^{\textrm{up}},\widetilde{\se}^{\textrm{do}})}$,
\begin{equation} \nonumber
 L_i(\se,Z,\Me) \le \Uup_{i+1}(\widetilde{\se}^{\textrm{up}},Z,\Me)- u\Delta_i\widetilde{S}^{\textrm{up}}+ \delta
 \le \sup_{\se'\in\Swe_{(\se,i)}} [ \Uup_{i+1}(\se',Z,\Me)- u\Delta_iS' ] + \delta.
\end{equation}
On the other hand, if $u > u_{(\widetilde{\se}^{\textrm{up}},\widetilde{\se}^{\textrm{do}})}$,
observing that by Remark \ref{obs:2},
\begin{eqnarray*} \nonumber
L_i(\se,Z,\Me) &\le& \Uup_{i+1}(\widetilde{\se}^{\textrm{do}},Z,\Me)- u_{(\widetilde{\se}^{\textrm{up}},\widetilde{\se}^{\textrm{do}})}\Delta_i\widetilde{S}^{\textrm{do}}+ \delta
\le \Uup_{i+1}(\widetilde{\se}^{\textrm{do}},Z,\Me)- u\Delta_i\widetilde{S}^{\textrm{do}} + \delta \\
&\le& \sup_{\se'\in\Swe_{(\se,i)}} [ \Uup_{i+1}(\se',Z,\Me)- u \Delta_iS' ] + \delta.
\end{eqnarray*}
Then
\[ L_i(\se,Z,\Me) \le \sup_{\se'\in \Swe_{(\se,i)}} [ \Uup_{i+1}(\se',Z,\Me)- u\Delta_iS' ] + \delta,\]
for all $u \in I^{i}_{\se}$ and for all $\delta > 0$. Therefore
\begin{equation}\nonumber
L_i(\se,Z,\Me) \le \inf_{u \in I^{i}_{\se}} \sup_{\se' \in \Swe_{(\se,i)}} [ \Uup_{i+1}(\se',Z,\Me)-
u\Delta_iS' ] = \Uup_i(\se,Z,\Me).
\end{equation}
Assume now $L_i(\se,Z,\Me) = \infty$. Then, for an arbitrary constant $B \in \mathbb{R}$ there exist $\seup \in \Sup_{(\se,i)}$ and $\sedo \in \Sdo_{(\se,i)}$ such that
\[ B \le \Uup_{i+1}(\seup,Z,\Me)- u_{(\seup,\sedo)}\Delta_iS^{\textrm{up}}.\]
A similar reasoning as above shows that
$ \displaystyle B \le \sup_{\se'\in\Swe_{(\se,i)}}[ \Uup_{i+1}(\se',Z,\Me)- uS^{\textrm{up}} ],$
for all $u \in I^{i}_{\se}$.
Therefore $L_i(\se,Z,\Me) = \Uup_i(\se,Z,\Me)=\infty.$ 
\qed \end{proof}

The next Theorem, which requires extra assumptions, gives an easier way to solve the optimization problem for
the case $I^{i}_{\se}=\mathbb{R}$ while allowing for  a more efficient algorithm. We remark that the assumption $I^{i}_{\se}=\mathbb{R}$ is a convenient way of guaranteeing  $u_{(\seup,\sedo)}\in I^{i}_{\se}$.

\begin{theorem} \label{ConvexHullThm}
Let $\Me = \Swe \times \He$ be an $n$-bounded discrete market. If for any $\se \in \Swe$ 
$I^{i}_{\se}=\mathbb{R}$ assume at least one the two following conditions for $\se \in \Swe$ below hold,
\begin{enumerate}
\item $L_i(\se,Z,\Me) = \Uup_{i+1}(\se^{\bullet},Z,\Me)- u_{(\se^{\bullet},\se^{\circ})}\Delta_iS^{\bullet}$ for some
$\se^{\bullet}\in \Sup_{(\se,i)}$ and $\se^{\circ} \in \Sdo_{(\se,i)}$.
\item For any $\se' \in \Swe_{(\se,i)}$, $0<a \le |S'_{i+1}-S_i|\le b$ ($a$ and $b$ may depend on $\se$).
\end{enumerate}
Then,
\begin{equation}\nonumber
\Uup_i(\se,Z,\Me)= L_i(\se,Z,\Me).
\end{equation}
\end{theorem}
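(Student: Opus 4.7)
The plan is to establish the reverse of the inequality in Proposition \ref{prop:2}, namely $\Uup_i(S,Z,\Me) \le L_i(S,Z,\Me)$. Since $I^{i}_S = \mathbb{R}$, by (\ref{RangeDynamicBound}) it suffices to exhibit (or to approximate to arbitrary precision) a single slope $u^\star \in \mathbb{R}$ with $\sup_{S' \in \Se_{(S,i)}}[\Uup_{i+1}(S',Z,\Me) - u^\star(S'_{i+1} - S_i)] \le L_i(S,Z,\Me)$.

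Under hypothesis (1), I would take $u^\star := u_{(S^\bullet, S^\circ)}$, so that the line $\ell^\star(x) := L_i(S,Z,\Me) + u^\star(x - S_i)$ passes through both $(S^\bullet_{i+1}, \Uup_{i+1}(S^\bullet,Z,\Me))$ and $(S^\circ_{i+1}, \Uup_{i+1}(S^\circ,Z,\Me))$ by Remark \ref{obs:2}. The claim to prove is that $\Uup_{i+1}(S',Z,\Me) \le \ell^\star(S'_{i+1})$ for every $S' \in \Se_{(S,i)}$, which after rearrangement yields the desired estimate. I would argue by contradiction: if, say, $S' \in \Se^+_{(S,i)}$ and $\Uup_{i+1}(S',Z,\Me) > \ell^\star(S'_{i+1})$, then the chord joining $(S'_{i+1}, \Uup_{i+1}(S',Z,\Me))$ to $(S^\circ_{i+1}, \Uup_{i+1}(S^\circ,Z,\Me))$ shares a point with $\ell^\star$ at $x = S^\circ_{i+1}$ but lies strictly above $\ell^\star$ at $x = S'_{i+1} > S_i$. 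Hence its slope $u_{(S',S^\circ)}$ strictly exceeds $u^\star$ and, evaluated at $x = S_i$, it produces an intercept strictly greater than $L_i(S,Z,\Me)$, contradicting the defining supremum. The case $S' \in \Se^-_{(S,i)}$ is symmetric, using the pair $(S^\bullet, S')$.

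Under hypothesis (2), I would work with $g(u) := \sup_{S' \in \Se_{(S,i)}}[\Uup_{i+1}(S',Z,\Me) - u(S'_{i+1} - S_i)]$ and decompose $g = \max(g^+, g^-)$ according to whether the sup is taken over $\Se^+_{(S,i)}$ or $\Se^-_{(S,i)}$. The uniform gap $a \le |S'_{i+1} - S_i| \le b$ forces $g^+(u_1) - g^+(u_2) \ge a(u_2 - u_1)$ for $u_1 < u_2$, with the mirror estimate for $g^-$ in the opposite direction; combined with finiteness of $\Uup_i(S,Z,\Me) = \inf_u g(u)$, this makes $g^\pm$ Lipschitz and strictly monotone. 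Consequently $g^+ - g^-$ is continuous and strictly decreasing from $+\infty$ to $-\infty$, crossing zero at a unique $u^\star$ where $g$ attains its infimum $\Uup_i(S,Z,\Me)$. For $\epsilon > 0$, pick $S^\pm_\epsilon \in \Se^\pm_{(S,i)}$ whose values in $g^\pm(u^\star)$ are within $\epsilon$ of the corresponding suprema; using $|S^+_{\epsilon,i+1} - S^-_{\epsilon,i+1}| \ge 2a$, a short algebraic identity shows $|u_{(S^+_\epsilon,S^-_\epsilon)} - u^\star| \le \epsilon/(2a)$, and hence that the chord's intercept at $x = S_i$ is at least $\Uup_i(S,Z,\Me) - \epsilon(1 + b/(2a))$. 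Letting $\epsilon \downarrow 0$ gives $L_i(S,Z,\Me) \ge \Uup_i(S,Z,\Me)$.

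The main obstacle is case (2): the first case collapses to a one-line geometric contradiction once $u^\star$ is taken as the slope of the attained chord, whereas (2) requires first exploiting the gap hypothesis to establish coercivity and uniqueness of the optimal slope $u^\star$ (since the supremum in $L_i$ need not be attained), and then quantitatively controlling how the chord slope of an approximate maximizer differs from $u^\star$ while still producing an intercept close to $\Uup_i(S,Z,\Me)$.
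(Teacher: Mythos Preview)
Your proposal is correct. For case~(1) your argument is essentially the paper's: both take $u^\star = u_{(S^\bullet,S^\circ)}$ and derive a contradiction from any point $(S'_{i+1},\Uup_{i+1}(S'))$ lying strictly above the line $\ell^\star$, by producing a chord through $S^\bullet$ or $S^\circ$ whose intercept at $S_i$ would exceed $L_i$. The only cosmetic difference is that you phrase it as ``$\Uup_{i+1}(S') \le \ell^\star(S'_{i+1})$ for all $S'$'' whereas the paper picks a near-maximizer $S^*$ first; the pairings $(S^\bullet,S')$ for $S'\in\Se^-_{(S,i)}$ and $(S',S^\circ)$ for $S'\in\Se^+_{(S,i)}$ and the resulting contradictions are identical.

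For case~(2) your route is genuinely different. The paper never identifies an optimal slope: it picks a $\delta$-near-optimal chord $(S^\bullet,S^\circ)$ for $L_i$, then a $\delta$-near-maximizer $S^*$ for the supremum at slope $u_{(S^\bullet,S^\circ)}$, and uses the gap hypothesis only to bound the correction term $r = \delta\,(S^\bullet_{i+1}-S^*_{i+1})/(S^\bullet_{i+1}-S_i) \le (2b/a)\,\delta$, yielding $\Uup_i \le L_i + (1+2b/a)\delta$ directly. You instead exploit the gap $[a,b]$ to show that $g^+$ and $g^-$ are Lipschitz and strictly monotone in opposite directions, hence cross at a unique $u^\star$ realizing $\Uup_i = g(u^\star)$; then you construct an approximating chord from near-maximizers of $g^\pm(u^\star)$ and control its slope within $\epsilon/(2a)$ of $u^\star$. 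Your approach is more structural---it characterizes the optimal hedge as the unique slope balancing the two one-sided suprema, which dovetails nicely with Lemma~\ref{u-completionOfMarkets}---at the cost of setting up the $g^\pm$ machinery. The paper's approach is shorter and stays closer to the definition of $L_i$, needing no existence or uniqueness statement for $u^\star$. Both arrive at the same $\epsilon(1+O(b/a))$ error and the same limit.
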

\begin{proof}
It is enough to prove $\Uup_i(\se,Z,\Me)\le L_i(\se,Z,\Me)$ as the reverse inequality follows immediately from Proposition \ref{prop:2}. We need only to consider the case when $L_i(\se,Z,\Me)<\infty$.  Let $\delta > 0 $, then there exist $\se^{\bullet} \in
\Sup_{(\se,i)}$ and $\se^{\circ} \in \Sdo_{(\se,i)}$ such that
\[ L_i(\se,Z,\Me) \le \Uup_{i+1}(\se^{\bullet},Z,\Me)- u_{(\se^{\bullet},\se^{\circ})}(S_{i+1}^{\bullet}-S_i)+ \delta.\]
Observe that, in case  $1$, the above equation holds for $\se^{\bullet}$ and $\se^{\circ}$ that appear in the statement in case $1$. Also, since
\[\Uup_i(\se,Z,\Me) \le \sup_{\se'\in \Swe_{(\se,i)}} [\Uup_{i+1}(\se',Z,\Me)- u_{(\se^{\bullet},\se^{\circ})}\Delta_iS'], \]
there exists $\se^{*} \in \Swe_{(\se,i)}$ such that
\[\Uup_i(\se,Z,\Me) \le \Uup_{i+1}(\se^{*},Z,\Me)- u_{(\se^{\bullet},\se^{\circ})}\Delta_iS^{*} + \delta. \]
Consider first the case when the hypothesis $1$ holds. If $\se^{*} \in \Sdo_{(\se,i)}$, one should have
\[ \Uup_{i+1}(\se^{*},Z,\Me) \le \Uup_{i+1}(\se^{\bullet},Z,\Me) - u_{(\se^{\bullet},\se^{\circ})}(S_{i+1}^{\bullet}-S_{i+1}^*),\]
otherwise $-u_{(\se^{\bullet},\se^{*})}>-u_{(\se^{\bullet},\se^{\circ})}$ which leads to the contradiction
$\Uup_{i+1}(\se^{\bullet},Z,\Me)-u_{(\se^{\bullet},\se^{*})}\Delta_iS^{\bullet}>L_i(\se,Z,\Me).$
Therefore,
\[ \Uup_i(\se,Z,\Me)-\delta \le \Uup_{i+1}(\se^{\bullet},Z,\Me)- u_{(\se^{\bullet},\se^{\circ})}(S_{i+1}^{\bullet}-S_{i+1}^*)-
u_{(\se^{\bullet},\se^{\circ})}\Delta_iS^{*} =L_i(\se,Z,\Me).\]
On the other hand, if $\se^{*} \in \Sup_{(\se,i)}$, in a similar way results
\[ \Uup_{i+1}(\se^{*},Z,\Me) \le \Uup_{i+1}(\se^{\circ},Z,\Me)-
u_{(\se^{\bullet},\se^{\circ})}(S_{i+1}^{\circ}-S_{i+1}^*).\]
Thus
\begin{eqnarray*}
\Uup_i(\se,Z,\Me)-\delta &\le& \Uup_{i+1}(\se^{\circ},Z,\Me)-u_{(\se^{\bullet},\se^{\circ})}
(S_{i+1}^{\circ}-S_{i+1}^*)-u_{(\se^{\bullet},\se^{\circ})}(S^{*}_{i+1}-S_i)= \\
&=& \Uup_{i+1}(S^{\circ},Z,\Me)-u_{(S^{\bullet},S^{\circ})}\Delta_iS^{\circ}=L_i(S,Z,\Me).
\end{eqnarray*}
Then, the proof for the case when hypothesis  $1$ applies is complete.

In the case when hypothesis $2$ holds, assume first that $S_{i+1}^{*} \le S_i$ and define $r=\frac{S_{i+1}^{\bullet}-S_{i+1}^{*}}{S_{i+1}^{\bullet}-S_i} \delta >0$.
We are going to show, by contradiction, that
\begin{equation}\label{convexHull}
\Uup_{i+1}(S^{*},Z,\Me) \le \Uup_{i+1}(S^{\bullet},Z,\Me)-
u_{(S^{\bullet},S^{\circ})}(S_{i+1}^{\bullet}-S_{i+1}^*)+ r.
\end{equation}
Towards this end assume
\begin{equation}\nonumber
\Uup_{i+1}(S^*,Z,\Me) > \Uup_{i+1}(S^{\bullet},Z,\Me)-
u_{(S^{\bullet},S^{\circ})}(S_{i+1}^{\bullet}-S_{i+1}^*)+r,
\label{eqn:92}
\end{equation}
then
\[ -u_{(S^{\bullet},S^{*})}> -u_{(S^{\bullet},S^{\circ})}+ \frac{r}{S_{i+1}^{\bullet}-S_{i+1}^{*}} \]
which leads to
\begin{eqnarray*}
\Uup_{i+1}(S^{\bullet},Z,\Me)-u_{(S^{\bullet},S^{*})}(S_{i+1}^{\bullet}-S_i)
&>& \Uup_{i+1}(S^{\bullet},Z,\Me)-u_{(S^{\bullet},S^{\circ})}(S_{i+1}^{\bullet}-S_i)+
r~\frac{S_{i+1}^{\bullet}-S_i}{S_{i+1}^{\bullet}-S_{i+1}^{*}} \\
&>& L_i(S,Z,\Me).
\end{eqnarray*}
The latter is a contradiction with the definition of $L_i$. Then, since (\ref{convexHull}) holds,
\[\Uup_i(S,Z,\Me)-\delta\le \Uup_{i+1}(S^{\bullet},Z,\Me)-u_{(S^{\bullet},S^{\circ})}(S_{i+1}^{\bullet}-S_{i+1}^*)-u_{(S^{\bullet},S^{\circ})}(S^{*}_{i+1}-S_i) + r,\]
now, since  $r \le \frac{2b}{a} \delta$, it follows
\[\Uup_i(S,Z,\Me)\le L_i(S,Z,\Me) + \delta + r \le L_i(S,Z,\Me) + \left(1+\frac{2b}a\right)\delta. \]
While if $S_{i+1}^{*} > S_i$, in a similar way results
\begin{equation}  \nonumber
\Uup_{i+1}(S^{*},Z,\Me) \le \Uup_{i+1}(S^{\circ},Z,\Me)-
u_{(S^{\bullet},S^{\circ})}(S_{i+1}^{\circ}-S_{i+1}^*)+ r
\label{eqn:44}
\end{equation}
for $r=\frac{S_{i+1}^{*}-S_{i+1}^{\circ}}{S_{i+1}^{\circ}-S_i} \delta < 0$. Since $r \le -2\delta$, it follows from \eqref{eqn:44}
\[ \Uup_i(S,Z,\Me)\le L_i(S,Z,\Me) + \delta + r \le L_i(S,Z,\Me)  -\delta.\]
Then, the proof for the case when hypothesis  $2$ applies is complete.
\qed \end{proof}

Below we obtain some simplifications that apply to arbitrage $0$-neutral nodes, towards this end, we refine Definition \ref{localDefinitions}.
\begin{definition}
Consider a discrete market  $\Me=\Swe\times\He$, and a $0$-neutral node $(\se,k)$.
\begin{enumerate}
\item We call $(\se, k)$ a \emph{positive} arbitrage node if
\[ \sup_{\se' \in \Swe_{(\se,k)}} (S'_{k+1} - S_k) > 0 \mbox{ and } \inf_{\se' \in \Swe_{(\se,k)}} (S'_{k+1} - S_k) =0. \]
\item We call $(\se, k)$ a \emph{negative} arbitrage node if
\[ \sup_{\se \in \Swe_{(\se,k)}} (S'_{k+1} - S_k) =0 \mbox{ and } \inf_{ \se' \in \Swe_{(\se,k)}} (S'_{k+1} - S_k)<0. \]
\item We call $(\se, k)$ a \emph{flat} arbitrage node if
\[ \sup_{\se \in \Swe_{(\se,k)}} (S'_{k+1} - S_k) =0 = \inf_{ \se' \in \Swe_{(\se,k)}} (S'_{k+1} - S_k). \]
\end{enumerate}
\end{definition}

\noindent
Observe that in a negative arbitrage node
\[\Sup_{(\se,k)}=\{\se' \in \Swe_{(\se,k)}:S'_{k+1}=S_k\}\equiv \Se^=_{(\se,k)} \subseteq \Swe_{(\se,k)},\]
while in a positive arbitrage node
\[\Sdo_{(\se,k)}=\{\se' \in \Swe_{(\se,k)}:S'_{k+1}=S_k\}\equiv \Se^=_{(\se,k)} \subseteq \Swe_{(\se,k)}.\]

\begin{corollary}\label{convexityOfDynamicBounds}
Let $\Me = \Swe \times \He$ be an $n$-bounded
discrete market and assume the hypothesis of Theorem
\ref{ConvexHullThm} item $1$ holds. For any node $(\se, i)$, $0\le i <M(\se)$, which is either a negative arbitrage node or a positive arbitrage node  and $\Se^=_{(\se,i)}$ is nonempty, it holds that
\begin{equation}\nonumber
\Uup_i(\se,Z,\Me)= \sup_{\se' \in \Se^=_{(\se,i)}} \Uup_{i+1}(\se',Z,\Me).
\end{equation}
\end{corollary}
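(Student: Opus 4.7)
The plan is to establish the two inequalities $\Uup_i(S,Z,\Me) \le \sup_{S' \in \Se^=_{(S,i)}} \Uup_{i+1}(S',Z,\Me)$ and $\Uup_i(S,Z,\Me) \ge \sup_{S' \in \Se^=_{(S,i)}} \Uup_{i+1}(S',Z,\Me)$ separately. The $\ge$ direction is immediate, and the $\le$ direction follows from the explicit attainability granted by Theorem \ref{ConvexHullThm} item (1), with the caveat that we must choose the $\le$ vs.\ $<$ convention for $\Se^\pm_{(S,i)}$ appropriately in the two subcases.

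\textbf{Lower bound (the easy direction).} For every $S' \in \Se^=_{(S,i)} \subseteq \Se_{(S,i)}$ one has $\Delta_i S' = S'_{i+1}-S_i = 0$, so for any $u \in I^i_S$
\begin{equation*}
\sup_{\tilde S \in \Se_{(S,i)}}\bigl[\Uup_{i+1}(\tilde S,Z,\Me) - u\,\Delta_i\tilde S\bigr] \;\ge\; \sup_{\tilde S \in \Se^=_{(S,i)}}\Uup_{i+1}(\tilde S,Z,\Me),
\end{equation*}
since on the smaller set the $u$-term vanishes. The right-hand side does not depend on $u$, so taking $\inf_{u \in I^i_S}$ on the left (cf.\ (\ref{RangeDynamicBound})) yields $\Uup_i(S,Z,\Me) \ge \sup_{S' \in \Se^=_{(S,i)}}\Uup_{i+1}(S',Z,\Me)$.

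\textbf{Upper bound via Theorem \ref{ConvexHullThm}.} Under the stated hypothesis, Theorem \ref{ConvexHullThm} item (1) gives $S^{\bullet} \in \Se^+_{(S,i)}$ and $S^{\circ} \in \Se^-_{(S,i)}$ with
\begin{equation*}
\Uup_i(S,Z,\Me) \;=\; L_i(S,Z,\Me) \;=\; \Uup_{i+1}(S^{\bullet},Z,\Me) - u_{(S^{\bullet},S^{\circ})}(S^{\bullet}_{i+1}-S_i).
\end{equation*}
For a positive arbitrage node, use the original convention so that $\Se^-_{(S,i)} = \Se^=_{(S,i)}$ (as noted right before the statement); then $S^{\circ} \in \Se^=_{(S,i)}$ and $S^{\circ}_{i+1} - S_i = 0$, so Remark \ref{obs:2} yields $\Uup_i(S,Z,\Me) = \Uup_{i+1}(S^{\circ},Z,\Me) \le \sup_{S' \in \Se^=_{(S,i)}}\Uup_{i+1}(S',Z,\Me)$. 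For a negative arbitrage node, switch to the alternative convention (the paragraph after the definition of $\Se^\pm_{(S,i)}$ allows this), so $\Se^+_{(S,i)} = \Se^=_{(S,i)}$; then $S^{\bullet} \in \Se^=_{(S,i)}$ and $S^{\bullet}_{i+1}-S_i = 0$ directly give $\Uup_i(S,Z,\Me) = \Uup_{i+1}(S^{\bullet},Z,\Me) \le \sup_{S' \in \Se^=_{(S,i)}}\Uup_{i+1}(S',Z,\Me)$.

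\textbf{Expected obstacle.} The only subtle point is that the two arbitrage cases require the two different strict/non-strict conventions for $\Se^+_{(S,i)}$ and $\Se^-_{(S,i)}$ in order to keep both sets nonempty (so that Theorem \ref{ConvexHullThm} item (1) is applicable) and to place $S^{\circ}$ (resp.\ $S^{\bullet}$) inside $\Se^=_{(S,i)}$. Once this bookkeeping is handled, the argument reduces to a one-line application of Remark \ref{obs:2} together with the trivial lower bound, and the corollary follows.
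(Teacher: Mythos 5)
Your proof is correct and follows essentially the same route as the paper: identify $\Se^=_{(S,i)}$ with $\Se^-_{(S,i)}$ (or with $\Se^+_{(S,i)}$ under the alternative convention at a negative arbitrage node), use Remark \ref{obs:2} to kill the slope term, and invoke Theorem \ref{ConvexHullThm} to pass from $L_i$ to $\Uup_i$. The only cosmetic difference is that the paper collapses the double supremum to show $L_i(S,Z,\Me)=\sup_{S'\in\Se^=_{(S,i)}}\Uup_{i+1}(S',Z,\Me)$ directly, whereas you split the identity into two inequalities and use the attaining pair from hypothesis (1) for the upper one; both arguments are equivalent.
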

\begin{proof}
Assume $(\se, i)$ is a positive arbitrage node and
$\se' \in \Se^=_{(S,i)}=\Sdo_{(\se,i)}$. It follows that
for any $\seup \in \Sup_{(\se,i)}$
\[ \Uup_{i+1}(\se',Z,\Me)=\Uup_{i+1}(\se',Z,\Me)-u_{(\seup,\se')}(S'_{i+1}-S_i)=\Uup_{i+1}(\seup,Z,\Me)-u_{(\seup,\se')}(S^{\textrm{up}}_{i+1}-S_i).\]
Thus
$L_i(\se,Z,\Me)=\sup\{\Uup_{i+1}(\se',Z,\Me):\se'\in\Se^=_{(\se,i)}\}$
and the result follows from Theorem \ref{ConvexHullThm}.

For a negative arbitrage node, the proof is similar by making use of Theorem \ref{ConvexHullThm}
with the alternative definitions for $\Sdo_{(\se,i)},~\Sup_{(\se,i)}$ (see Remark \ref{obs:2}).
\qed \end{proof}

\section{Examples: Trajectory Sets Via Another Source of Uncertainty}  \label{chartsAndTrajectories}

This section provides examples of trajectory sets defined by means of an additional source of uncertainty, denoted by $W$, besides the stock. A general class of models and a discretized version of them are developed as well as  concrete examples: the classical binomial and trinomial models and a model based on sampled quadratic variation.

\subsection{Interval Markets}

One should not develop the wrong impression that there is a small possible collection of models supported by the formalism described in the paper, on the contrary, the approach allows for trajectory sets that could be constructed from historical data, random samples from large collection of trajectories, etc. We refer to \cite{deganoI} where trajectory sets are constructed by sampling paths of continuous time martingales and to \cite{fleck} where, the so called, operational models are introduced and compared to market data.

The general principle guiding the constructions presented in this section is to isolate an observable quantity (representing a variable of interest) and proceed to define a trajectory space by imposing constraints relating the trajectories and a free variable representing this observable. In some cases, this process allows to impose natural constraints that
follow from the discrete nature of the financial transactions. In the present examples, for simplicity, $W$ is chosen to be one dimensional and in applications is meant to be associated to the values taken by  an observable quantity which unfolds along the stock {\it chart} $x(t)$. This latter quantity could unfold in  continuous time and its future values be influenced by a source of uncertainty encoded in $W$.

There is no essential result in our paper that requires $S_i \geq 0$, but, doing so makes it easier to connect with the usual models. The definition below assumes given: $w_0=0$,  $s_0$, and sets $\Sigma_i \subseteq \mathbb{R}$ and $\Omega_i \subset(0,\infty)$.

\begin{definition} \label{implicitDefinitionThroughConstraints}
We will say that a trajectory set $\Swe \subseteq \mathcal{S}^{\mathcal W}_{\infty}(s_0, w_0)$ is an \emph{interval trajectory set} if for real numbers $c>0$ and $0<d<1<u$ and a subset $Q \subseteq \cup_{i=0}^{\infty} \Omega_i$ each $\se \in \Swe$ verifies:
\begin{enumerate}  \label{generalConstraints}
\item $\displaystyle d \le \frac{S_{i+1}}{S_i} \le u$ for all $i \geq 0$,
\item $0 < W_{i+1} - W_i \leq c$ for all $0 \le i < M(\se),$
\item $W_{M({\bf S})}  \in Q$.
\end{enumerate}
For a set of portfolios $\He$, we set $\Me= \mathcal{S}^{\mathcal W} \times \He$  and call $\Me$  an \emph{interval market}.
\end{definition}
Given $\Swe$ an interval trajectory set, recall that if we have two trajectories $\se^1, \se^2 \in \mathcal{S}^{\mathcal W}_{\infty}(s_0, w_0)$ such that $(S_i, W_i) = (S_i', W_i')$ for all $i \in \mathbb{N}$, it does not follow that $M(\se)=M(\se')$. In particular, it could be the case that $W^1_{M(\se^1)} \in Q$ and $W^2_{M(\se^2)} \notin Q$ and, therefore, $\se^1 \in \Swe$ and $\se^2 \notin \Swe$.

\begin{remark} \label{directlyModelingTheStock}
We can consider the special case $d=e^{-\alpha}$ and $u=e^{\alpha}$ for an $\alpha>0$. Then, condition $1$ in the above Definition could equally be replaced by
\[ \left\vert \log \left( \frac{S_{i+1}}{S_i} \right) \right\vert \leq \alpha,\]
we return to this case later.
\end{remark}

An interval trajectory set $\mathcal{S}^{\mathcal W}$, as characterized above, does not need to be, in general, the set of {\it all} trajectories ${\bf S}$ satisfying the listed constraints in Definition \ref{generalConstraints}. Interval trajectory sets can be used to model the unfolding of a data chart $x(t_i)$ by mapping $\{(x(t_i),F(x,t_0,t_i))\}$, one index $i$ at a time (i.e. as the chart unfolds), to its closest path $\{(S_i, W_i,m)\}_{i\geq 0}$. Here $F(x,t_0,t_i)$ is an observable quantity that changes as the path unfolds; it can represent any variable of interest such as number or volume of transactions, time, quadratic variation, etc. In the context of an option contract expiring at time $T$, $S_{M({\bf S})}$ will be a possible value being taken by $x(T)$. The introduction of $W_i$ as an independent variable allows to widen the scope of applicability of the model given by Definition \ref{implicitDefinitionThroughConstraints} and it allows to incorporate arbitrage $0$-neutral nodes
(see Section \ref{arbitrageNodes}).

Specific instances of interval sets or their finite versions (that we present below) will in fact impose further constraints on admissible trajectories. Once these further specifications are established, the resulting trajectory sets are defined in a combinatorial way i.e. by allowing membership to $\mathcal{S}^{\mathcal{W}}$ to {\it all} possible $\{(S_i, W_i,m)\}$ satisfying the constraints. This way of defining trajectory sets will make it easy to check if the local properties of $0$-neutral or up-down are satisfied. For example, assume an interval model such that for all $\se \in \Swe$, $S_{i+1} \in [dS_i,uS_i]$ for all $0 \le i \le n$ and fix a node $(\se,i)$. Clearly, there exists the possibility of choosing trajectories ${\bf S}^1,{\bf S}^2 \in \Swe_{({\bf S},i)}$ such that $S^1_{i+1}>S_i$, and $S^2_{i+1}<S_i$ respectively, so any node is up-down, and in that case the market results locally arbitrage-free (see Definition \ref{localDefinitions}).

The figure below illustrates a typical step of a trajectory in an interval market.

\[ \xymatrix@R-1pc{            & u S_i \\
                               & S_{i+1}\ar@{.}[u]\\
                   S_i \ar[ur] & \ar@{.}[u]\\
                               & \ar@{.}[u]\\
                               & d S_i \ar@{.}[u] \\     }\]
The next two subsections provide concrete examples of interval markets and some of their properties. At first, we do not assume that the interval markets contain all the trajectories satisfying the constrains.

\subsubsection{Fixed Time Partition}

Consider a fixed time partition $\Pi$, that is, for the time interval $[0,T]$, we fix $\Pi:0=t_0<t_1<\dots<t_n=T$ being the only times at which a portfolio could be rebalanced. Set $\Omega_i=\{ t_i \}$ for all $1 \le i \le n$, then $W_i=t_i$ for all $1 \le i \le n$. Also, since the option expires at $t_n=T$, we need to impose $\Theta=\{n\}$. Therefore a trajectory $\se \in \Swe_{\infty}(s_0,w_0)$, under the above restriction, has the form $\se=\{ (S_i,t_i,n)\}_{i=0}^n$.

 \begin{remark}
For any portfolio set $\He$, the discrete market $\Me= \Swe \times \He$ with $\Swe \subseteq \Swe_{\infty}(s_0,w_0)$ under the above constrains is an $n$-bounded discrete market. Note that in the general formalism the trajectories are infinite sequences of real numbers, as $\Me$ is an $n$-bounded market, it is inconsequential to define the values of $\se_i$ for $i>n$.
 \end{remark}

The condition $M(\se)=n$ for all $\se \in \Swe_{\infty}(s_0,w_0)$ implies that $W_{M(\se)}=t_n=T$. Then, in order to define a subset of $\Swe_{\infty}(s_0,w_0)$ in the terms of Definition \ref{implicitDefinitionThroughConstraints}, the set $Q$ only must contain the element $T$, namely $Q=\{ T\}$. Also, we define $c=\max \{ t_{i+1} - t_i : 0 \le i \le n-1\}$, and then condition $2$ in Definition \ref{generalConstraints} holds. Finally, given $0<d<1<u$, we denote by $\mathcal{S}^{\mathcal W}(s_0, d,u,\Pi)$ a subset of $\Swe_{\infty}(s_0,w_0)$ satisfying the remaining conditions Definition  \ref{generalConstraints} for $u,d$ and the set $Q=\{ T\}$. For any portfolio set $\He$, we will call the associated market $\Me=\Swe(s_0,d,u,\Pi) \times \He$ a \emph{fixed time} interval market.

Note that if for each node $(\se,i)$ condition \eqref{eqn:sets} holds, then $\Swe(s_0,d,u,\Pi)$ is locally $0$-neutral, independently of the intermediate values between $d$ and $u$, and then, the associated market $\Me=\Swe(s_0,d,u,\Pi) \times \He$ is $0$-neutral. Therefore, by Theorem \ref{havingAnIntervalTheorem},
$[\Vdo(s_0,Z,\Me),\Vup(s_0,Z,\Me)]$ is a fair price interval for the option $Z$ and the bounds can be evaluated with the methods developed in the paper.

For the next result we need to define a particular kind of derivative in general markets. Indeed an European option defined on a trajectory set $\Swe$ will be called convex if its payoff function $Z$ is given by a convex real variable function $Z^f$ as follows: $Z(\se)=Z^f(S_{M(\se)})$ for any $\se \in \Swe$. The next proposition shows that, in an interval market, the dynamic bounds for a convex European option are convex. This result is proven in \cite{roorda}, we provide an alternative proof using Theorem \ref{ConvexHullThm}.

Notice that the parameters $u$ and $d$ appearing in the next Proposition could depend on $S_0, \ldots, S_i$.

\begin{proposition}  \label{convexPayoffCorollary}
Let $0<d<1<u$. Consider a fixed time interval market $\Me=\Swe(s_0, d,u,\Pi) \times \He$. Let $Z(\se) = Z^f(S_{M(\se)})$ be the payoff function of an European derivative.
\begin{enumerate}
  \item Assume that $Z^f$ is convex and for all $0 \le i \le n-1$ and $\se \in \Swe(s_0, d,u,\Pi)$ there exists $\se^u,\se^d \in \Swe_{(\se,i)}$ such that $S^u_{i+1}= u~S_i$ and $S^d_{i+1}= d~S_i$. Then, the dynamic bounds are convex and given by
\begin{equation}
\Uup_i(\se,Z,\Me)=\frac{1-d}{u-d}~~\Uup_{i+1}(\se^u,Z,\Me)+\frac{u-1}{u-d}~~\Uup_{i+1}(\se^d,Z,\Me).
\label{eqn:3}
\end{equation}
  \item Assume that $Z^f$ is concave and that for all $\se \in \Swe(s_0, d,u, \Pi)$ and $0 \le i < n$ condition \eqref{eqn:sets} holds and there exists $\se' \in \Swe_{(\se,i)}$ such that $S'_{i+1}=S_i$. Then, the dynamic bounds are concave and given by
\begin{equation} \nonumber
\Uup_i(\se,Z,\Me)=Z^f(S_i).
\end{equation}
\end{enumerate}
\end{proposition}
\begin{proof}
Let $\se \in \Swe(s_0, d,u,\Pi)$; since $S^u_n \ge S^{\textrm{up}}_n$ for any $\seup \in \Sup_{(\se,n-1)}$ and $Z^f$ convex,
\[ Z^f(S^u_n)-u_{(\se^u,\se^d)}(S^u_{n}-S^+_{\textrm{up}}) \ge Z^f\left( S^u_n \left( 1-\frac{S^u_n-S^{\textrm{up}}_n}{S^u_n-S^d_n}\right) + S^d_n \left( \frac{S^u_n-S^{\textrm{up}}_n}{S^u_n-S^d_n}\right)\right) = Z^f(S^{\textrm{up}}_n). \]
Similarly, since $S^u_n\ge S^{\textrm{do}}_n$ for any $\sedo \in \Sdo_{(\se,n-1)}$ and $Z$ convex, it follows
\[ Z^f(S^u_n)-u_{(S^u,S^d)}(S^u_{n}-S^{\textrm{do}}_{n}) \ge Z^f \left( S^u_n \left( 1-\frac{S^u_n-S^-_n}{S^u_n-S^d_n}\right) + S^d_n \left( \frac{S^u_n-S^{\textrm{do}}_n}{S^u_n-S^d_n}\right)\right) =Z^f(S^do_n).\]
Then by Lemma \ref{lem:1} in Appendix \ref{sec:auxiliary},
\[ Z^f(S^{\textrm{up}}_n)-u_{(\seup,\sedo)}(S^{\textrm{up}}_n-S_{n-1}) \le Z^f(S^u_n)-u_{(S^u,S^d)}(S^u_n-S_{n-1}),\]
for all $\seup \in \Sup_{(\se,n-1)}$ and $\sedo \in \Sdo_{(\se,n-1)}$. Therefore, hypothesis $1$ of Theorem \ref{ConvexHullThm} holds and so,
\begin{equation} \nonumber
\Uup_{n-1}(\se,Z,\Me)= Z^f(S^u_n)-u_{(\se^u,\se^d)}(S^u_n-S_{n-1})= \frac{1-d}{u-d}Z^f(uS_{n-1})+\frac{u-1}{u-d}Z^f(dS_{n-1}).
\end{equation}
Since the property of convexity is preserved under scaling and under taking positive linear combinations, it is seen from the above that $\Uup_{n-1}(\cdot,Z,\Me)$ is convex and only depends on the value of $S_{n-1}$. We proceed now by backward induction; let $0\le i < n$ and suppose that $\Uup_{i+1}(\cdot,Z,\Me)$ is convex and given by \eqref{eqn:3}. Then, with the same calculations that we use for the case $n-1$, we can prove that $\Uup_{i}(\se,Z,\Me)$ is convex and given by \eqref{eqn:3} for all $\se \in \Swe$. This concludes the proof of \eqref{eqn:3}.

Consider now the statement and assumptions in the case $2$ of our theorem and take $\se \in \Swe(s_0, d,u,\Pi)$. Since $Z^f$ is concave, it follows that
\[ Z^f(S^{\textrm{up}}_n)-u_{(\seup,\sedo)}(S^{\textrm{up}}_n-S_{n-1}) \le Z^f\left( S^{\textrm{up}}_n \left( 1-\frac{S^{\textrm{up}}_n-S_{n-1}}{S^{\textrm{up}}_n-S^{\textrm{do}}_n}\right) + S^{\textrm{do}}_n \left( \frac{S^{\textrm{up}}_n-S_{n-1}}{S^{\textrm{up}}_n-S^{\textrm{do}}_n}\right)\right)=Z^f(S_{n-1})\]
for all $\seup \in \Sup_{(\se,n-1)}$ and $\sedo \in \Sdo_{(\se,n-1)}$. In particular
\[Z^f(S^{\textrm{up}}_n)-u_{(\seup,\se')}(S^{\textrm{up}}_n-S_{n-1})=Z(S_{n-1}). \]
Therefore, hypothesis $1$ of Theorem \ref{ConvexHullThm} holds, and then $\Uup_{n-1}(\se,Z,\Me)=Z^f(S_{n-1})$. Furthermore, $\Uup_{n-1}(\cdot,Z,\Me)$ is concave. Finally, by backward induction we obtain the desired result.
\qed \end{proof}

The standard binomial tree model presented in \cite{cox} is a particular situation of fixed time interval market. In a typical node of this model, the value of $S_{i+1}$ can only be $uS_i$ or $dS_i$ for each $0 \le i <n$.
%

Binomial models are important because they can be used to approximate continuous time models by letting the time step tend to zero. The next Proposition shows that $\Vup(s_0,Z,\Me)$ for a binomial model coincides with the Cox-Ross-Rubinstein price of the derivative. This can be seen to be a special case of the general result (\cite[Theorem 8]{deganoI}) showing the equality of the risk neutral price  with  the price bounds of an associated trajectory based discrete market. As a complement, note that the binomial model is a complete market (\cite[Theorem 6.8]{cutland}), then by Corollary \ref{onevalue} we will have a unique fair price.
\begin{proposition}
Consider $\Me=\Swe(s_0, d,u,\Pi) \times \He$ a binomial market with parameters $u$ and $d$, where $0<d < 1 < u$. Let $Z=Z^f$ be the payoff function of a European derivative. Then, $\Vup(s_0,Z,\Me)=\Vdo(s_0,Z,\Me)$ and are given by the Cox-Ross-Rubinstein price:
   \[\Vup(\se_0,Z,\Me)=\sum_{i=0}^{n} \left(\begin{array}{c} n \\ j \end{array}\right) \left( \frac{1-d}{u-d}\right)^{j} \left( \frac{u-1}{u-d}\right)^{n-j}Z(S_0u^{j+1}d^{n-j}).\]
\end{proposition}
\begin{proof}
  We will prove it by induction over $n$. Let $n=1$, then by Proposition \ref{convexPayoffCorollary},
  \[\Vup(\se_0,Z,\Me)=\Uup(\se_0,Z,\Me)=\frac{1-d}{u-d}Z^f(uS_{n-1})+\frac{u-1}{u-d}Z^f(dS_{n-1})\]
  which is the price given by Cox-Ross-Rubinstein for a $1$-step binomial model. Suppose now that $\Vup(s_0,Z,\widetilde{\Me})$ is the Cox-Ross-Rubinstein price for all binomial $n$-bounded market $\widetilde{\Me}$ and let $\Me$ a binomial $n+1$-bounded market. It follows by Theorem \ref{fullDynamicProgramming} and Proposition \ref{convexPayoffCorollary},
   \[\Vup(\se_0,Z,\Me)=\Uup(\se_0,Z,\Me)=\frac{1-d}{u-d}~~\Uup_{1}(\se^u,Z,\Me)+\frac{u-1}{u-d}~~\Uup_{1}(\se^d,Z,\Me),\]
where $\se^u,\se^d \in \Swe_{(\se,0)}$ such that $S^u_{1}= u~S_0$ and $S^d_{1}= d~S_0$.
   Then, by Lemma \ref{inductionModelproof} and Theorem \ref{fullDynamicProgramming},
   \begin{eqnarray*}
     \Uup_{1}(\se^u,Z,\Me)&=&\Uup(\widehat{\se^u}_0,Z,\widehat{\Me})=\Vup(\widehat{\se^u_0},Z,\widehat{\Me})\\
     \Uup_{1}(\se^d,Z,\Me)&=&\Uup(\widehat{\se^d}_0,Z,\widehat{\Me})=\Vup(\widehat{\se^d_0},Z,\widehat{\Me})
   \end{eqnarray*}
   where $\widehat{\Me}$ is a binomial $n$-bounded market, and $\widehat{S^u_0}=S^u_1$ and $\widehat{S^d_0}=S^d_1$. Then, by inductive hypothesis,
   \begin{eqnarray*}
     \Vup(\widehat{\se^u_0},Z,\widehat{\Me})&=& \sum_{i=0}^{n} \left(\begin{array}{c} n \\ j \end{array}\right) \left( \frac{1-d}{u-d}\right)^{j+1} \left( \frac{u-1}{u-d}\right)^{n-j}Z(S_1u^{j+1}d^{n-j})\\
     \Vup(\widehat{\se^d_0},Z,\widehat{\Me})&=&\sum_{i=0}^{n} \left(\begin{array}{c} n \\ j \end{array}\right) \left( \frac{1-d}{u-d}\right)^j \left( \frac{u-1}{u-d}\right)^{n+1-j}Z(S_1u^jd^{n+1-j}).
   \end{eqnarray*}
   Finally, replacing and changing variable, we obtain
   \[\Vup(\se_0,Z,\Me)=\sum_{i=0}^{n+1} \left(\begin{array}{c} n+1 \\ j \end{array}\right) \left( \frac{1-d}{u-d}\right)^{j} \left( \frac{u-1}{u-d}\right)^{n+1-j}Z(S_0u^{j+1}d^{n+1-j}),\]
   which is the Cox-Ross-Rubinstein for a $n+1$-step binomial model.
\qed \end{proof}

The trinomial tree model was originally presented in \cite{boyle} and offers more flexibility than binomial trees. The stock price can move up, down or can also take an intermediate price between $uS_i$ and $dS_i$ at each node, as shown in the diagram below.

\[ \xymatrix@R-1pc{                           & uS_i\\
                   S_i \ar[ur] \ar[dr] \ar[r] & bS_i \\
                                              & dS_i\\    }\]

Hence, $0<d < b < u$, and it is not necessary that $b = 1$. Such market model is incomplete and, then, the technique of determining the value of an option via a replicating portfolio does not work. We can however find upper and lower bounds for the option values.

The next Theorem characterizes the minmax bounds $\Vup(s_0,Z,\Me)$ and $\Vdo(s_0,Z,\Me)$ for general incomplete fixed time interval markets $\Me=\Swe(s_0, d,u,\Pi) \times \He$. It shows that the bounds are completely determined for an European convex payoff $Z$. The result can also be found in \cite{kolokoltsov} and \cite{roorda}.

\begin{theorem}
  Consider $\Me=\Swe(S_0, u,d,\Pi) \times \He$ a fixed time interval market where $0<d < 1 < u$. Let $Z=Z^f$ be the payoff function of an European derivative and assume it is convex.
  \begin{enumerate}
    \item If for all $0 \le i \le n-1$ and $\se \in \Swe(s_0, d,u,\Pi)$ there exists $\se^u,\se^d \in \Swe_{(\se,i)}$ such that $S^u_{i+1}= u~S_i$ and $S^d_{i+1}= d~S_i$, then $\Vup(S_0,Z,\Me)$ are given by the Cox-Ross-Rubinstein price of the derivative in the binomial tree model with the same parameters as the interval model.
    \item If for all $\se \in \Swe(s_0, d,u, \Pi)$ and $0 \le i < n$ condition \eqref{eqn:sets} holds and there exists $\se' \in \Swe_{(\se,i)}$ such that $S'_{i+1}=S_i$, then $\Vdo(S_0,Z,\Me)=Z^f(S_0)$.
  \end{enumerate}
  \label{coxtheo}
\end{theorem}
\begin{proof}
  For a proof of $(1)$ see \cite[Theorem 1]{roorda}. For $(2)$, recall that $\Vdo(s_0,Z,\Me)=-\Vup(s_0,-Z,\Me)$. Then, since $Z$ is convex, $-Z$ is concave. Thus, by Proposition \ref{convexPayoffCorollary}, $\Uup(S_0,-Z,\Me)=-Z(S_0)$. Then,
  \[\Vdo(S_0,Z,\Me)=-\Vup(S_0,-Z,\Me)=-\Uup(S_0,-Z,\Me)=Z(S_0).\]
\qed \end{proof}

The Theorem assumes that the constant trajectory belongs to $\Swe(s_0, d,u,\Pi)$, namely, by Proposition \ref{convexPayoffCorollary}, item $2$, for each node $(\se,i)$, there exists a trajectory $\se' \in \Swe_{(\se,i)}$ such that $S'_{i+1}=S_i$. If this condition does not hold, then part $2$ of the above Theorem is not true. For example, if we consider a trinomial market with $b \neq 1$, it is easy to see that $\Udo_i(\se,Z,\Me)=\frac{1-c}{u-c}Z^f(uS_{i})+\frac{u-1}{u-c}Z^f(cS_{i})$ if $c<1$ and, clearly, $\Vdo(s_0,Z,\Me)$ tends to $\Vup(s_0,z,\Me)$ when $b$ tends to $d$.

\subsubsection{Sampled Quadratic Variation (SQV)}  \label{quadraticVariationCase}

This section introduces a discrete market model where $S_i$ is intended to model $e^{x(t)}$ with $x(t)$ the chart stock and $W$ represents the sampled quadratic variation of the trajectories, that is
\begin{equation}  \label{bJNQV}
W_i = \sum_{k=0}^{i-1} (\log S_{k+1} - \log S_k)^2.
\end{equation}
Notice that when using $S_i = e^{x(t)}$ one should use the word charts for the data $e^{x(t)}$, instead of $x(t)$ as we do, but we allow ourselves some abuse of language at this point.

Set for all $i \in \mathbb{N}$,
\[ \Sigma_i=\{e^x, x \in \mathbb{R}_+ \} \mbox{ and } \Omega_i=\{\sum_{k=0}^{i-1} (\log S_{k+1} - \log S_k)^2, S_k \in \Sigma_k \}.\]
We consider $m \in \Theta \equiv \mathbb{N}$. Therefore a trajectory $\se \in \Swe_{\infty}(s_0,w_0)$ has the form $\se=\{ (S_i,W_i,m)\}_{i \in \mathbb{N}}$, where $W_i$ is given by \eqref{bJNQV}.

The general constraints defining interval models in Definition \ref{implicitDefinitionThroughConstraints} can, in the present case, be interpreted as imposing constraints on the consumed quadratic variation and on the absolute value of the change on chart values, both in between consecutive trading instances.
Let $\alpha>0$, as indicated in Remark \ref{directlyModelingTheStock}
, we can restrict \begin{equation} \label{eqn:alpha}
\left| \log \frac{S_{i+1}}{S_i} \right| \le \alpha.
\end{equation}
The condition $W_{M({\bf S})}  \in Q$ means we deal with trajectories whose total sampled quadratic variation in the interval $[0, T]$ belongs to the a-priori given subset $Q$ and taking $c \equiv \alpha^2$, the constraint $W_{i+1} - W_i \leq c$ in Definition \ref{implicitDefinitionThroughConstraints} holds. We denote by $\Swe(s_0,\alpha, Q)$ a subset of $\Swe_{\infty}(s_0,w_0)$ satisfying the condition of Definition \ref{implicitDefinitionThroughConstraints} for $d=e^{-\alpha}$, $u=e^{\alpha}$, $c=\alpha^2$ and $Q$. For any portfolio set $\He$, we will call the associated market $\Me=\Swe(s_0,\alpha, Q) \times \He$ as \emph{sampled quadratic variation} interval market (or SQV market for short).

A typical node is shown below,
\[ \xymatrix@R-2pc{ & & (S_ie^{c_2},W_{i}+c_2^2)& \\
                    & & &\\
                    & & & \\
                    & (S_ie^{c_1},W^1_{i}+c_1^2) &&  \\
             (S_i,W_i) \ar[ur] \ar[dr] \ar@/^/[uuuurr] \ar@/_/[ddddrr] &&& c_i \le \alpha \\
                    & (S_ie^{-c_1},W^1_{i}+c_1^2) & &   \\
                    & & &\\
                    & & &\\
                    & & (S_ie^{-c_2},W^2_{i}+c_2^2)&    } \]
The trajectory set introduced in \cite{BJN} can be recovered as a special case of Definition \ref{implicitDefinitionThroughConstraints} by taking $Q = \{v_0\}$.

In the next section we will study how to evaluate the interval price for a finite version of intervals markets, in particular SQV markets. We will consider a finite set $Q$, which does not necessary contain a unique element. We present next an appropriate discretization for this kind of trajectories, as well as a grid data structure which will allow us to calculate the dynamic bounds for these examples.

\section{Discretization and Grid Data Structure}  \label{gridDataStructure}

\subsection{Finite Interval Markets} \label{discretization}

A natural finite discretization leading to an implementation of interval markets defined in Section \ref{chartsAndTrajectories} is obtained by introducing real numbers $\delta, \beta >0$ and natural numbers $N_1, N_2 \in \mathbb{N}$. We assume in this section that the coordinates $S_i$ are associated to chart values by $e^{x(t_i)} \rightarrow S_i$, using the exponential function makes it easier to connect with the
usual geometric stochastic models. Then, $S_i$ and $W_i$ are restricted to belong to the sets
\begin{eqnarray} \label{eqn:discqv}
\nonumber \Sigma_i \equiv \Sigma(\delta, N_1) &=& \{s_0e^{k \delta}, k \in \{-N_1, -N_1 +1, \ldots, N_1\}\}\\
\Omega_i \equiv \Omega(\beta, N_2) &=& \{j \beta^2, j\in \{ 0, \dots ,N_2 \}\}.
\end{eqnarray}
The parameters $\delta$ and $\beta$ provide a natural discretization of the chart exponentials.

\begin{remark}
If the variable $W_i$ is directly related to the samples $S_i$, for instance, in a SQV market from Section \ref{quadraticVariationCase}, it is natural to have a unique discretization parameter $\delta$ for $\Sigma_i$ and $\Omega_i$. On the other hand, if the sets $\Omega_i$ are discrete apriori, there is no need to implement a discretization. This is the case, for example, of a fixed time interval market where $\Omega_i$ has a unique element.
\end{remark}

Note that for any trajectory $\se=\{(S_i,W_i,m)\}_{i \in \mathbb{N}}$, in an interval market always holds that $w_0<W_1<\dots < W_{m}$. Therefore, if there exists $k \in \mathbb{N}$ such that $W_k=N_2\beta^2$, $k$ must be equal to $m$. Then, a trajectory $\se \in \Swe_{\infty}(s_0,w_0)$ with $\Sigma_i$ and $\Omega_i$ defined by \eqref{eqn:discqv} necessarily have $M(\se) \le N_2$. Therefore, the coordinate $m$ are restricted to belong to the set
\begin{equation} \label{eqn:theta}
\Theta=\{ 1, \dots ,N_2 \},
\end{equation}
and so, by Definition \ref{nBoundedDefinition}, the corresponding markets will be $N_2$-bounded.

In order to define a subset of $\Swe_{\infty}(s_0,w_0)$ satisfying the properties listed in Definition \ref{implicitDefinitionThroughConstraints}, let $\Lambda=\{n_1,\ldots, n_{\theta}\} \subseteq \Theta$ be a collection of positive integers and define $Q_{\Lambda}=\{n_1\beta^2,\dots,n_{\theta}\beta^2 \}$. Without loss of generality, we can assume that $n_{\theta}=N_2$. For positive integers $p$ and $q$, we denote by $\Swe(s_0,\delta,\beta,p,q,N_1,\Lambda)$ a subset of $\Swe_{\infty}(s_0,w_0)$ with $\Sigma$, $\Omega$ and $\Theta$ defined by \eqref{eqn:discqv} and \eqref{eqn:theta} satisfying the conditions of Definition \ref{implicitDefinitionThroughConstraints} (in the terms of Remark \ref{directlyModelingTheStock}) for $\alpha=p\delta$, $c=q\beta^2$ and $Q_{\Lambda}$. We will refer to this class of trajectories as \emph{finite interval trajectory sets} and as \emph{finite interval markets} for the associated markets
\[ \Me=\Swe(s_0,\delta,\beta,p,q,N_1,\Lambda) \times \He \]
where $\He$ is a portfolio set. It is clear that finite trajectory sets will have finite cardinality.

The parameters $N_1$ and $N_2$ play a key role in the local behavior of a finite discrete market. Assume the trajectory ${\bf S}=\{(S_i,W_i,N_2) \}_{i \in \mathbb{N}}$ belongs to a finite trajectory set $\Se^{\mathcal{W}}(s_0, \delta,\beta, p,q,N_1,\Lambda)$. Taking into account the constraint
\[p\delta=\alpha \ge |\log S_{i+1}-\log S_i|=|k_{i+1}-k_i|\delta, \]
the largest value that $S_{N_2}$ can attain corresponds to the value $S_{N_2}=s_0e^{N_2\,p\delta}$. Then, in order to allow for this kind of trajectory, we must take $N_1\le p~N_2$. In the case that $N_1\le (N_2-1)\,p$, there could exist trajectories with arbitrage nodes, in the sense of Definition \ref{localDefinitions}. For example, assume the trajectory $\se$ defined by
\[ \se_i=\left\lbrace \begin{array}{cc}
                      (s_0e^{ip\delta},i\beta^2,N_2) & \mbox{if } i \le \frac{N_1}{p} \\
                      (s_0e^{N_1\delta},i\beta^2,N_2) & \mbox{if } i > \frac{N_1}{p}.
                    \end{array} \right. \]
belongs to $\Se^{\mathcal{W}}(s_0, \delta,\beta p,q,N_1,\Lambda)$ with $N_1=(N_2-1)p$, it satisfies $\frac{N_1}{p}=N_2-1<N_2$ and so, one more step is available. Then, for any trajectory ${\bf S}'\in \Swe_{({\bf S},N_2-1)}$ it follows that
\[ S'_{N_2} \le S'_{N_2-1}=S_{N_2-1}=s_0e^{N_1\delta},\]
and, therefore, $({\bf S},N_2-1)$ is an arbitrage node.

Figure \ref{fig:25} displays random trajectories in a finite trajectory set $\Se^{\mathcal{W}}(s_0, \delta, \beta, p, q, N_1,\Lambda)$ with $s_0=1$, $w_0=0$, $\beta=\delta=0.0058$, $p=3$, $q=9$, $\Lambda= \{100, 200\}$, $N_1=300$ and $N_2=200$. It shows  $100$ random trajectories in each display. The first one corresponds to trajectories with $W_{M(\se)} = 0.0034= 100 ~\delta^2$, then they must have $M  \leq 100$; while the second one corresponds to trajectories with $W_{M(\se)}= 0.0067 = 200~\delta^2$ and then, they must have $M  \leq 200$. The trajectories are shown in different displays for convenience but they belong to the same trajectory set $\Se^{\mathcal{W}}(s_0, \delta, \beta ,p,q, N_1, \Lambda)$.

\begin{figure}
\centering
\includegraphics[width=0.8\textwidth]{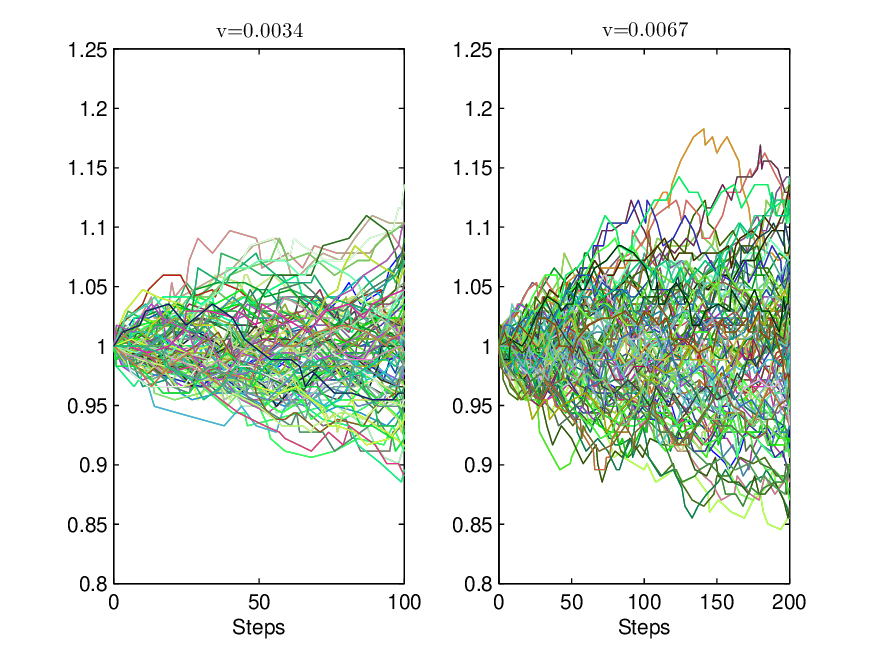}
\caption{Trajectory sets with different quadratic variation for $s_0=1$, $w_0=0$, $\delta=0.0058$, $N_1=300$, $N_2=200$ and p=3.}
\label{fig:25}
\end{figure}

We refer to Appendix \ref{gridCharacterization} for a description of a data structure and an algorithm implementing finite interval markets.

\subsection{Numerical Results}
\label{numerical}

This section provides numerical results illustrating some characteristics of the model described in Section \ref{quadraticVariationCase}.  We compute the minmax option bound prices using the finite models from Section \ref{discretization} and data structure and algorithm from Appendix \ref{gridCharacterization}. The output illustrates the super-replication price for call options with respect to the maximum number of steps and different jump sizes $p$ and its variation on the presence of arbitrage nodes. Finally, some superhedging and underhedging approximations and the effect of variable volatility are presented. For reasons of space we do not provide details related to the software implementation. Other numerical results, for a different class of models, and based on market, data can be found in \cite{fleck}.

Consider a two-month European call option with strike of \$1 on a stock that pays no dividends, with current price \$1 and the volatility of the stock is taken to be equal to $\sigma=20$\%. The Black \& Scholes price for this assumptions is \$0.0326 when $s_0 = K = 1$. Define
\[ v_0=\sigma^2 \cdot T=0.04 \cdot \frac{2}{12}=0.0067.\]
We build a sampled quadratic variation trajectory set by taking $Q = \{v_0\}$ and defining $W_i$ by \eqref{bJNQV}. Recall that $N_2$ is the maximum number of steps for a trajectory in the model, therefore

\begin{equation} \nonumber
N_2\beta^2=v_0=0.0067
\end{equation}

Then, for a given $N_2 \in \mathbb{N}$, we have a unique value for $\beta$. Since $W$ is defined in term of $S$, we only need a unique parameter in order to build a finite version of a SQV market. Then, we assume in the following $\delta=\beta$ and, in consecuence, $q=p^2$. Thus, for $p \in \mathbb{N}$, $\Lambda=\{ N_2\}$ and $N_1=pN_2$, we will consider the finite SQV trajectory set $\Swe(s_0, \alpha, Q)$, where $\alpha=p \delta$ and $Q=\{ N_2 \delta^2\}$. In this part we will consider all the trajectories in the sets \eqref{eqn:discqv} satisfying conditions \eqref{eqn:alpha}.

Let $\Me=\Swe(s_0, \alpha, Q) \times \He$ the associated market for a full set of portfolios $\He$. Figure \ref{fig:3} shows the convergence behavior of $\Vup(s_0,Z,\Me)$ and $\Vdo(s_0,Z,\Me)$ when $p=2,3,5$ with $N_2$ ranging from $10$ to $200$. When the jump unit, $p$, is greater than one, clearly,  the interval price range is more narrow as $N_2$ increases and the Black \& Scholes price belongs to the interval. Also, we can see that the interval becomes wider as $p$ increases. The reason for this is that if $p<p'$ are two jump sizes then the set of trajectories with with parameter $p$ is included in the set of trajectories with parameter $p'$. Therefore, when we calculate the bounds, the maximum over the set with parameter $p'$ is higher than the maximum over the set with parameter $p$.

\begin{figure}
\begin{center}
\includegraphics[width=0.6\textwidth]{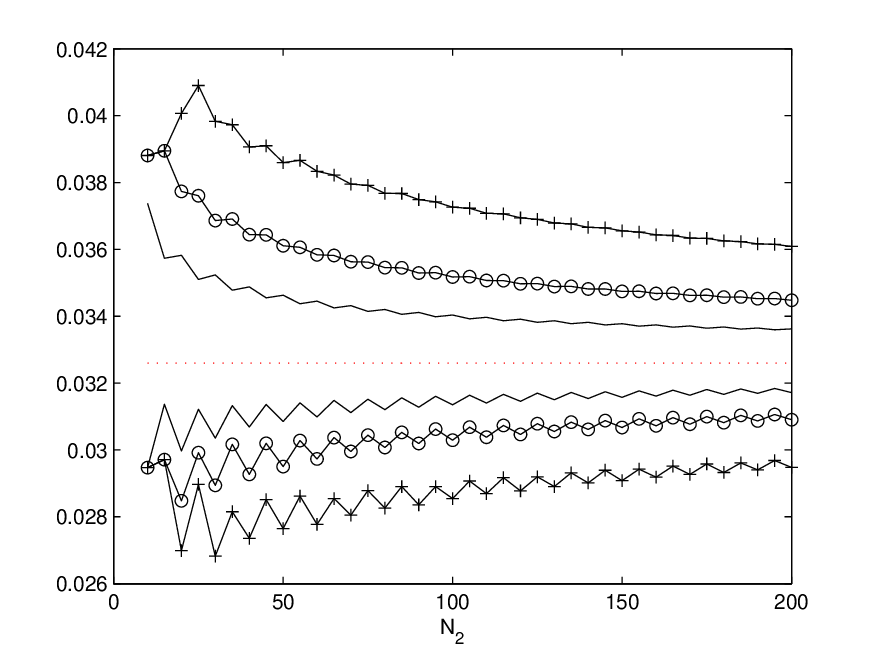}
\caption{Convergence of $\Uup(0,0)$ and $\Udo(0,0)$ as a function $N_2$ for different values of $p$.}
\label{fig:3}
\end{center}
\end{figure}

Notice a detail, when $N_2 = 5$ and $N_2=10$, in the case of the jump units are 3 and 5, the upper bounds are equal. When $N_2=10$ the maximum jump that the algorithm can take is $\sqrt{N_2}\approx 3$. Therefore, although we can run the program for the jump unit 5, this jump does not really take into account and thus does not affect the price of the option in the algorithm. Similarly for the lower bounds.

Now we fix $N_2 = 100$ and we will calculate the interval price for different starting levels of the stock $s_0$. Let $\Me=\Swe(s_0, \alpha, Q) \times \He$ the associated market for a full set of portfolios $\He$. Figure \ref{fig:5} displays $\Vup(s_0,Z,\Me)$ and $\Vdo(s_0,Z,\Me)$ for different jump units $p=1,3,5,7$. We can see that the price increase when $s_0$ increases. We notice that the minmax bounds are very narrow for the higher starting level $s_0$. Therefore, jumps have less of an effect on the bound of the option prices for higher values of the stock.

\begin{figure}
\begin{center}
\includegraphics[width=0.6\textwidth]{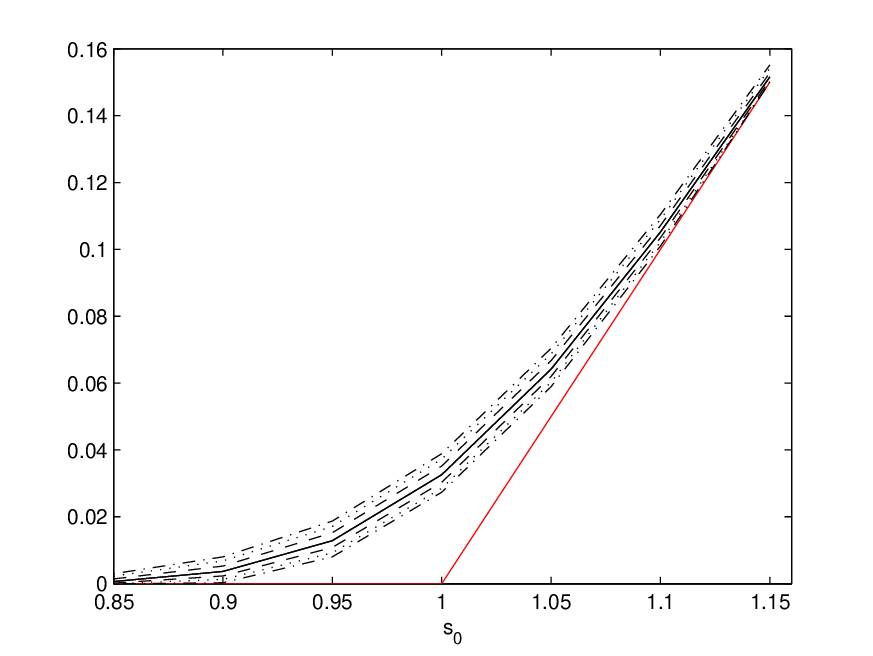}
\caption{Minmax upper and lower bounds price as a function of $s_0$ for different values of $p$.}
\label{fig:5}
\end{center}
\end{figure}

\subsubsection{Effect of arbitrage nodes on minmax bounds}  \label{arbitrageNodes}

It is interesting to see the effect of arbitrage nodes (in the sense of Definition \ref{localDefinitions}) on the model proposed above. We consider again the finite SQV trajectory set $\Swe(s_0, \alpha,Q)$ with the same parameters as above, but now the coordinates $W_i$ are not defined by \eqref{bJNQV}. Namely, $W_i$ does not depend on the stock values. Such trajectory set is now modified in order to incorporate arbitrage nodes: let $\Gamma$ the trajectory grid corresponding to $\Swe(s_0, \alpha,Q)$ (as per Section \ref{gridCharacterization}). Nodes $(k,j)$ are selected randomly and we change its reachable nodes $(k',j')$ as follows:
\begin{itemize}
\item If $k\ge 0$, the reachable nodes are $(k',j')$ where
\[-p \le k'-k \le 0 \quad \mbox{and} \quad 0<j'-j \leq p^2. \]
\item If $k<0$, the reachable nodes are $(k',j')$ where
\[ 0 \le k'-k \le p  \quad \mbox{and} \quad 0<j'-j \leq p^2.\]
\end{itemize}
These definitions give new trajectory sets which we denote by $\Swe_{\texttt{arb}}(s_0, \alpha , Q)$, where {\tt arb} refers to {\it arbitrage}. Observe that the modified trajectory set has trajectories with $S_{i+1}=S_i$ passing through an arbitrage node.

Figure \ref{fig:19} displays the upper and lower bound as a function of $s_0$ for the market $\Me=\Swe \times \He$ and the modified $\Me_{\texttt{arb}}=\Swe_{\texttt{arb}} \times \He$ with $p=1$ adding different percentages of arbitrage nodes. In the same way, Figure \ref{fig:7} displays the upper and lower bound as a function of $s_0$ with $p=3$.

\begin{figure}
\centering
\includegraphics[width=0.9\textwidth]{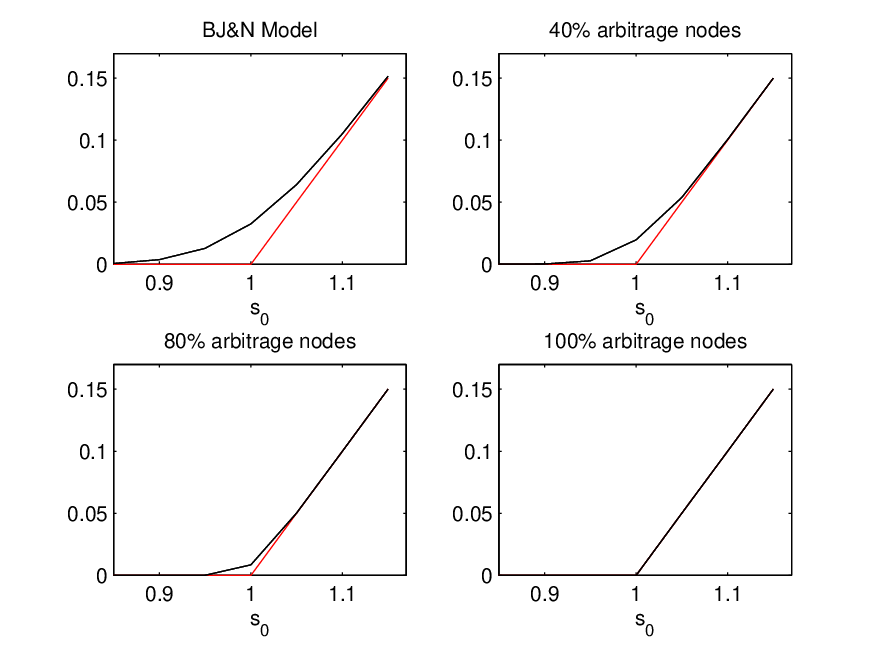}
\caption{Minmax upper and lower bounds price (which are the same in this case) as a function of $s_0$ for $p=1$ in the presence of arbitrage nodes in relation to the lower Merton bound (red line)}
\label{fig:19}
\end{figure}

\begin{figure}
\centering
\includegraphics[width=0.9\textwidth]{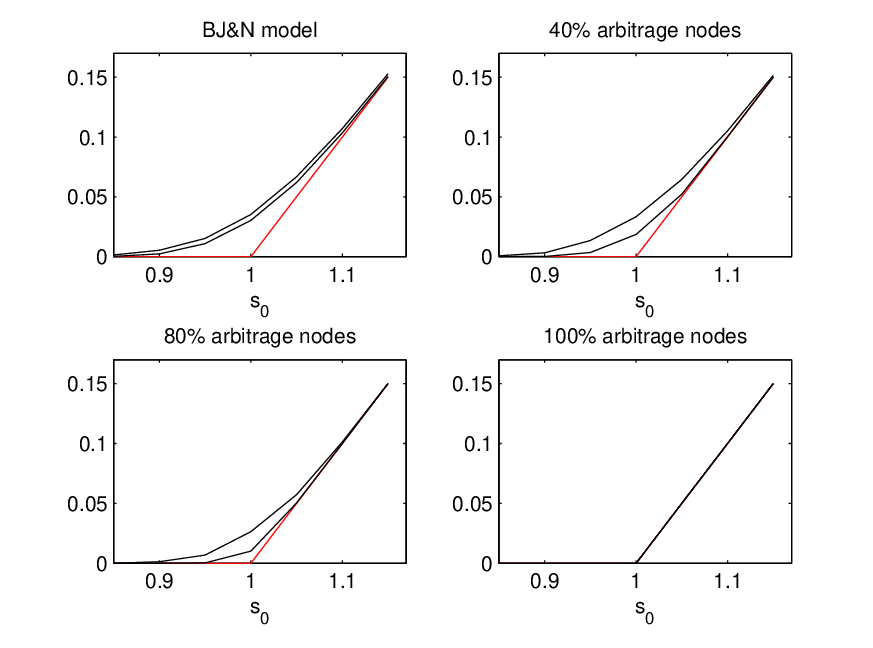}
\caption{Minmax upper and lower bounds price (which are the same in this case) as a function of $s_0$ for $p=3$ in the presence of arbitrage nodes
in relation to the lower Merton bound (red line).}
\label{fig:7}
\end{figure}

\subsubsection{Hedging}

The algorithm presented allow us to calculate not only the value of $\Vup(s_0,Z,\Me)$ but also the optimal portfolio $H$ providing the investments along each possible trajectory in $\Swe$. On each vertex $(k,j)$ of the data grid $\Gamma$ given in Section \ref{sec:computationGrid}, the dynamic upper bound $\Uup(k,j)$ is available and corresponds to an optimal value $u(k,j)=\Delta^{\pm}$ given by equation (\ref{+-reachable}). Recall that $\Uup(k,j)$ and so $u(k,j)$ give a unique value for any trajectory passing through that vertex. Therefore, we can define an optimal strategy $\{H^{\uparrow}_i\}_{i \in \mathbb{N}}$ on $\se \in \Swe$ by:
\begin{eqnarray*}
H^{\uparrow}_i(\se)=u(k,j)\quad \mbox{ if } \quad (S_i,W_i)=(s_k,w_j).
\end{eqnarray*}
This optimal strategy is non-anticipative.

It is interesting to study how $H^{\uparrow}$ actually approximates $Z$, as function of a trajectory $\se \in \Swe$, for an initial portfolio value $X$. In a short position the hedging values are given by
\begin{equation}  \label{shorthedge}
X + \sum_{i=0}^{N_{H^{\uparrow}}(\se)-1} H^{\uparrow}_i(\se) \Delta_iS
\end{equation}
with $X \in \mathbb{R}$ the initial portfolio value.

Figure \ref{fig:11}  shows the hedging values \eqref{shorthedge} with $ X=\Vup(s_0,Z,\Me)+0.01$ and $X=\Vup(s_0,Z,\Me)-0.03$, for random trajectories with $s_0=1$, $p=3$ and $N_2=100$ with respect to an European call $Z$ for the model $\mathcal{M}=\Swe(s_0, \alpha,Q) \times \He$ studied at the begin of the Subsection. We can see that the values from \eqref{shorthedge} superhedge the payoff value in the first case. For the case $X=\Vup(s_0,Z,\Me)-0.03$, the values tightly approximate the payoff values.

\begin{figure}
\centering
\includegraphics[width=0.9\textwidth]{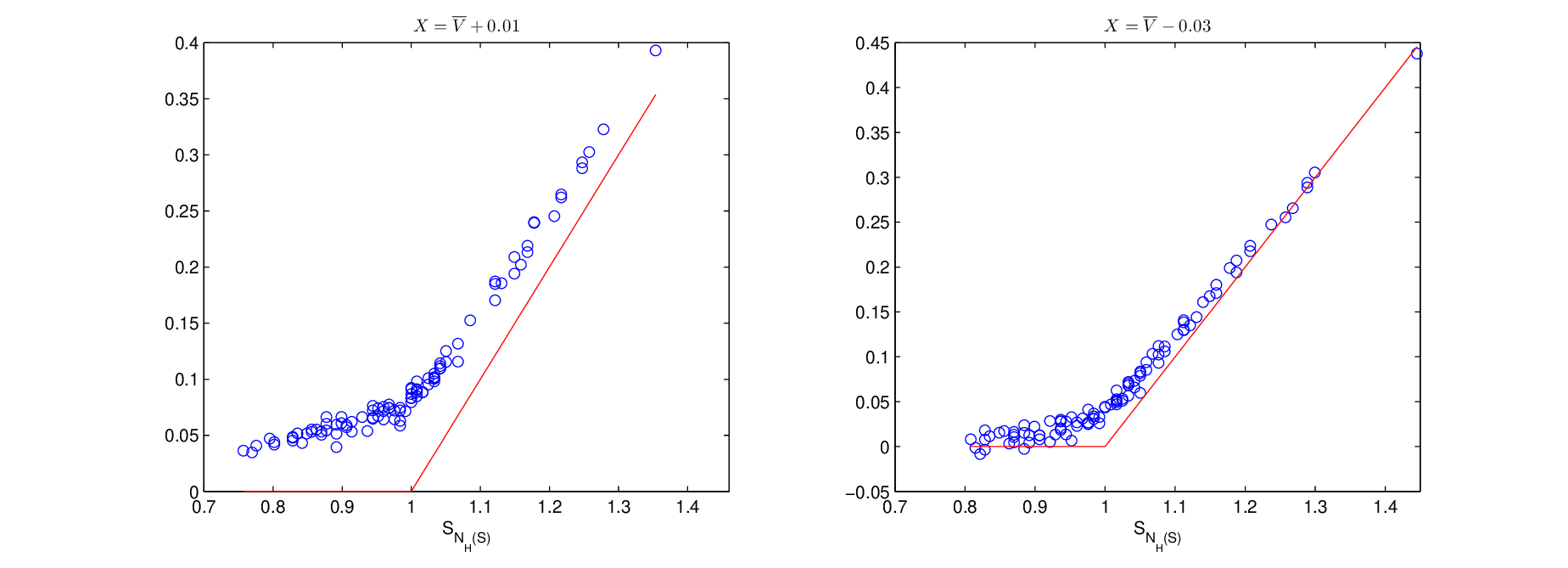}
\caption{Comparison between the hedging values for $X=\Vup(s_0,Z,\Me)+0.01$ and $X=\Vup(s_0,Z,\Me)-0.03$ and the payoff values.}
\label{fig:11}
\end{figure}

In a long position, the hedging values are given by
\begin{equation} \label{longhedge}
X - \sum_{i=0}^{N_{H^{\downarrow}}(\se)-1} H^{\downarrow}_i(\se) \Delta_iS,
\end{equation}
with $X \in \mathbb{R}$ the initial portfolio value. The underhedging portfolio $H^{\downarrow}$ is computed in a similar way than the upperhedging one $H^{\uparrow}$, but using the values which gives the lower bounds $\Udo(k,j)$ instead of the upper bounds. Figure \ref{fig:12} displays the values from equation \eqref{longhedge} with $X=\Vdo(s_0,Z,\Me)-0.01$ and $X=\Vdo(s_0,Z,\Me)+0.03$, for random trajectories with respect an European call $Z$. In this case, we can see that the values from \eqref{longhedge} underhedge the payoff value for $X=\Vdo(s_0,Z,\Me)-0.01$. For $X=\Vdo(s_0,Z,\Me)+0.03$, the values better approximate the payoff.

\begin{figure}
\centering
\includegraphics[width=0.9\textwidth]{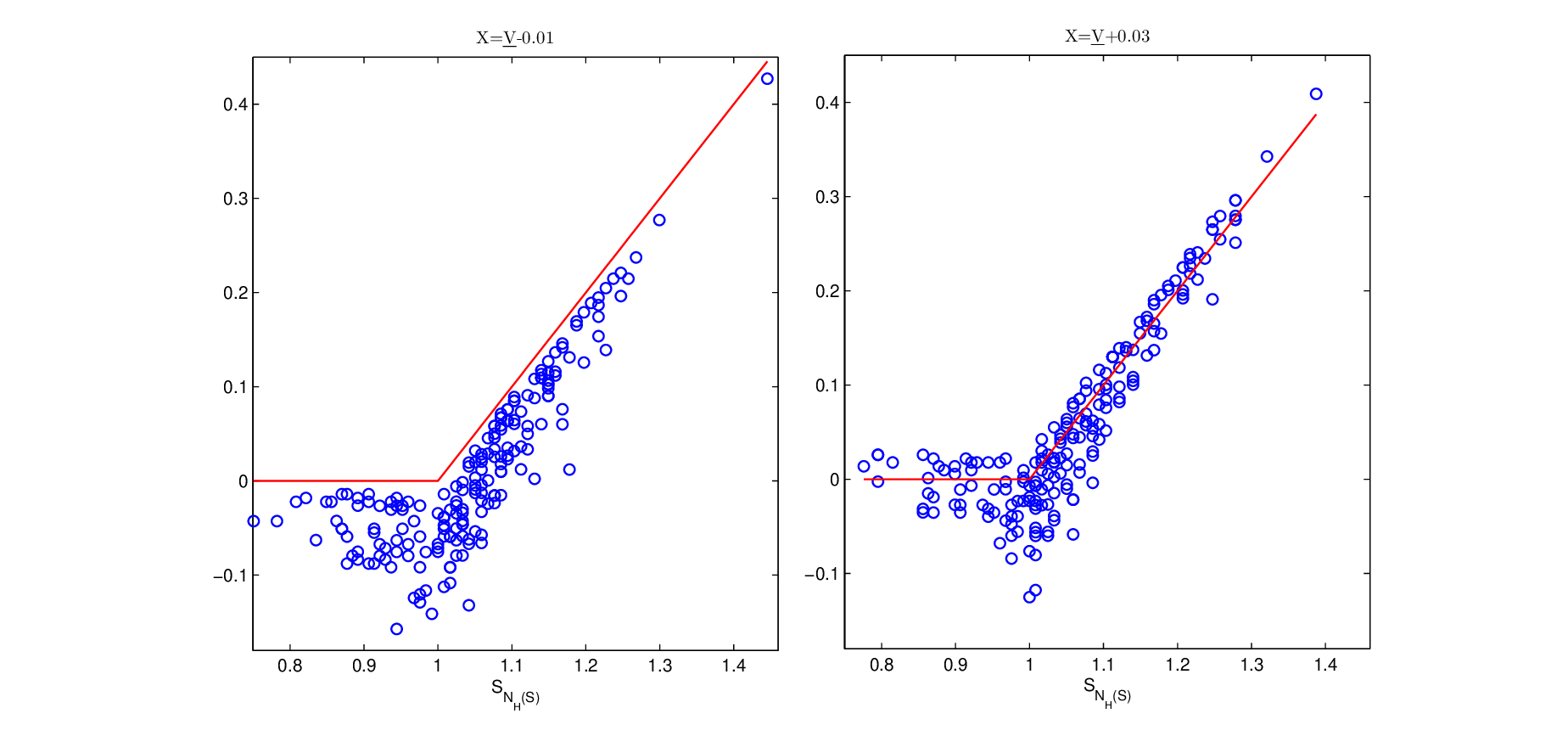}
\caption{Comparison between the hedging values for $X=\Vdo(s_0,Z,\Me)-0.01$ and $X=\Vdo(s_0,Z,\Me)+0.03$ and the payoff values.}
\label{fig:12}
\end{figure}

Finally, it is of interest to superimpose the upperhedging and lower hedging using $X = \overline{V}(s_0, Z)$ and
$X = \underline{V}(s_0, Z)$ respectively. Figure \ref{fig:hedgingA}
does this for $\mathcal{M}=\Swe(s_0, \alpha,Q) \times \He$ with $s_0=1$, $N_2=100$ and $p=3$.

\begin{figure}
\centering
\includegraphics[width=0.6\textwidth]{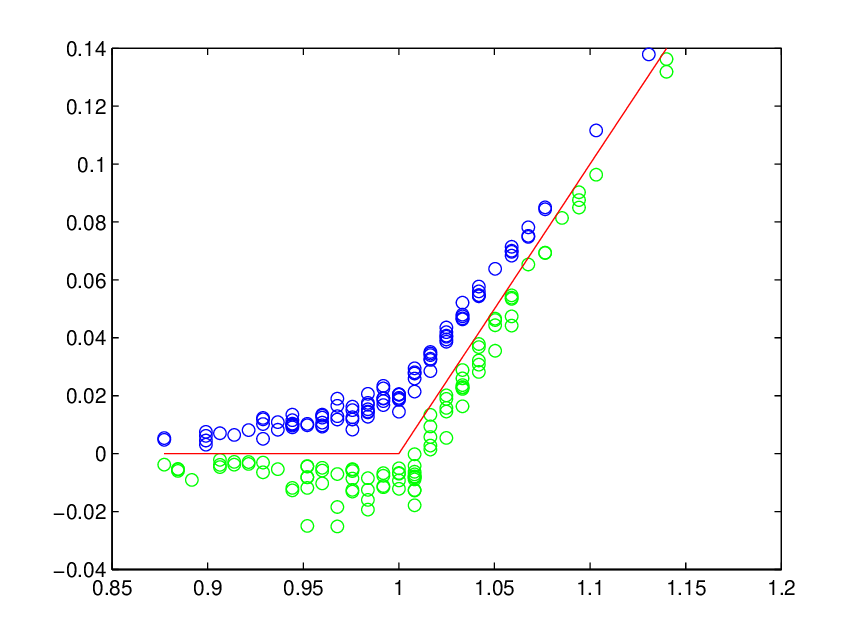}
\caption{Superhedging and underhedging comparison between the hedging values for $X=\Vup(s_0,Z,\Me)$ and $X=\underline{V}(s_0,Z,\Me)$ and the payoff values.}
\label{fig:hedgingA}
\end{figure}

\subsubsection{Effect of Variable Volatility}

This section illustrates the minmax bounds for several finite SQV markets (introduced in Section \ref{quadraticVariationCase}) related to the selection of the set $\Lambda$. Recall that $\Lambda$ gives the possible values of quadratic variation of the trajectories in the market.

We consider first markets where $\Lambda$ is a singleton set $\{n_{\theta}\}$ with $1\le \theta \le l$ and $n_{\theta} < n_{\theta+1}$. The corresponding markets are denoted by $\Me_{\theta}=\Swe(s_0, \alpha, Q_{\theta}) \times \He$, $1\le \theta \le l$, where $Q_{\theta}=\{n_{\theta} \delta^2 \}$. Figure \ref{fig:8}  shows the lower and upper bound as function of increasing values of the quadratic variation for two different options. A European call and a butterfly call option with strikes $K_1<K_2$ is defined by
\[ Z^f(X)=\left\lbrace
\begin{array}{ccc}
(X-K_1)^+ & \mbox{if} & X \le \frac{K_1+K_2}{2}\\
(K_2-X)^+ & \mbox{if} & X > \frac{K_1+K_2}{2}
\end{array}.
\right. \]
We will consider  $s_0=1$, $\alpha = 3 \cdot \sqrt{0.0067/200}$, $N_1=N_2$ and $n_{\theta}$ ranging from $25$ to $200$. So we build eight finite SQV markets $\Me_{\theta}=\Swe(s_0, \alpha, Q_{\theta}) \times \He$.

\begin{figure}
\centering
\includegraphics[width=0.9\textwidth]{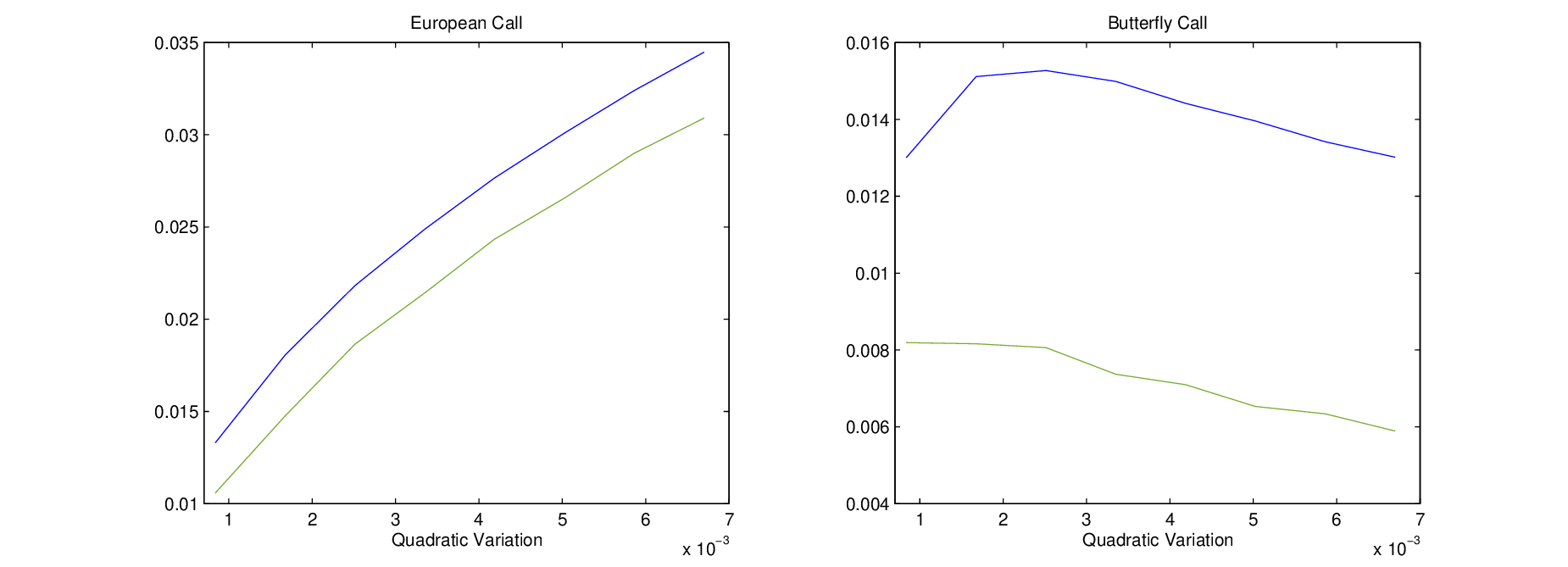}
\caption{Minmax upper and lower bounds price as a function of $v_{\theta}=n_{\theta}\delta^2$ in the $\Me_{\theta}$ for a European Call with $K=1$ and a butterfly Call with $K_1=1$ and $K_2=1.1$.}
\label{fig:8}
\end{figure}

It is observed that the bounds increase monotonically with respect to the quadratic variation for the case of an European Call but, for the case of a butterfly Call, the behavior is not monotonic. It is important to remark that the payoff of an European call is a convex function and the butterfly call is neither convex nor concave.

We now incorporate several possible quadratic variation values to the set $Q$. To this end, we build the finite SQV market $\Me_{\theta}=\Swe(s_0, \alpha, Q_{\theta}) \times \He$ where, in this case, $Q_{\theta}=\{n_1\delta^2, \dots, n_{\theta}\delta^2 \}$. Figure \ref{fig:13} shows the lower and upper bound as function of the sets $Q_{\theta}$ for a European call and a butterfly call.

\begin{figure}
\centering
\includegraphics[width=0.9\textwidth]{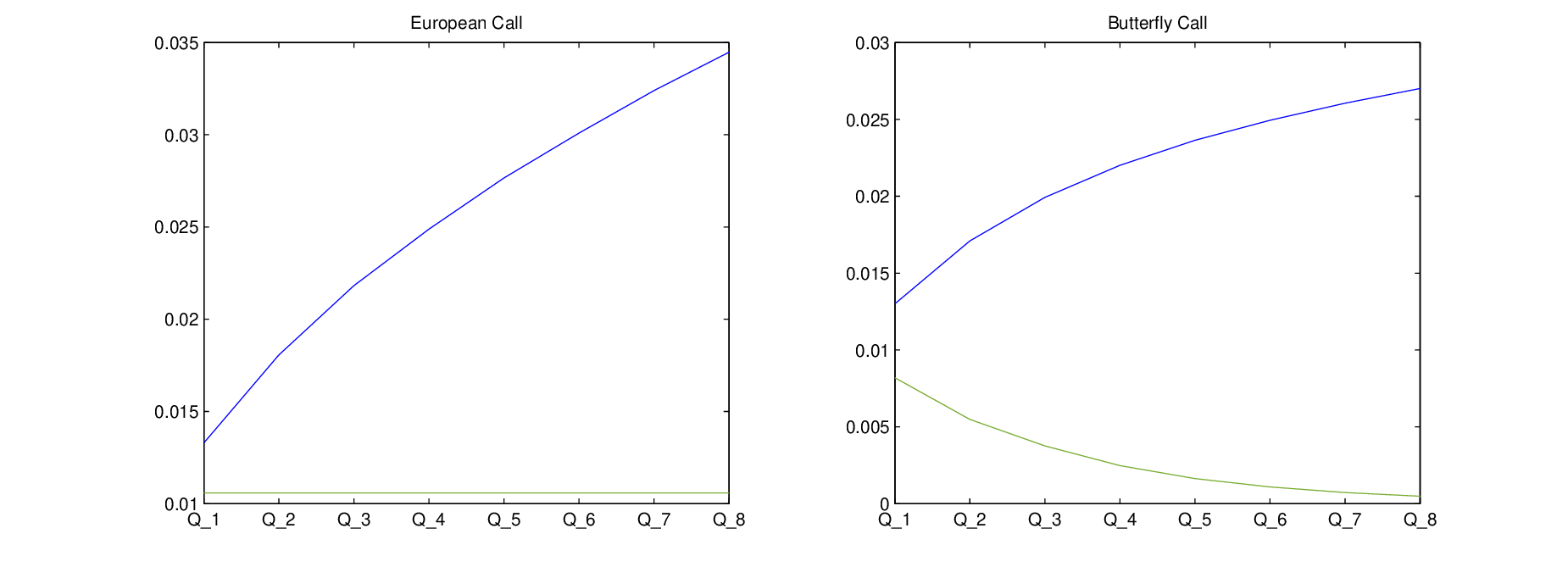}
\caption{Minmax upper and lower bounds price as a function of $Q_{\theta}=\{n_{1}\delta^2, \dots, n_{\theta}\delta^2 \}$ in the $\Me_{\theta}$ market for a European call with $k=1$ and a butterfly Call with $K_1=1$ and $K_2=1.1$.}
\label{fig:13}
\end{figure}

Note that for the European call, the upper bound graph coincide with the upper bound graph in Figure \ref{fig:8}. It means that the upper bound only depends on the maximum value of the set $Q$. Instead, the lower bound is constant for all $Q_{\theta}$ given that the lower bound only depends on the minimum value of the set $Q$. In the case of the butterfly call the upper bound increases monotonically and the lower bound decrease monotonically  as the size of  $Q_{\theta}$ increases and, so, reflecting a general feature of minmax pricing.   

\section{Conclusion}  \label{sec:conclusions}

General results are obtained to evaluate minmax bounds in an effective way and for general classes of trajectory markets assuming a bound on the number of possible trades. We perform explicit computations for the usual options, covering a new model where trajectories have different values of (sampled) quadratic variation. The numerical experiments indicate some of the phenomena that may occur in a trajectory based approach for the examples introduced. In particular, the effect of arbitrage nodes on prices is illustrated. Testing with different trajectory sets, we obtain narrower price intervals for European options. We conclude that designing suitable trajectory sets for different setups is a relevant task. The reference \cite{fleck} introduces models reflecting practical constraints with parameters estimated to market data.

\appendix
\section{Minmax Functions Results}   \label{minMaxFunctionsProofs}

This Appendix provides the main results on minmax function and the relation with the boundedness of $\overline{V}$ and $\underline{V}$. We will need the following definition.

\begin{definition}[Stopping Time] \label{stoppingtime}
Given a trajectory space $\Swe$ a {\it trajectory based stopping time} (or {\it stopping time} for short) is a function
$\nu: \Swe \rightarrow \mathbb{N}$ such that if $\se, \se' \in \Swe$ with $S_k=S'_k$ and $W_k=W'_k$ for $0 \leq k \leq \nu(\se)$ then $\nu(\se')
= \nu(\se)$.
\end{definition}

The integrability conditions, required for payoffs in a
probabilistic setting, are replaced in the proposed framework by
the, so called, minmax functions (introduced in \cite[Definition 14]{deganoI}). In what follows consider a
discrete market $\Me = \Swe \times \He$, and
a function $Z$ defined on $\Swe$.

\begin{definition}(Upper and Lower Minmax Functions) \label{minmaxFunctions}
Given a finite sequence of stopping times $(\nu_i)_{i=1}^{N}$ with
$\nu_i < \nu_{i+1}$ for $1 \le i < n$, a  real sequence
$(a_i)_{i=1}^{N}$, and $b\in \mathbb{R}$, we call $Z$ an \emph{upper
minmax function} if
\begin{equation}  \nonumber
Z(\se) \leq \sum_{i=1}^{N}a_i~\se_{\nu_i(\se)} + b,~\forall \se \in
\Swe.
\end{equation}
Similarly, $Z$ is called a \emph{lower minmax function} if
\begin{equation} \nonumber
Z(\se) \geq \sum_{i=1}^{N}a_i~\se_{\nu_i(\se)} + b,~\forall \se \in
\Swe.
\end{equation}
\end{definition}

Given a finite sequence of stopping times $(\nu_i)_{i=1}^{N}$ with
$\nu_i < \nu_{i+1}$ for $1 \le i < N$, and a  real sequence
$(a_j)_{j=1}^{N}$, set (set $\nu_0=0$ for convenience), define
\begin{equation}\label{coefficientDefinition}
 A_l(\se) = \sum_{j=i}^N a_j \quad \mbox{if} \quad \nu_{i-1}(\se) \le l < \nu_i(\se) \quad \mbox{for} \quad 1\le i
 \le N, \quad \mbox{and} \quad A_l(\se) =0 \quad \mbox{for} \quad l\ge\nu_N(\se).
\end{equation}
Also, for $H \in \He$, define the functions $H_i^{(A)}:\Swe \rightarrow \mathbb{R}$, for $\se \in
\Swe$, by
\begin{equation}
\label{defport}
H_i^{(A)}(\se) = H_i(\se)+A_i(\se)\quad \mbox{if}\quad 0\le i < \nu_N(S) \quad \mbox{with} \quad  V_{H^{(A)}}(0,S_0)=A_0 \quad \mbox{and}\quad N_{H^{(A)}}(\se) =\max\{N_H(\se),\nu_N(\se)\}.
\end{equation}

The fact that $H^{(A)}=(H_i^{(A)})_{i\ge 0}$, in the above definition,  is a portfolio on $\Swe$, for any  $H \in \He$,
is proven in the next Lemma. Observe first that, for a fixed $1 \le i < N$ and $\se\in \Swe$ we have
$a_i~S_{\nu_i(\se)} = a_i~S_0 + \sum_{l=0}^{\nu_i(\se)-1}a_i~\Delta_lS$.
Then
\begin{equation} \label{minmaxFunctionBound}
\sum_{i=1}^N a_i~S_{\nu_i(\se)}+b =
\sum_{l=0}^{\nu_N(\se)-1}A_l(\se)~\Delta_lS+ A_0~S_0 + b.
\end{equation}

\begin{lemma}
Assume $\nu_N(\se) \le M(\se)$ for each $\se \in \Swe$. For $H \in \He$, 
$H^{(A)}$ defined by \eqref{defport} is a portfolio on $\Swe$.
\label{lem:2}
\end{lemma}
\begin{proof}
It is enough to prove that the functions $A_l$, defined by (\ref{coefficientDefinition}) for $0  \le l < \nu_{N}$, are non anticipative. Hence, assume for $\se,\se'\in\Swe$, that
$S_j=S'_j$ for $0\le j \le l$ with $l < \min\{N_{H^{(A)}}(\se),N_{H^{(A)}}(\se') \}$. It follows from (\ref{coefficientDefinition}) that there exists $1\le i_0 \le N$
such that
\begin{equation} \label{coefficientIndex}
A_l(\se) = \sum_{j=i_0}^N a_j,  \quad \mbox{with}\quad \nu_{i_0-1}(\se)
\le l < \nu_{i_0}(\se).
\end{equation}
By hypothesis $S_j=S'_j$ for $0\le j \le \nu_{i_0-1}(\se)$, then
$\nu_{i_0-1}(\se)=\nu_{i_0-1}(\se')$. Also it must be $l <
\nu_{i_0}(\se')$, if not $l \ge \nu_{i_0}(\se')=\nu_{i_0}(\se)$ in
contradiction with (\ref{coefficientIndex}). Thus $A_l(\se)=A_l(\se')$.
\qed \end{proof}

Follows trivially from the above Lemma that for $\se \in \Swe$, and any $\se' \in \Swe_{(\se,k)}$,
\begin{equation}
\label{firstCoordinates}
\sum_{l=0}^{k-1}A_l(\se')~\Delta_lS' = \sum_{l=0}^{k-1}A_l(\se)~\Delta_lS.
\end{equation}

Next natural Proposition gives the key statements for the boundless
of $\Vup(Z)$ and $\Vdo(Z)$.

\begin{proposition}\label{finitenessCondition}
Let $\se \in \Swe$ be fixed, and $k \geq 0$, then
\begin{enumerate}
\item $\Vup_k(\se, Z, \Me) < \infty$ if and only if there exists $b\in \mathbb{R}$ and $H^b\in\He$ such that
\begin{equation}\label{superhedgingTrategy}
Z(\se')\le \sum_{i=k}^{N_{H^b}(\se')-1}H^b_i(\se')\Delta_iS' + b, \quad
\mbox{for all} \quad \se' \in \Swe_{(\se,k)}.
\end{equation}
In any case $\Vup_k(\se, Z, \Me) \le b$.
\item If there exists $b\in \mathbb{R}$ and $H^b\in \He$ such that
\begin{equation}\label{underhedgingTrategy}
Z(\se')\ge \sum_{i=k}^{N_{H^b}(\se')-1}H^b_i(\se')\Delta_iS' + b, \quad
\mbox{for all} \quad \se' \in \Swe_{(\se,k)},
\end{equation}
and either of the two statements below hold:
    \begin{enumerate}
    \item $\Me$ is conditionally $0$-neutral at $(\se,k)$ and for any $H \in \He$, $\tilde{H}$ defined by
    $\tilde{H}_i=H_i$ if $i\le k$ and $\tilde{H}_i=H_i-H_i^b$ if $i>k$, with  $N_{\tilde{H}}=\max\{N_{H},N_{H^b} \}$ and $V_{\tilde{H}}(s_0,0)=V_{H}(s_0,0)$, belongs to $\He$.
    \item $\Me$ is $n$-bounded such that $\Swe$ satisfies the local $0$-neutral property.
    \end{enumerate}
Then $\Vup_k(\se, Z, \Me) > -\infty$.
\end{enumerate}
\end{proposition}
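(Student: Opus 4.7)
Part 1 follows by unwinding (\ref{conditionalBounds}) in both directions. For the ``if'' direction I would plug $H=H^b$ into the inf-sup and apply (\ref{superhedgingTrategy}) to bound the bracketed expression by $b$ uniformly in $\tilde{S}\in\Se_{(S^*,k)}$, giving $\Vup_k(S^*,Z,\Me)\le b<\infty$. For the ``only if'' direction, if $\Vup_k(S^*,Z,\Me)=V<\infty$ then for any $\varepsilon>0$ the definition of the infimum supplies $H^\varepsilon\in\He$ with
\[
\sup_{\tilde{S}\in\Se_{(S^*,k)}}\Big[Z(\tilde{S})-\sum_{i\ge k}H^\varepsilon_i(\tilde{S})\Delta_i\tilde{S}\Big]\le V+\varepsilon,
\]
which is exactly (\ref{superhedgingTrategy}) with $b=V+\varepsilon$ and $H^b=H^\varepsilon$. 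Throughout I would use the liquidation convention $H_i\equiv 0$ for $i\ge N_H$ to freely replace the upper summation indices $N_H(S^*)$ and $N_{H^b}(S)$ by $+\infty$; this is what allows the two different-looking sums to be compared directly.

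For Part 2 under assumption (a), I would insert (\ref{underhedgingTrategy}) inside the inf-sup defining $\Vup_k$ to obtain
\[
\Vup_k(S^*,Z,\Me)\ge b+\inf_{H\in\He}\sup_{\tilde{S}\in\Se_{(S^*,k)}}\Big[-\sum_{i\ge k}\bigl(H_i(\tilde{S})-H^b_i(\tilde{S})\bigr)\Delta_i\tilde{S}\Big].
\]
The portfolio $\tilde{H}$ from assumption (a) realises the substitution $H\mapsto H-H^b$ (modulo a boundary adjustment at $i=k$, handled via the constancy of $H^b_k$ on $\Se_{(S^*,k)}$) and lies in $\He$, so the inner infimum is taken over the subset $\{\tilde{H}:H\in\He\}\subseteq\He$. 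Hence it is at least $\inf_{\tilde{H}\in\He}\sup_{\tilde{S}}[-\sum_{i\ge k}\tilde{H}_i\Delta_i\tilde{S}]=\Vup_k(S^*,0,\Me)=0$, the last equality being the conditional $0$-neutrality hypothesis. This yields $\Vup_k(S^*,Z,\Me)\ge b>-\infty$.

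For Part 2 under assumption (b) the substitution above is not available since $H-H^b$ need not lie in $\He$, so I would argue via the dynamic bounds. Theorem \ref{0-neutral} supplies conditional $0$-neutrality, and Proposition \ref{useful} (applied in conditional form at $(S^*,k)$) reduces the task to proving $\Uup_k(S^*,Z,\Me)\ge b$. Setting $Z^*(S)=b+\sum_{i=k}^{N_{H^b}(S)-1}H^b_i(S)\,\Delta_iS$ one has $Z\ge Z^*$ pointwise, and monotonicity of the dynamic construction in $Z$ transfers this level by level: $\Uup_j(S,Z,\Me)\ge\Uup_j(S,Z^*,\Me)$. I would then prove by backward induction on $j$ from $M(S)$ down to $k$ that
\[
\Uup_j(S,Z^*,\Me)=b+\sum_{i=k}^{j-1}H^b_i(S)\,\Delta_iS.
\]
The base case $j=M(S)$ uses $N_{H^b}(S)\le M(S)$ together with the liquidation of $H^b$. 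In the inductive step, for $S'\in\Se_{(S,j)}$ the sum $\sum_{i=k}^{j}H^b_i(S')\Delta_iS'$ splits as an $S'$-independent partial sum up to $j-1$ (by non-anticipativity) plus $H^b_j(S)\Delta_jS'$, which collapses the recursion (\ref{RangeDynamicBound}) to $\inf_{H\in\He}\sup_{S'}[(H^b_j(S)-H_j(S))\Delta_jS']$. Choosing $H=H^b$ makes this quantity $\le 0$, while local $0$-neutrality at $(S,j)$ forces each supremum to be $\ge 0$, so the infimum equals $0$ and the induction closes. Evaluating at $j=k$ gives $\Uup_k(S^*,Z^*,\Me)=b$, whence $\Vup_k(S^*,Z,\Me)\ge\Uup_k(S^*,Z,\Me)\ge b>-\infty$.

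The main obstacle I anticipate is the boundary bookkeeping at $i=k$ in Part 2(a): because the paper's $\tilde{H}$ leaves $\tilde{H}_k=H_k$ untouched, the identity $\sum_{i\ge k}(H_i-H^b_i)\Delta_i\tilde{S}=\sum_{i\ge k}\tilde{H}_i\Delta_i\tilde{S}-H^b_k(S^*)\Delta_k\tilde{S}$ leaves a residual term $H^b_k(S^*)\Delta_k\tilde{S}$ that must be reabsorbed before appealing to $\Vup_k(S^*,0,\Me)=0$; this can be done by exploiting that $H^b_k$ is constant on $\Se_{(S^*,k)}$ and performing a further shift of $\tilde{H}_k$, but it must be carried out explicitly. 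A secondary technical point, shared with Part 1, is the replacement of the summation limits $N_H(S^*)$ in (\ref{conditionalBounds}) and $N_{H^b}(S)$ in the hedging hypotheses by $+\infty$; this is routine given liquidation of portfolios but should be stated for the algebra to close.
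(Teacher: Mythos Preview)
Your approach to Part 1 and Part 2(a) matches the paper's proof essentially line for line; the paper in fact glosses over the boundary issue at $i=k$ that you correctly flag, so your treatment is if anything more careful there.

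For Part 2(b), however, the paper takes a much shorter route than yours. Rather than invoking the dynamic bounds $\Uup_j$, the paper simply observes that local $0$-neutrality is a property of $\Se$ alone, so it is inherited by the auxiliary market $\Se\times\widetilde{\He}$ where $\widetilde{\He}=\{\tilde{H}:H\in\He\}$ with $\tilde{H}$ defined exactly as in assumption~(a). Since $N_{\tilde{H}}=\max\{N_H,N_{H^b}\}\le n$, this market is still $n$-bounded, so Theorem~\ref{0-neutral} applies to it and yields conditional $0$-neutrality of $\Se\times\widetilde{\He}$ at $(S^*,k)$. The chain of inequalities you already established for~(a) then goes through verbatim, the final infimum now being over $\widetilde{\He}$, and one concludes $\Vup_k(S^*,Z,\Me)\ge b$. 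In effect the paper reduces (b) to (a) in two lines by enlarging the portfolio set rather than by changing the analytic machinery.

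Your dynamic-bounds argument is not wrong in spirit, but it is heavier and carries some loose ends. You invoke a conditional version of Proposition~\ref{useful}, which is only stated for $k=0$; it can be obtained via Definition~\ref{inductionModel} and Lemma~\ref{inductionModelproof}, but this needs to be said. More importantly, the backward induction must be run uniformly over all $S$ from $j=n$ down to $k$, not ``from $M(S)$ down to $k$'' per trajectory, because the recursion at level $j$ involves $\Uup_{j+1}(S',\cdot)$ for $S'\in\Se_{(S,j)}$ with possibly different $M(S')$; the cases $j\ge M(S')$ (where $\Uup_{j}=0$ or $Z^*(S')$ by definition) must be handled explicitly for the step to close. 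These are fixable technicalities, but the paper's enlargement trick sidesteps all of them.
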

\begin{proof}
Proof of part (1). Since $\Vup_k(\se, Z, \Me) < \infty$, there exist $H^b \in \He$ and $b \in \mathbb{R}$
such that
\[ \sup_{\se' \in \Swe_{(\se,k)}}[Z(\se')- \sum_{i=k}^{N_{H^b}(\se')}H^b_i(\se')\Delta_iS'] \le b. \]
From where (\ref{superhedgingTrategy}) holds. Conversely, if
(\ref{superhedgingTrategy}) is valid, it is clear that
\[
\overline{V}_k(\se, Z, \Me) \le \sup_{\se' \in
\Swe_{(\se,k)}}[Z(\se')- \sum_{i=k}^{N_{H^b}(\se')}H^b_i(\se')\Delta_iS']
\le b.
\]

Proof of part (2): $\tilde{H}$ is a non-anticipative function, then by the general hypothesis
\begin{eqnarray}\nonumber \label{lowerBoundV}
\Vup_k(\se, Z, \Me)&\ge& \inf_{H \in \He} \sup_{\se' \in \Swe_{(\se,k)}}[\sum_{i=k}^{N_H^b(\se')-1}H^b_i(\se')\,\Delta_iS' - \sum_{i=k}^{N_H(\se')-1}H_i(\se')\,\Delta_iS'] +b = \\
&=& \inf_{H \in \He} \sup_{\se' \in \Swe_{(\se,k)}}[-
\sum_{i=k}^{N_{\tilde{H}}(\se')-1}\tilde{H}_i(\se')\,\Delta_iS'] +b
\end{eqnarray}
Using now the condidtions in $(a)$, it follows that
\begin{eqnarray*}
\Vup_k(\se, Z, \Me)&\ge& \inf_{H \in \He} \sup_{\se'\in \Swe_{(\se,k)}}[- \sum_{i=k}^{N_H(\se')-1}H_i(\se')\,\Delta_iS'] +b = b.
\end{eqnarray*}
For the hypothesis $(b)$, consider the set of portfolios
$\mathcal{\widetilde{H}}$ consisting of all $\tilde{H}$ with $H \in
\He$ defined in $(a)$, then the market
$\Swe \times \mathcal{\widetilde{H}}$ is $n$-bounded and local $0$-neutral, and
then, by Proposition
\ref{0-neutral}, conditionally 0-neutral. Thus the right hand side of
\eqref{lowerBoundV} is equal to $b$.
\qed \end{proof}

Proposition \ref{finitenessCondition} holds in a more general scenario. The $n$-bounded condition in the second part can be replaced by the initially bounded condition defined in \cite{deganoI}, as follows.

\begin{definition}\label{initiallyBoundedDef}
Given a discrete  market $\Me=\Swe\times\He$ and $H \in \mathcal{H}$; we will call $N_H$ \emph{initially bounded} if there exists a bounded function $\rho:\Swe \rightarrow\mathbb{N}$ (which may depend on $H$) such that 
for all $\se \in \Swe$:
\begin{equation}  \label{initiallyBounded}
N_H  \mbox{ is bounded on } \Swe_{(\se, \rho(S))}.
\end{equation}
\end{definition}

Under this hypothesis, Theorem \ref{0-neutral} keep holding and then, we could formulate the next Proposition in this terms. But as the present work focus in $n$-bounded markets, we present the proof of Proposition \ref{finiteBoundsFromMinMaxFunctions} for this kind of markets. Observe that if $N_H$ is bounded, it is initially bounded, $\rho=N_H$ satisfies the definition.

\begin{proposition} \label{finiteBoundsFromMinMaxFunctions}
Let $\Me = \Swe \times \He$ be a discrete
market and $Z$ a function defined on $\Swe$. Consider a
finite sequence of stopping times $(\nu_i)_{i=1}^{N}$ with $\nu_i
< \nu_{i+1}$ for $1 \le i < N$ with $\nu_N(\se) \le M(\se)$ for all $\se \in \Swe$, a  real sequence
$(a_j)_{j=1}^{N}$, and $b\in \mathbb{R}$. Fix $\se \in
\Swe$ and an integer $k \geq 0$.  Then the following
statements are valid:
\begin{enumerate}
\item If $Z$ is an upper minmax function and $0^{(A^1)} \in \He$, then:
\begin{equation} \nonumber
\overline{V}_k(\se, Z, \Me) \leq A^1_0~s_0 + B^1.
\end{equation}
\item If $Z$ is a lower minmax function and $0^{(-A^2)} \in \He$, then:
\begin{equation} \nonumber
\underline{V}_k(\se, Z, \Me) \geq A^2_0~s_0 + B^2.
\end{equation}
\noindent Furthermore:
\item If $Z$ is a lower minmax function and either of the two statements below hold:
    \begin{enumerate}
    \item $\Me$ is conditionally $0$-neutral at $(\se,k)$ and for any $H \in \He$, $H^{(-A^3)}\in \He$.
    \item $\Me$ is $n$-bounded such that $\Swe$ satisfies the local $0$-neutral property and $\nu_N$ is bounded.
    \end{enumerate}
    Then:
    \begin{equation} \label{boundFromBelow}
    \overline{V}_k(\se, Z, \Me) \geq  A^3_0~s_0 + B^3.
    \end{equation}
\item If $Z$ is an upper minmax function and either of the two statements below hold:
     \begin{enumerate}
     \item $\Me$ is conditionally $0$-neutral at $(\se,k)$ and for any $H \in \He$, $H^{(A^4)}\in \He$.
     \item $\Me$ is $n$-bounded such that $\Swe$ satisfies the local $0$-neutral property and $\nu_N$ is bounded.
     \end{enumerate}
     Then:
     \begin{equation} \nonumber
      \underline{V}_k(\se, Z, \Me) \le A^4_0~s_0 + B^4.
     \end{equation}
\end{enumerate}
Where for $1\le i\le 4$ the sequences $(A^i_l)_{l\ge 0}$ are given by (\ref{coefficientDefinition}), and $B^i=
\sum_{l=0}^{k-1}A^i_l(\se)~\Delta_lS+b$ respectively, for each item.
\end{proposition}
\begin{proof}
Fix $\se' \in\Swe_{(\se,k)}$.
Proof of item (1). By (\ref{minmaxFunctionBound}) and \eqref{firstCoordinates}
\[ Z(\se') \le \sum_{i=0}^{k-1}A^1_i(\se)\Delta_iS + \sum_{i=k}^{\nu_{N}(\se')-1}A^1_i(\se')\Delta_iS' +  A^1_0s_0 + b = \sum_{i=k}^{N_{0^{(A^1)}}(\se')-1}0^{(A^1)}_i(\se')\Delta_iS' +  A^1_0s_0 + B^1.\]
Since $0^{(A^1)} \in \He$, Proposition
\ref{finitenessCondition}, part 1, gives
\[ \overline{V}_k(\se, Z, \Me) \le A^1_0s_0 + B^1. \]

Proof of item (2). From hypothesis $-Z(\se') \leq
\sum_{i=1}^{N}-a_iS'_{\nu_i(\se')} - b$, and $0^{(-A^2)} \in \He$,
it follows from (1) that
\[\underline{V}_k(\se, Z, \Me) = -\overline{V}_k(\se, -Z, \Me) \geq A^2_0s_0 + B^2.\]

Proof of item (3). For any $H \in \He$ it follows from
\eqref{minmaxFunctionBound}, and similar computation as in the proof
of part (1), that
\begin{eqnarray}
\nonumber Z(\se') - \sum_{i=k}^{N_H(\se')-1}H_i(\se')\,\Delta_iS' &\ge& \sum_{i=k}^{\nu_{N}(\se')-1}A^3_i(\se')\,\Delta_iS' ~ - \sum_{i=k}^{N_H(\se')-1}H_i(\se')\,\Delta_iS' +  A^3_0s_0 + B = \\
&=& - \sum_{i=k}^{N_{H^{(-A^3)}}(\se')-1}H_i^{(-A^3)}(\se)\,\Delta_iS' +
A^3_0s_0 + B^3. \label{eqn:1}
\end{eqnarray}
Under assumption (a) in item (3), we know that $H^{(-A^3)} \in
\He$, therefore by $0$-conditional property,
\begin{equation} \nonumber
\sup_{\se' \in \Swe_{(\se, k)}}[Z(\se') - \sum_{i=k}^{N_H(\se')-1}H_i(\se')\,\Delta_iS'] \ge A^3_0s_0 + B^3 + \sup_{\se' \in \Swe_{(\se, k)}}[ - \sum_{i=k}^{N_{H^{(-A^3)}}(\se')-1}H_i^{(-A)}(\se')\,\Delta_iS']
\ge
\end{equation}
\begin{equation} \nonumber
A^3_0s_0 + B^3 + \inf_{\tilde{H} \in \He}\sup_{\se' \in \Swe_{(\se, k)}}[ -\sum_{i=k}^{N_{\tilde{H}}(\se')-1}\tilde{H}_i(\se')\,\Delta_iS']
= A^3_0s_0 + B^3,
\end{equation}
Assume now $(b)$ in item (3) and $K \in \mathbb{R}$ such that $\nu_N(\se) \le K$ for all $\se \in \Swe$. Let $\tilde{\He}$ be any set containing the portfolios $H$ and $H^{(-A^3)}$ for each $H \in
\He$. Then, the market $\Swe \times \mathcal{\tilde{H}}$ is $N_0$-bounded, where $N_0=\max\{n,K\}$. Thus Theorem
\ref{0-neutral} shows that $\Swe \times \tilde{\He}$ is conditionally $0$-neutral at all
nodes, in particular at $(\se,k)$; therefore, taking the
supremum over $\Swe_{(\se, k)}$ in both sides of
\eqref{eqn:1}, evaluating the infimum over $H \in \He$ in
the right hand side, and using the conditional $0$-neutral property
of $\Swe \times \tilde{\He}$ we obtain
(\ref{boundFromBelow}).

The proof of item (4) follows from (3) in a similar way than (2)
from (1).
\qed \end{proof}

\section{Some Technical Results}  \label{sec:technicalresults}
Here are located some definitions and auxiliary lemmas required for results in subsections \ref{sec:fullPortfolios} and \ref{sec:u-completePortfolios}. Recall that at this section we assume $N_H(\se)=M(\se)$ for all $H \in \He$.

\subsection{Auxiliary Results for Subsection \ref{sec:fullPortfolios}}

\begin{definition}\label{inductionModel}
\noindent Consider a discrete market model $\Me =
\Swe \times \He$, and a function $Z$ defined on
$\Swe$. Fix $k\ge 0$, and $\se^k \in\Swe$ such that $M(\se^k)>k$. Set
$\hat{\emph s}_0=S^k_k$ and $\hat{\emph w}_0=W^k_k$. For $\se=\{S_i,W_i,m\}_{i\ge 0} \in \Se_{(\se^k,k)}$ and $H \in \He$ define
\begin{itemize}
\item $\hat{S}_i=S_{i+k}$, $\hat{W}=W_{i+k}$ and $\hat{m}=m-k$. Then $\hat{\se} \equiv (\hat{S},\hat{W},\hat{m})$.
\item $\hat{H}\equiv (\hat{H}_{i})_{i\geq 0}$ where $\hat{H}_i(\hat{\se})\equiv H_{i+k}(\se)$ and $V_{\hat{H}}(0,\hat{S}_0) = V_{H}(k,\se^k)$ (recall $N_{\hat{H}}=\hat{m}$).
\end{itemize}
Also define \[\widehat{\Swe} \equiv \{\hat{\se}: \se \in
\Se_{(\se^k,k)}\},\quad \mathcal{\widehat{H}}\equiv\{\hat{H}:
H\in\He\},\quad \mathcal{\widehat{M}}_k \equiv
\widehat{\Swe} \times \mathcal{\widehat{H}},\] and for any
$\hat{\se}\in \widehat{\Swe}$,
\[\hat{Z}(\hat{\se})\equiv Z(\se). \]
\end{definition}

\begin{lemma}\label{inductionModelproof}
Under the conditions of Definition \ref{inductionModel}, for any $k \ge 1$ and $\se^k \in\Se$ with $M(\se^k)>k$,
\begin{enumerate}
\item  $\mathcal{\widehat{M}}_k\equiv \widehat{\Swe} \times \mathcal{\widehat{H}}$ is a discrete market model, with initial value $\hat{\emph s}_0=S^k_k$ and $\hat{\emph w}_0=W^k_k$. Moreover it is $n$-bounded if $\Me$ is $n+k$-bounded.
\item Assuming $\Me$ is $n+k$-bounded, for any $\se \in \Swe_{(\se^k,k)}$,
\[\overline{U}_i(\hat{\se}, \hat{Z}, \mathcal{\widehat{M}}_k)= \overline{U}_{i+k}(\se, Z, \Me)\mbox{ for } 0\le i\le n.\]
 \item $\Vup_0(\hat{S}_0,  \hat{Z}, \mathcal{\widehat{M}}_k) = \Vup_k(S^k, Z, \Me)$.
\end{enumerate}
\end{lemma}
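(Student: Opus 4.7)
For part (1), I would verify in turn the defining properties of a discrete market model for $\mathcal{\widehat{M}}_k = \widehat{\Se}\times\widehat{\He}$. Non-anticipativity of $\hat{H}$ follows from that of $H$: if $\hat{S},\hat{S}'\in\widehat{\Se}$ agree on coordinates $0,\dots,i$, then the corresponding $S,S'\in\Se_{(S^k,k)}$ agree with $S^k$ on $0\le j\le k$ and with each other on $k\le j\le k+i$, hence on $0\le j\le k+i$, so $H_{i+k}(S)=H_{i+k}(S')$. The self-financing identity for $\hat{H}$ at index $i$ is the self-financing identity for $H$ at index $i+k$, once $\hat{B}_i(\hat{S})=B_{i+k}(S)$ is imposed consistently with $V_{\hat{H}}(0,\hat{s}_0)=V_H(k,S^k)$. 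The zero portfolio belongs to $\widehat{\He}$ because it does to $\He$, and liquidation at $N_{\hat{H}}$ is immediate from $H_l(S)=0$ for $l\ge N_H(S)$. Finally, when $\Me$ is $(n+k)$-bounded, $N_H(S)\le n+k$ for every $(H,S)$, so $N_{\hat{H}}(\hat{S})\le n$ (taking $n\ge 1$ to absorb the convention $N_{\hat{H}}=1$).

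For part (2), I proceed by backward induction on $i$ from $n$ down to $0$. Define $\hat{M}(\hat{S})=\sup_{\hat{H}\in\widehat{\He}}N_{\hat{H}}(\hat{S})$. From the definition of $N_{\hat{H}}$ one checks $\hat{M}(\hat{S})=\max(M(S)-k,1)$, so $\hat{M}(\hat{S})+k\ge M(S)$ and $i\ge \hat{M}(\hat{S})$ implies $i+k\ge M(S)$. The base cases $i\ge \hat{M}(\hat{S})$ therefore give $\Uup_i(\hat{S},\hat{Z},\mathcal{\widehat{M}}_k)=\hat{Z}(\hat{S})=Z(S)=\Uup_{i+k}(S,Z,\Me)$ at equality and $0$ above, by the definition of the dynamic bounds.

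For the inductive step with $i<\hat{M}(\hat{S})$, the key observation is the bijection $\tilde{S}\leftrightarrow \hat{\tilde{S}}$ between $\Se_{(S,i+k)}$ and $\widehat{\Se}_{(\hat{S},i)}$. Indeed $\tilde{S}\in\Se_{(S,i+k)}$ iff $\tilde{S}\in\Se_{(S^k,k)}$ and $\tilde{S}_l=S_l$ for $k\le l\le i+k$, which translates directly to $\hat{\tilde{S}}\in\widehat{\Se}$ with $\hat{\tilde{S}}_j=\hat{S}_j$ for $0\le j\le i$. Under this bijection $\Delta_i\hat{\tilde{S}}=\Delta_{i+k}\tilde{S}$, the correspondence $\hat{H}\in\widehat{\He}\leftrightarrow H\in\He$ gives $\hat{H}_i(\hat{S})=H_{i+k}(S)$, and the inductive hypothesis yields $\Uup_{i+1}(\hat{\tilde{S}},\hat{Z},\mathcal{\widehat{M}}_k)=\Uup_{i+1+k}(\tilde{S},Z,\Me)$. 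Substituting into
\[
\Uup_i(\hat{S},\hat{Z},\mathcal{\widehat{M}}_k)=\inf_{\hat{H}\in\widehat{\He}}\sup_{\hat{\tilde{S}}\in\widehat{\Se}_{(\hat{S},i)}}[\Uup_{i+1}(\hat{\tilde{S}},\hat{Z},\mathcal{\widehat{M}}_k)-\hat{H}_i(\hat{S})\Delta_i\hat{\tilde{S}}]
\]
and noting that $\widehat{\He}=\{\hat H:H\in\He\}$ makes the infimum over $\hat H$ equivalent to the infimum over $H$ at the single coordinate $H_{i+k}(S)$, the right-hand side becomes exactly the definition of $\Uup_{i+k}(S,Z,\Me)$.

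The main obstacle is the careful bookkeeping of the identification $\widehat{\He}\leftrightarrow\He$ so that the infimum that defines $\Uup_i(\hat{S},\hat{Z},\mathcal{\widehat{M}}_k)$ really ranges over the same set of values of $H_{i+k}(S)$ that defines $\Uup_{i+k}(S,Z,\Me)$; this is built into Definition \ref{inductionModel} by construction. The secondary technical annoyance is the edge case $M(S)\le k$, which is absorbed into the convention $N_{\hat{H}}=1$ so that the induction runs uniformly without a separate argument.
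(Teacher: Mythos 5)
Your proposal is correct and follows essentially the same route as the paper: part (1) by verifying non-anticipativity, liquidation and boundedness of the hatted portfolios (you are in fact slightly more explicit than the paper about why agreement on the first $k$ coordinates also holds), and part (2) by backward induction using the identification of $\widehat{\Se}_{(\hat S,i)}$ with $\Se_{(S,i+k)}$ and of the infimum over $\widehat{\He}$ with that over $\He$. The edge case $M(S)\le k$ is treated with the same level of (in)formality as in the paper's own argument, so nothing further is needed.
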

\begin{proof} By definition, $\widehat{\Swe}$ consist of sequences in $\mathbb{R}^{N} \times \mathbb{R}^{N} \times \mathbb{R}$, with $\hat{S}_0=S_k=S^k_k=\hat{s}_0$ and $\hat{W}_0=W_k=W^k_k=\hat{w}_0$ for any $\hat{\se}\in \widehat{\Swe}$. $\mathcal{\widehat{H}}$ is a family of sequences of functions
$(\hat{H})_{i\ge 0}$ with $\hat{H}_i: \widehat{\Swe} \rightarrow \mathbb{R}$. Lets see $\hat{H}$ is non-anticipative. Set $\hat{\se},\hat{\se'} \in \widehat{\Swe}$ such that $\hat{S'}_j=\hat{S}_j$ and $\hat{W'}_j=\hat{W}_j$, for $0\le j\le
i$ with $i<\min\{N_{\hat{H}}(\hat{\se}),N_{\hat{H}}(\hat{\se'})\}=\min\{M(\hat{\se}),M(\hat{\se'})\}$. Then by definition $S'_{j}=S_j$ and $W'_j=W_j$, for $0\le j\le i+k$ with $i+k<\min\{M(\se),M(\se')\}$. Therefore
\[ \hat{H}_i(\hat{\se'})=H_{i+k}(\se')=H_{i+k}(\se)=\hat{H}_i(\hat{\se}) \]
since $H$ is non-anticipative. Thus $\mathcal{\widehat{H}}$ is a set of portfolios on
$\widehat{\Swe}$. Furthermore, if $\Me$ is $n+k$-bounded, for each $\hat{\se} \in \widehat{\Swe}$, we have $M(\hat{\se})=M(\se)-k<n+k-k=n$. This proves $(1)$.

For $(2)$, we proceed by induction backwards over $i$. Let $i=n$ and $\hat{\se} \in \hat{\Swe}$, then $M(\hat{\se}) \le n$ since $\hat{\Me}_k$ is $n$-bounded. If $n = M(\hat{\se})$, then $n+k = M(\se)$ and
\[ \overline{U}_{n}(\hat{\se}, \hat{Z}, \mathcal{\widehat{M}}_k)= \hat{Z}(\hat{\se}) = Z(\se) = \overline{U}_{n+k}(\se, Z, \Me).\]
But if $M(\hat{\se})<n$, then $M(\se)<n+k$ and $\overline{U}_n(\hat{\se}, \hat{Z}, \mathcal{\widehat{M}}_k)= 0 = \overline{U}_{n+k}(\se, Z, \Me)$. Therefore $\overline{U}_n(\hat{\se}, \hat{Z}, \mathcal{\widehat{M}}_k)= \overline{U}_{n+k}(\se, Z, \Me)$.
Now assume $(2)$ is valid for $0<i\le n$. Set $\hat{\se} \in \mathcal{\widehat{\Swe}}$ and suppose first $M(\hat{\se}) \le i-1$. Similar analysis shows that $\overline{U}_{i-1}(\hat{\se}, \hat{Z}, \mathcal{\widehat{M}}_k)= \overline{U}_{i+k-1}(\se, Z, \Me)$. Suppose now $M(\hat{\se})>i-1$, then
\begin{eqnarray*}
\overline{U}_{i-1}(\hat{\se}, \hat{Z}, \mathcal{\widehat{M}}_k) & = & \inf_{\hat{H} \in \mathcal{\widehat{H}}} \sup_{\hat{\se'} \in \widehat{\Swe}_{(\hat{\se},i-1)}} ~[\overline{U}_{i}(\hat{\se'}, \hat{Z}, \mathcal{\widehat{M}}_k) - \hat{H}_{i-1}(\hat{\se}) \Delta_{i-1}\hat{\se'}] \\
& = & \inf_{H\in \He} \sup_{\se' \in \Se_{(\se,i+k-1)}} ~~[\overline{U}_{i+k}(\se', Z, \Me) - H_{i+k-1}(\se) \Delta_{i+k-1}\se'] =\\
& = & \overline{U}_{i+k-1}(\se, Z, \Me).
\end{eqnarray*}
by inductive hypothesis and Definition \ref{inductionModel}. Then we get $(2)$.

Now we will prove $(3)$. Since $\widehat{\Swe}=\Se_{(\se^k,k)}$, it follows that
\begin{eqnarray*}
\Vup(\hat{S}_0, \hat{Z}, \mathcal{\widehat{M}}_k) & = & \inf_{\hat{H} \in \mathcal{\widehat{H}}} \sup_{\hat{\se'} \in \widehat{\Swe}}[\hat{Z}(\hat{\se})-\sum_{i=0}^{M(\hat{\se'})-1}\hat{H}_i(\hat{\se'})\Delta_i\hat{\se'}] = \\
& = & \inf_{H \in \mathcal{H}} \sup_{\se' \in \Swe_{(\se^k,k)}}[Z(\se')-\sum_{i=0}^{M(\se')-k-1}H_{i+k}(\se')\Delta_{i+k}\se'] = \\
& = & \inf_{H \in \mathcal{H}} \sup_{\se' \in \Swe_{(\se^k,k)}}[Z(\se')-\sum_{i=k}^{M(\se')-1}H_i(\se')\Delta_i\se'] = \Vup(\se^k,Z,\Me).
\end{eqnarray*}
\qed \end{proof}

\begin{lemma}\label{FullinductionModelproof} Consider the $n$-bounded market
$\mathcal{\widehat{M}}_1\equiv \mathcal{\widehat{S}}\times
\mathcal{\widehat{H}}$, given in definition \ref{inductionModel},
for $k=1$ and some $S^1 \in\Se$ in an $(n+1)$-bounded market
$\Me = \Se\times \He$. 
Then $\mathcal{\widehat{H}}$ is {\small FULL} if so is also
$\He$.
\end{lemma}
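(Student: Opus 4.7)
The plan is to unfold Definition~\ref{full} directly in both markets and translate between them via the bijection $S \leftrightarrow \hat{S}$ that underlies Definition~\ref{inductionModel}. Concretely, fix $k \geq 0$, $\hat{S}^* \in \mathcal{\widehat{S}}$, $j \geq k$, and a function $\hat{h}: \mathcal{\widehat{S}}_{(\hat{S}^*, k)} \to \hat{I}^j_{\mathcal{\widehat{S}}_{(\hat{S}^*, k)}}$ which is non-anticipative with respect to the first $j$ coordinates of its argument. I need to produce $\hat{H} \in \mathcal{\widehat{H}}$ whose $j$-th component agrees with $\hat{h}$ on $\mathcal{\widehat{S}}_{(\hat{S}^*, k)}$.

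First, pick $S^* \in \Se_{(S^1, 1)}$ with $\hat{S}^*_i = S^*_{i+1}$ for $i \geq 0$. The map $S \mapsto \hat{S}$ is a bijection between $\Se_{(S^*, k+1)}$ and $\mathcal{\widehat{S}}_{(\hat{S}^*, k)}$, since $S_{i+1} = S^*_{i+1} = \hat{S}^*_i$ for $0 \leq i \leq k$ iff $\hat{S}_i = \hat{S}^*_i$ for $0 \leq i \leq k$ (the coincidence at the $0$-th coordinate is automatic, $S_0 = s_0$). Define $h : \Se_{(S^*, k+1)} \to \mathbb{R}$ by $h(S) \equiv \hat{h}(\hat{S})$. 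If $S, S' \in \Se_{(S^*, k+1)}$ agree on the first $j+1$ coordinates, then $\hat{S}, \hat{S}'$ agree on the first $j$ coordinates, so by hypothesis on $\hat{h}$ we get $h(S) = h(S')$, i.e.\ $h$ is non-anticipative with respect to the first $j+1$ coordinates.

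Next, I claim $\hat{I}^j_{\mathcal{\widehat{S}}_{(\hat{S}^*, k)}} = I^{j+1}_{\Se_{(S^*, k+1)}}$. Indeed, $u \in \hat{I}^j_{\mathcal{\widehat{S}}_{(\hat{S}^*, k)}}$ iff there exist $\tilde{H} \in \He$ and $\tilde{S} \in \Se_{(S^1, 1)}$ with $\hat{\tilde{S}} \in \mathcal{\widehat{S}}_{(\hat{S}^*, k)}$ and $\hat{\tilde{H}}_j(\hat{\tilde{S}}) = u$; by the very definition $\hat{\tilde{H}}_j(\hat{\tilde{S}}) = \tilde{H}_{j+1}(\tilde{S})$, and $\hat{\tilde{S}} \in \mathcal{\widehat{S}}_{(\hat{S}^*, k)}$ is equivalent to $\tilde{S} \in \Se_{(S^*, k+1)}$ by the bijection above. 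So the two range sets coincide, and therefore $h$ takes values in $I^{j+1}_{\Se_{(S^*, k+1)}}$.

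Now I invoke the hypothesis that $\He$ is \textsmall{FULL}: since $h$ is a function on $\Se_{(S^*, k+1)}$, non-anticipative with respect to the first $j+1$ coordinates, and taking values in $I^{j+1}_{\Se_{(S^*, k+1)}}$, there exists $H \in \He$ with $H_{j+1}|_{\Se_{(S^*, k+1)}} = h$. The associated $\hat{H} \in \mathcal{\widehat{H}}$ given by Definition~\ref{inductionModel} then satisfies, for every $\hat{S} \in \mathcal{\widehat{S}}_{(\hat{S}^*, k)}$, $\hat{H}_j(\hat{S}) = H_{j+1}(S) = h(S) = \hat{h}(\hat{S})$, as required. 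The main thing to be careful about is the index shift (the FULL property in $\mathcal{\widehat{M}}_1$ at level $(k,j)$ must be matched to FULLness of $\He$ at level $(k+1, j+1)$) and verifying that the range sets transfer correctly under this shift; neither is delicate, but both must be done explicitly since the definition of FULL pins down the range set as well as the domain.
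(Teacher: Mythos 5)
Your proof is correct and follows essentially the same route as the paper's: identify $\mathcal{\widehat{S}}_{(\hat{S}^*,k)}$ with $\Se_{(S^*,k+1)}$ via the shift $S\mapsto\hat{S}$, transfer non-anticipativity and the range sets under the index shift $j\mapsto j+1$, and then invoke FULLness of $\He$ to realize the shifted function as $H_{j+1}$ of some $H\in\He$, so that $\hat{H}_j=\hat{h}$. You are in fact slightly more explicit than the paper (proving equality, not just inclusion, of the range sets and writing the shifted index $I^{j+1}_{\Se_{(S^*,k+1)}}$ cleanly), but the underlying argument is the same.
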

\begin{proof} Assume $\He$ is {\small
FULL}. Let $1\le k\le n-1$, $\hat{H'}\in \mathcal{\widehat{H}}$, and
$\hat{S'}\in \mathcal{\widehat{S}}$. We are going to prove that for
$~k\le j\le n-1$, any function
\[h:\mathcal
{\widehat{S}}_{(\hat{S'},k)} \rightarrow I^j_{ \widehat{\Se}_{(\hat{S'},k)} }\] non-anticipative with respect to $j$, is the $j$-coordinate of a portfolio $\hat{H}\in \mathcal{\widehat{H}}$.
For it, we will find $H\in \He$ such that $\hat{H}_{j}=h$ on
$\widehat{\Se}_{(\hat{S'},k)}$.

\vspace{.1in}\noindent We need to show that $\hat{S}\in \mathcal
{\widehat{S}}_{(\hat{S'},k)}$ if and only if $S\in \mathcal
{S}_{(S',k+1)}$. Let $0\le i\le k$, and $\hat{S}\in \mathcal
{\widehat{S}}_{(\hat{S'},k)}$,  then

$S_{i+1} = \hat{S}_i = \hat{S'}_i = S'_{i+1}.$ On the other hand
$\hat{S}_i = S_{i+1} = S'_{i+1}= \hat{S'}_i.$

\vspace{.1in}\noindent If $\hat{H}\in\mathcal{\widehat{H}}$, and
$\hat{S}\in \mathcal {\widehat{S}}$, $\hat{H}_{j}(\hat{S})=
H_{j+1}(S)$, it means that $I^j_{ \widehat{\Se}_{(\hat{S'},k)} }\subset
I^j_{\Se_{(S',k+1)}}$. Since $\He$ is {\small\emph {FULL}} it then
follows that exists $H\in \He$ such that $H_{j+1}:\mathcal
{S}_{(S',k+1)} \rightarrow I^j_{\Se_{(S',k+1)}}$ is given by
$H_{j+1}(S)\equiv h(\hat{S})$.
\qed \end{proof}

\subsection{Proofs for u-Complete Markets Section}\label{sec:u-completeProofs}
Consider a discrete $(n+1)$-bounded market $\Me
= \Swe \times \He$. For any $\se \in \Swe$
define $\tilde{\se}$ by $(\tilde{S}_i,\tilde{W}_i)=(S_i,W_i)$ for $0\le i$ and
\[ M(\tilde{\se})=\left\{\begin{array}{ccc} n & \mbox{if} & M(\se)=n+1. \\ \\
M(\se) & \mbox{if} & M(\se)\le n.
\end{array}\right.\]
Set $\widetilde{\Swe} \equiv \{\tilde{\se}:\se \in \Swe\}$ and define $\mathcal{\widetilde{M}} \equiv \widetilde{\Swe} \times \He$. $\mathcal{\widetilde{M}}$ results an $n$-bounded discrete market.

If $Z$ is a derivative function defined on $\Swe$, then
$\tilde{Z}$ is defined on $\widetilde{\Swe}$ by
\[\tilde{Z}(\tilde{\se})=\left\{\begin{array}{ccc} \overline{U}_n(\se, Z, \Me) & \mbox{if} & M(\se)=n+1. \\ \\
Z(\se) & \mbox{if} & M(\se)\le n.
\end{array}\right.\]
for any $\se \in \widetilde{\se}$.

Moreover
\begin{lemma}\label{u-completeInductionLemma}
Let $\Me = \Swe \times \He$ an
$(n+1)$-bounded discrete market. Then
\begin{enumerate}
\item For any $0\le k \le n$, and $\se \in \Swe$,
\[\overline{U}_k(\tilde{\se}, \tilde{Z}, \mathcal{\widetilde{M}})=\overline{U}_k(\se, Z,
\Me).\]
\item If $\Me$ is \emph{u-complete} for $Z$, so
is $\mathcal{\tilde{M}}$ for $\tilde{Z}$.
\end{enumerate}
\end{lemma}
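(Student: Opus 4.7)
The plan is to establish part (a) by backward induction on $k$ from $k=n$ down to $k=0$, and then to deduce (b) by transporting the optimal portfolio via the truncation $H\mapsto\tilde{H}$. The definition of $\tilde{Z}$ is designed precisely so that the final Bellman step of the $(n+1)$-bounded market $\Me$ is absorbed into the payoff of the $n$-bounded market $\mathcal{\widetilde{M}}$, which is why the induction closes.

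For the base case $k=n$, a direct case split on $M(S)$ handles everything from the definitions. If $M(S)<n$, then $\tilde{M}(S)=M(S)<n$ and both $\Uup_n(S,Z,\Me)$ and $\Uup_n(S,\tilde{Z},\mathcal{\widetilde{M}})$ vanish. If $M(S)=n$, then $\tilde{M}(S)=n$, $\tilde{Z}(S)=Z(S)$, and both equal $Z(S)$. If $M(S)=n+1$, then $\tilde{M}(S)=n$ so $\Uup_n(S,\tilde{Z},\mathcal{\widetilde{M}})=\tilde{Z}(S)=\Uup_n(S,Z,\Me)$ by the very definition of $\tilde{Z}$. For the inductive step at $k<n$, the cases $k\geq M(S)$ force $M(S)\leq n$, hence $\tilde{M}(S)=M(S)$, so both sides reduce to the same constant. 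The substantive case is $k<M(S)$: for any $k\leq n$ the truncation leaves $H_k$ unchanged, so the range sets $I^k_S$ of (\ref{PortfoliosRange}) coincide for $\Me$ and $\mathcal{\widetilde{M}}$; moreover, for $k<n$ one checks that $k<M(S)$ is equivalent to $k<\tilde{M}(S)$. The representation (\ref{RangeDynamicBound}) together with the inductive hypothesis $\Uup_{k+1}(S',\tilde{Z},\mathcal{\widetilde{M}})=\Uup_{k+1}(S',Z,\Me)$ then yields the desired equality at step $k$.

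For part (b), fix $S\in\Se$ and $1\leq k<\tilde{M}(S)$; since $\tilde{M}(S)\leq n$ we have $k<n$, and by the equivalence noted above also $k<M(S)$. The u-completeness of $\Me$ produces $H^*\in\He$ attaining the supremum in (\ref{u-complete}) for $(Z,\Me)$. Its truncation $\tilde{H}^*\in\mathcal{\widetilde{H}}$ satisfies $\tilde{H}^*_k(S)=H^*_k(S)$, and applying part (a) to $\Uup_k$ and to each $\Uup_{k+1}(S',\cdot,\cdot)$ appearing in the supremum transports the witnessing equality from $(Z,\Me)$ to $(\tilde{Z},\mathcal{\widetilde{M}})$. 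The main obstacle to anticipate is the asymmetry at $k=n$, where $M(S)=n+1$ but $\tilde{M}(S)=n$ create a mismatch between the two dynamic programming recursions; the definition of $\tilde{Z}$ is exactly what dissolves this mismatch, making the base case and hence the induction go through.
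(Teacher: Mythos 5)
Your proposal is correct and follows essentially the same route as the paper: backward induction for part (a) with the case split on $M(S)$ versus $\tilde{M}(S)$ (using that $k<M(S)$ iff $k<\tilde{M}(S)$ for $k<n$ and that truncation does not alter $H_k$ at the relevant indices), and for part (b) transporting the optimal portfolio $H^*$ through its truncation $\tilde{H}^*$ and invoking (a). Your use of the range-set formulation (\ref{RangeDynamicBound}) in place of the paper's infimum over portfolios is only a cosmetic difference.
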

\begin{proof}
Reasoning by induction backwards, for
$k=n$, and $\se \in \Swe$,
\begin{eqnarray*}
\overline{U}_n(\tilde{\se}, \tilde{Z},\mathcal{\widetilde{M}})=0=\overline{U}_n(\se, Z, \Me) &\mbox{if}& n>M(\tilde{\se}) \\
\overline{U}_n(\tilde{\se}, \tilde{Z},\mathcal{\widetilde{M}})=\tilde{Z}(\tilde{\se})=\overline{U}_n(\se, Z,\Me) &\mbox{if}& M(\tilde{\se})=n.
\end{eqnarray*}
Since if $M(\se)=n+1$, $\tilde{Z}(\tilde{\se})=\overline{U}_n(\se, Z,\Me)$, and if $M(\se)=n$, $\tilde{Z}(\tilde{\se})=Z(\se)=\overline{U}_n(\se, Z, \Me)$. Assume
$(1)$ is valid for some $0<k\le n$. If $M(\tilde{\se})\le k-1$, then
$M(\tilde{\se})=M(\se)$ and, $\overline{U}_{k-1}(\tilde{\se}, \tilde{Z},
\mathcal{\widetilde{M}}) = \overline{U}_{k-1}(\se, Z, \Me)$,
since its common value is $0$ or $\tilde{Z}(\tilde{\se})=Z(\se)$. If $k-1 < M(\se)$, then $k-1 < M(\tilde{\se})$ ($k-1 \ge \tilde{M}$ implies $M(\se)>M(\tilde{\se})$, then
$M(\tilde{\se})=n >k-1$ !), and by inductive hypothesis and
definition of $ \mathcal{\widetilde{M}}$,
\begin{eqnarray*}
\overline{U}_{k-1}(\tilde{\se}, \tilde{Z}, \mathcal{\widetilde{M}})&=&
\inf_{H \in \He}~~\sup_{\tilde{\se'} \in
\widetilde{\Swe}_{(\tilde{\se},k-1)}} ~~[\overline{U}_{k}(\tilde{\se'}, \tilde{Z},
\mathcal{\widetilde{M}}) - H_{k-1}(\tilde{\se'})
\Delta_{k-1}\tilde{S'}] \\
&=& \inf_{H \in \He}~~\sup_{\se'
\in \Swe_{(\se,k-1)}} ~~[\overline{U}_{k}(\se', Z, \Me)
- H_{k-1}(\se') \Delta_{k-1}S'] = \overline{U}_{k-1}(\se, Z,
\Me).
\end{eqnarray*}
For $(2)$, let $\tilde{\se^*}\in \widetilde{\Swe}$, $1\le k\le n-1$ and a
derivative function $Z$. Since $\Me$ is \emph{u-complete}
there exists $H^*\in \He\;$, such that
\[ \sup_{\tilde{\se} \in \widetilde{\Swe}_{(\tilde{\se^*},k)}}[\Uup_{k+1}(\tilde{\se}, \tilde{Z}, \mathcal{\widetilde{M}})-H_k^*(\tilde{\se})\Delta_k\tilde{S}]=\sup_{\se \in\Swe_{(\se^*,k)}}[\Uup_{k+1}(\se, Z, \Me)-H^*_k(\se)\Delta_kS]=
\overline{U}_k(\se^*, Z, \Me)= \overline{U}_k(\tilde{\se^*},\tilde{Z}, \mathcal{\widetilde{M}}).\]
Last equalities hold for $(1)$.
\qed \end{proof}

\begin{proof} of {\bf Proposition \ref{u-completionOfMarkets}.}  Define $G:\mathbb{R}\rightarrow \mathbb{R}$, by
\[G(u)= \sup_{\se \in \Swe_{(\se^*,k)}} ~\{\overline{U}_{k+1}(\se, Z, \Me) - u\Delta_kS\},\]
assuming that $\overline{U}_{k+1}(\se, Z, \Me^{n})<\infty$.
Since for any $\se \in \Swe_{(\se,k)}$, the functions given by
$G_{\se}(u)=\overline{U}_{k+1}(\se, Z, \Me^{n}) - u ~
\Delta_k~S$ are affine, then its supremum $G$ is lower
semicontinuous, and convex.

If $I^k_{\se^*}$ is compact, by lower semicontinuity, there exists
$u^*\in I^k_{\se^*}$ verifying $G(u^*)= \inf_{u \in I^k_{\se^*}}
G(u).$

If $I^k_{\se}=\mathbb{R}$ and $\Swe$ satisfies the local
up-down property at $\se^*$ and $k$, $G$ is also coercive. Indeed,
there exist $S^+,S^-\in \Swe_{(\se^*,k)}$ such that
$S^+_{k+1}-S_k=r^+>0$ and $S^-_{k+1}-S_k=r^-<0$. Let $m\in
\mathbb{N}$ and
\[K = \max \left\{|\frac{m-\overline{U}_{k+1}(\se^+, Z, \Me^{n})}{r^+}|, |\frac{\overline{U}_{k+1}(\se^-, Z, \Me^{n})-m}{r^-}|\right\}.\]
If $u>K$, $u=|u|>\frac{\overline{U}_{k+1}(\se^+, Z, \Me^{n})-m}{r^-}$, then
$\;m < \overline{U}_{k+1}(\se^-, Z, \Me^{n}) - u ~
\Delta_kS^- \le G(u)$. On the other hand, if $u<-K$, since $-u=|u|
> \frac{m-\overline{U}_{k+1}(S^+, Z, \Me^{n})}{r^+}$, then
$G(u)\ge \overline{U}_{k+1}(S^+, Z, \Me^{n}) - u ~
\Delta_kS^+ > m$.

\noindent Thus, by Corollary 4.3 in \cite{agheksanterian}, from
\cite[Thm 7.3.1]{kurdila} $G$ attains a minimizer.

\noindent Finally, by coercivity, there exists $R>0$ such that,
$G(u)>|G(0)|\ge G(0)$ if $|u|>R$. Then
\[\inf\{G(u) : |u|\le R\} \le G(0) \le \inf\{G(u) : |u|>R\}.\]
\qed \end{proof}

\begin{proof} of {\bf Proposition \ref{usefulThForUcomplete}.}
First it is necessary to show that $H^*$ defined by
(\ref{u-completionPortfolio}) is non-anticipative. Let
$\se,\se'\in \Swe$ with $(S_i,W_i)=(S'_i,W'_i)$ for $0\le i \le k$ with $k \le \min \{N_{H*}(\se),N_{H*}(\se') \}=\min \{M(\se),M(\se')\}$, then
$\Swe_{(\se,k)}=\Swe_{(\se',k)}$ and
$I^k_{\se}=I^k_{\se'}$ since $N_H$ is a stopping time for all $H \in \He$, so $H^*_k(S')=u^*=H^*_k(S)$.

For the u-completion of $\Me^*$, we first prove by backward induction that $\Uup_i(\se, Z, \Me) = \Uup_i(\se, Z, \Me^*)$ for any $0\le i \le n$. It is clear that $\Uup_{i}(\se, Z, \Me) = \Uup_{i}(\se, Z, \Me^*)$, for all $i \ge M(\se)$. Let $\se \in \Swe$ such that $M(\se')=n$ for all $\se' \in \Swe_{(\se,n-1)}$. Then
\begin{eqnarray*}
\Uup_{n-1}(\se, Z, \Me^*) & = & \inf_{u \in I^{*\,n-1}_{\se}}~\{~\sup_{\se'\in\Swe_{(\se,n-1)}}[\Uup_{n}(\se',Z,\Me^*)-u~\Delta_{n-1}S']\}= \\
& = & \inf_{u \in I^{n-1}_{\se}}~\{~\sup_{\se'\in\Swe_{(\se,n-1)}}[\Uup_{n}(\se',Z,\Me)-u~\Delta_{n-1}S']=  \Uup_{n-1}(\se, Z, \Me).
\end{eqnarray*}
since $I^{*\,n-1}_{\se} =\{H_{n-1}(\se):H\in\He\}\cup\{H^*_{n-1}(\se)\}=I^{n-1}_{\se}$. Assume now $\se \in \Swe$ such that $M(\se')\ge i+1$ for all $\se' \in \Swe_{(\se,i)}$ and suppose $\Uup_{i+1}(\se', Z, \Me) = \Uup_{i+1}(\se', Z, \Me^*)$ for all $\se'$. Then
\begin{eqnarray*}
\Uup_{i}(\se, Z, \Me^*) & = & \inf_{u \in I^{*\,i}_{\se}}~\{~\sup_{\se'\in\Swe_{(\se,i)}}[\Uup_{i+1}(\se',Z,\Me^*)-u~\Delta_{i}S']\}= \\
& = & \inf_{u \in I^{i}_{\se}}~\{~\sup_{\se'\in\Swe_{(\se,i)}}[\Uup_{i+1}(\se',Z,\Me)-u~\Delta_{i}S']= \Uup_{i}(\se, Z, \Me).
\end{eqnarray*}
since $I^{*\,i}_{\se} =\{H_{i}(\se):H\in\He\}\cup\{H^*_{i}(\se)\}=I^{i}_{\se}$. Finally for (\ref{u-completionEquation}), for any $i\ge 0$,
\[\Uup_i(\se, Z, \Me^*) = \Uup_i(\se, Z, \Me) = \sup_{\se' \in \Swe_{(\se,i)}}~[\Uup_{i+1}(\se', Z,\Me)- H^*_i(\se) \Delta_iS'],\]
with $H^*\in\He^*$.
\qed \end{proof}

\section{Auxiliary results}\label{sec:auxiliary}
The next geometric Lemma is used in section \ref{sec:convexEnvelope}.
\begin{lemma}
Let $A, B, C, D, s_1, s_2, s \in \mathbb{R}$, with $s_1<s_2$ and $s_1 \le s \le s_2$. If $A>B$ and
$C>D$, then
\begin{equation}
B-\left( \frac{B-D}{s_2-s_1} \right)(s_2-s) \le A-\left( \frac{A-C}{s_2-s_1}
\right)(s_2-s) \label{eqn:57}
\end{equation}
\label{lem:1}
\end{lemma}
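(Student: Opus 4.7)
The plan is to reduce the inequality to a transparent sign check after a single affine substitution. First, I would move everything to one side, getting the equivalent statement
\[
(A-B) - \frac{(A-B) - (C-D)}{s_2-s_1}(s_2-s) \ge 0.
\]
Introducing $t \equiv (s_2-s)/(s_2-s_1)$, the hypothesis $s_1 \le s \le s_2$ together with $s_1 < s_2$ gives $t \in [0,1]$, and the inequality becomes
\[
(A-B)(1-t) + (C-D)\,t \ge 0.
\]

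From here the main observation is that both terms on the left are nonnegative under the given assumptions: the hypothesis $A > B$ yields $A-B > 0$, so the first term is $\ge 0$ since $1-t \ge 0$; the hypothesis $C > D$ yields $C-D > 0$, so the second term is $\ge 0$ since $t \ge 0$. Adding these gives the desired bound. This finishes the proof.

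I do not expect any obstacle here; the only care needed is keeping track of the signs after clearing denominators (which is legitimate since $s_2 - s_1 > 0$, so no inequalities flip), and noting that the conclusion actually uses neither the value of $s$ beyond $t \in [0,1]$ nor any relation between the pairs $(A,B)$ and $(C,D)$. In fact the lemma remains true under the weaker hypotheses $A \ge B$ and $C \ge D$, which is a small sanity check that the geometric reading (the chord through $(s_1,A),(s_2,C)$ lies above the chord through $(s_1,B),(s_2,D)$ at $s$, when $A \ge B$ and $C \ge D$) is correctly captured by the algebra.
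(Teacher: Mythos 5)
Your proof is correct and is essentially the paper's own argument: the paper also sets $\lambda=(s_2-s)/(s_2-s_1)\in[0,1]$ and rearranges the inequality to $\lambda\bigl(A-B-(C-D)\bigr)\le A-B$, which is exactly your statement $(A-B)(1-t)+(C-D)\,t\ge 0$. (Only your parenthetical geometric reading swaps which endpoint carries which value—the lines in question pass through $(s_1,D),(s_2,B)$ and $(s_1,C),(s_2,A)$—but this does not affect the proof.)
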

\begin{proof}
Let
\[ \lambda= \frac{s_2-s}{s_2-s_1} \]
Since $s_1 \le s \le s_2$, it follows that $0 \le \lambda \le 1$. Then
\[ \lambda(A-B-(C-D)) \le A-B, \]
re-arranging the last inequality we obtain \eqref{eqn:57}.
\qed \end{proof}

\section{Computational Grid}  \label{gridCharacterization}

Here we are going to introduce a grid of pairs of integer numbers $\Gamma$, which will be used to represent the trajectories of a finite discrete market. The purpose of the grid $\Gamma$ is to give a combinatorial way to build finite trajectory sets and implement an efficient algorithm in order to evaluate the dynamic bounds $\Uup_i(\se,Z,\Me)$ for a finite discrete market. Consequently, under appropriate conditions, we will obtain also the global bound $\Vup_0(s_0,Z,\Me)$.

Given the discretization parameters $\delta,\beta>0$ and $p,q,N_1, N_2 \in \mathbb{N}$, we call \emph{trajectory grid} to
\[\Gamma =\{(k,j): |k|\le N_1, 0\le j \le N_2, -pj\le k\le pj\}.\]
For any $i\ge 0$, each node $\se_i=(S_i,W_i,m)$ of a trajectory $\se \in \Swe(s_0, \delta, \beta, p,q, N_1,\Lambda)$ can be represented by a vertex $(k_i,j_i) \in \Gamma$, such that
\begin{equation}\label{gridCorrespondence}
S_i=s_0\,e^{k_i \delta}\quad \mbox{and}\quad W_i=j_i \beta^2.
\end{equation}
It has shown in Section \ref{discretization} that it is enough that $N_1\le p\,N_2$. Also observe that the constrains of Definition \ref{implicitDefinitionThroughConstraints} are translated to the grid information: if $\se \in \Swe(s_0, \delta, p, \Lambda, N_1, N_2)$ then
\begin{eqnarray} \label{eqn:gridconst}
\nonumber |\log S_{i+1} - \log S_i|\le p \delta &\Leftrightarrow & |k_{i+1}-k_i| \le p \\
\nonumber 0< W_{i+1}-W_i \le q \beta^2  &\Leftrightarrow & 0<j_{i+1}-j_i \le q \\
W_{M(\se)} \in Q_{\Lambda} &\Leftrightarrow& j_{M(\se)} \in \Lambda.
\end{eqnarray}

\begin{remark}
Note that if $\se^1,\se^2 \in \Swe(s_0, \delta, \beta,p, q,N_1,\Lambda)$ such that $S_i^1=S_i^2$ and $W_i^1=W_i^2$ for all $i \in \mathbb{N}$, and $M(\se^1) \neq M(\se^2)$, then $\se^1$ and $\se^2$ are associated with the same vertex in $\Gamma$, but $j_{M(\se^1)}=n_{\theta^1} \in \Lambda$ and $j_{M(\se^2)}=n_{\theta^2} \in \Lambda$ with $\theta^1 \neq \theta^2$.
\end{remark}

On the other hand, any sequence $\{(k_i,j_i)\}_{i \ge 0}$, with the constrains listed on the left side of \eqref{eqn:gridconst}, corresponds by the same association (\ref{gridCorrespondence}), to a trajectory ${\bf S}$ satisfying the constrains of Definition \ref{implicitDefinitionThroughConstraints}. Then, given a trajectory grid $\Gamma$ with parameters $p,q,N_1$ and $\Lambda$, we can build a finite trajectory set $\Swe_{\Gamma}(s_0, \delta, \beta,p, q,N_1,\Lambda)$ for appropiate $\delta$ and $\beta$, in such way that any possible path in $\Gamma$ with the constrains listed on $\eqref{eqn:gridconst}$ corresponds to a trajectory in $\Swe_{\Gamma}(s_0, \delta, \beta,p, q,N_1,\Lambda)$, and the inverse implication also holds.

\begin{remark}
Note that a grid $\Gamma$ does not contain necessarily all the path satisfying the constrains listed in \eqref{eqn:gridconst}. For example, the next grid satisfies the conditions for $p=q=N_1=1$ and $\Lambda=\{ 1\}$ and do not contain the path $(k_0,j_0),(k_1,j_1)$ such that $k_1-k_0=-1$.
\[ \xymatrix@R-1pc{                           & \bullet \\
                   \bullet \ar[ur] \ar[r] & \bullet                                   }\]

\end{remark}

\subsection{Computation of Prices in the Grid}\label{sec:computationGrid}

The trajectory grid $\Gamma$ presented above  will be used to compute the dynamic bounds $\Uup_i(\bf{S},Z,\Me)$ where $\Me=\Swe_{\Gamma}(s_0, \delta,\beta,p,q, N_1,\Lambda)\times\He$ is the finite discrete market associated to the grid $\Gamma$ with parameters $p,q,N_1$ and $\Lambda$. To this end, we will using Theorem \ref{ConvexHullThm}. For reasons of space, we will use the abbreviated notation $\Swe_{\Gamma}=\Swe_{\Gamma}(s_0, \delta,\beta,p,q, N_1,\Lambda)$.

Let $Z$ an European option defined on $\Swe_{\Gamma}$. The option is assumed independent of the trajectory history,  namely  $Z({\bf S})=Z^f(S_{M({\bf S})})$ for a real variable function $Z^f$. This condition on $Z$ allows to compute the dynamic bounds on the vertices of $\Gamma$ as follows. For simplicity we will use the notation $s_k = s_0\,e^{k\,\delta}$. Also assume that the set of portfolios $\He$ is composed for sequences $H=\{H_i\}_{i\ge 0}$ including any function from $\Swe_{\Gamma}$ to $\mathbb{R}$, non anticipative with respect to $i$, thus $\He$ is {\small FULL}.

Now we describe an algorithm that works for the case $\Lambda=\{ N_2 \}$. The dynamic bounds $\Uup_i({\bf S},Z,\Me)$ for $0\le i\le N_2$, can be associated to the vertices of $\Gamma$. Indeed since $W_{M(\bf S)} = N_2\beta^2$ the node $\se_{M(\se)}=(S_{M(\bf S)},W_{M(\bf S)},M(\se))$ corresponds by (\ref{gridCorrespondence}) to some $(k_M,N_2)\in\Gamma$ ($M\equiv M(\bf S)$), then $\Uup_M({\bf S},Z,\Me) = Z^f(s_{k_M})$. Moreover whenever the trajectory ${\bf S}$ has a node $(s_{k_M},N_2\beta^2)$, will have
\[\Uup_M({\bf S},Z,\Me)=Z^f(s_{k_M}).\]
Now, the grid node $(k_{i^0},j_{i^0})$ correspond to a trajectory $\se \in \Swe_{\Gamma}$ at stage $i^0$. We know from Definition \ref{dynamicBounds} and Theorem \ref{ConvexHullThm} that $\Uup_{i^0}({\bf S},Z,\Me)$ only depends on $\Uup_{i^0+1}({\bf S}',Z,\Me)$, $S'_{i^0+1}$ and $S_{i^0}$, where ${\bf S}' \in \Swe_{({\bf S},i^0)}$. Then, by \eqref{eqn:gridconst}, those quantities are associated to the vertices $(k,j)\in\Gamma$ with
\begin{equation}\label{reachableVertex}
-p\le k-k_{i^0}\le p,\quad \mbox{and}\quad 0<j-j_{i^0}\le q.
\end{equation}
Vertices $(k,j)\in\Gamma$ verifying (\ref{reachableVertex}) are called \emph{reachable} from $(k_{i^0},j_{i^0})$.

$\Uup_{i^0}({\bf S},Z,\Me)$ can be associated with the vertex $(k_{i^0},j_{i^0})$, via a function $\Uup$ with domain $\Gamma$ in such way that $\Uup(k_{i^0},j_{i^0})=\Uup_{i^0}({\bf S},Z,\Me)$. Thus, for each vertex $(k,j)\in \Gamma$ we define $\Uup$ by the following procedure. Since any vertex $(k,N_2)\in \Gamma$ corresponds to a trajectory $\se \in \Swe_{\Gamma}$, with $\se_{M(\se)}=(s_k,N_2\beta^2,m)$, define
\[\Uup(k,N_2)= Z^f(s_k),\quad\mbox{for any}\quad k~:~|k|\le N_1.\]
Now assume, for fixed $j<N_2$, $\Uup(k^*,j^*)$ was defined for any $j^*~:~j <j^*\le N_2$, and
any $k^*~:~|k^*|\le p\,j^*$. For fixed $(k,j)\in\Gamma$ and any pairs $(k^+,j^+)$, $(k^-,j^-)$ verifying
\begin{eqnarray}
\nonumber 0 <k^+-k \le p &\mbox{and}& 0<j^+-j\le q \\
-p\le k^--k\le 0 &\mbox{and}& 0<j^--j\le q,
\label{+-reachable}
\end{eqnarray}
set
\[\Delta^{\pm} \equiv \frac{\Uup(k^+,j^+)-\Uup(k^-,j^-)}{s_{k^+}-s_{k^-}}. \]
Being $\se \in \Swe_{\Gamma}$ a trajectory such that $\se_i$ corresponds by (\ref{gridCorrespondence}) to $(k,j)$, it is important to notice that the pairs $(k^+,j^+)$ and $(k^-,j^-)$ verifying (\ref{+-reachable}) are reachable from $(k,j)$, if ${\bf S}^+,{\bf S}^-\in \Swe_{({\bf S},k)}$ verify that $\se^+_{i+1}$ and $\se^-_{i+1}$ corresponds respectively to $(k^+,j^+)$ and $(k^-,j^-)$, then ${\bf S}^+ \in \Sup_{({\bf S},i)}$ and ${\bf S}^-\in \Sdo_{({\bf S},i)}$. Consequently Theorem \ref{ConvexHullThm} is applicable and $\Uup(k,j)$ is defined according to it, by
\begin{equation} \label{Udefinition}
\Uup(k,j)\equiv C(k,j)=\sup~\{\Uup(k^+,j^+)-\Delta^{\pm}(s_{k^+}-s_k)\},\quad \emph{for}\quad 0 \le j < N_2\;\; \emph{and}\;\; |k|\le p\,j,
\end{equation}
\[\emph{where the supremum is taken over the pairs}~~(k^+,j^+), (k^-,j^-)~~ \emph{verifying (\ref{+-reachable})}.\]

Therefore, the above recursive procedure allows to obtain $~\Uup(0,0)=\Uup_0(s_0,Z,\Me)=\Vup(s_0,Z,\Me)$, since the hypothesis of Theorem \ref{fullDynamicProgramming} are satisfied.

We now extend the procedure to an strictly increasing $l$-tuple $\Lambda=\{n_1,\dots,n_l\}$ with $n_l=N_2$. Now $W_{M(\bf S)} = n_{\theta}\delta^2$ for some ${\theta}=1, \dots,l$, then the node $\se_{M(\se)}$ of some trajectory $\se \in \Swe_{\Gamma}$ corresponds by (\ref{gridCorrespondence}) to some $(k_M,n_{\theta})\in\Gamma$, and $\Uup_{M(\se)}({\bf S},Z,\Me) = Z^f(s_{k_{M(\se)}})$. But observe that if $(k_{M(\se)},n_{\theta})$ also corresponds to a node $\widehat{\se}_i$ of a trajectory $\widehat{{\bf S}}$ with $i \neq M(\widehat{{\bf S}})$, by Definition \ref{dynamicBounds} and Theorem \ref{ConvexHullThm}
\[ \Uup_i({\bf \widehat{S}},Z,\Me)=\sup_{\substack {\seup \in
\Sup_{({\bf \widehat{S}},i)} \\ \sedo\in \Sdo_{({\bf \widehat{S}},i)}}}
[\Uup_{i+1}(\seup,Z,\Me)-u_{(\seup,\sedo)}(S_{i+1}^{\textrm{up}}-\widehat{S}_i)] \]

We start the analysis from column $j=n_l=N_2$. Any vertex $(k,N_2)$ corresponds to the node $\se_{M(\se)}$ of a trajectory in $\Swe_{\Gamma}$, then define
\[ \Uup(k,N_2)=Z^f(s_k). \]
For a vertex $(k,j)\in\Gamma$ with $n_{m-1}< j < n_m$, and $k \in [-p\,j,p\,j]$, $\Uup(k,j)$ is given by (\ref{Udefinition}). The vertices on the column $n_{l-1}$ in $\Gamma$, correspond by (\ref{gridCorrespondence}) to trajectories $\bf{S}$ that could have $M({\bf S})=n_{l-1}$ at that node, it is $W_M = n_{l-1}\beta^2$, or continue to get $W_M = n_{l}\beta^2$. Thus for $k^* \in [-pn_{l-1},pn_{l-1}]$, $\Uup(k^*,n_{l-1})$ should take the value $Z^f(s_{k^*})$ in the first case, while in the second case 
its value at that vertex, should be given by (\ref{Udefinition}).
Both situations must be contemplated to compute $\Uup(k,j)$ for $j<n_{l-1}$, by mean of (\ref{Udefinition}), when any of the vertices $(k^*,n_{l-1})$ is reachable from $(k,j)$. Then, observing that the maximum of these two values is the one which contributes to (\ref{CHdynamicBound}), in the referred computation, and by Theorem \ref{ConvexHullThm}, we have
\[ \Uup(k^*,n_{l-1}) = \max\{Z^f(s_{k^*}), C(k^*,n_{l-1})\}.\]
Following the same considerations, $\Uup(k,j)\equiv C(k,j)$ is defined by \eqref{Udefinition} for all $n_{\theta} < j < n_{\theta+1}$ 
with $1\le \theta<l-1$ and $k \in [-p\,j,p\,j]$. For $j=n_{\theta}$ with $1\le \theta<m-1$ and $k \in [-p\,n_{\theta},p\,n_{\theta}]$
\[\Uup(k,n_{\theta})=\max \{Z(s_{k}), C(k,n_{\theta}) \}\]
where $C(k,n_{\theta})$ is given by \eqref{Udefinition}.

To summarize,  $\Uup(k,j)$ for $0 \le j \le N_2$ and $k \in [-pj,pj]$ is given by
\begin{equation} \nonumber
\Uup(k,j)= \left\lbrace \begin{array}{ll}
                         Z^f(s_{k}) & \textrm{if $j=n_l$}\\ \\
                         \max \{Z^f(s_{k}), C(k,j) \} & \textrm{if $j=n_1,n_2\dots,n_{l-1}$}\\ \\
                         C(k,j) & \textrm{in the other case,}
                         \end{array} \right.
\label{eqn:81}
\end{equation}
where $C(k,j)$ is given by (\ref{Udefinition}). With this recursive procedure we can calculate the value of $\Uup(0,0)=\Uup(s_0,Z,\Me)=\Vup(s_0,Z,\Me)$.

Recalling that $\Udo_i({\bf S},Z,\Me)=-\Uup_i({\bf S},-Z,\Me)$, the lower dynamic bounds $\Udo_i(.)$ are computed by a similar procedure.

\end{document}